\def\red{\color{red}}
\def\black{\color{black}}
\def\blue{\color{blue}}
\newcommand{\car}{CAR(\mathcal{H})}
\newcommand{\ccr}{CCR(\mathcal{H})}
\newcommand{\fermi}{\mathcal{F_-(H)}}
\newcommand{\bose}{\mathcal{F_+(H)}}
\newcommand{\id}{\mathbbm{1}}
\newcommand{\isEquivTo}[1]{\underset{#1}{\sim}}
\DeclareMathOperator{\tr}{Tr}
\DeclareMathOperator{\Res}{Res}
\DeclareMathOperator{\re}{\mathfrak{R}}
\theoremstyle{plain}
\newtheorem{theorem}{Theorem}[section]
\newtheorem{lemma}[theorem]{Lemma}
\newtheorem{prop}[theorem]{Proposition}
\newtheorem{definition}[theorem]{Definition}
\def\A{\mathcal{A}}
\def\H{\mathcal{H}}
\newtheorem{remark}{Remark}[section]
\def\section{\@startsection{section}{1}{\z@}{-3.25ex plus -1ex minus
		-.2ex}{1.5ex plus .2ex}{\normalfont\bfseries}}
\def\subsection{\@startsection{subsection}{1}{\z@}{-3.25ex plus -1ex
		minus -.2ex}{1.5ex plus .2ex}{\normalfont\bfseries\itshape}}
\title{Second Quantization and the Spectral Action}
\author{Rui Dong}
\address{Institute for Mathematics, Astrophysics and Particle Physics, Radboud
University Nijmegen, Heyendaalseweg 135, 6525 AJ Nijmegen, The Netherlands.}
\email{rui.dong@math.ru.nl, waltervs@math.ru.nl}
\author{Masoud Khalkhali}
\address{Department of Mathematics,  University of Western Ontario \\ London, Ontario, Canada}
\email{masoud@uwo.ca}
\author{Walter D. van Suijlekom}
\date{\today}
\begin{document}

\begin{abstract}
  We consider both the bosonic and fermionic second quantization of spectral triples  in the presence of a chemical potential. We show that the von Neumann entropy and the average energy of the Gibbs state defined by the bosonic and fermionic grand partition function can be expressed as spectral actions. It turns out that all spectral action coefficients can be given in terms of the modified Bessel functions. In the fermionic case, we show that the spectral coefficients for  the von Neumann entropy, in the limit when the chemical potential $\mu$ approaches $0,$ can be expressed in terms of the Riemann zeta function. This  recovers a result of Chamseddine-Connes-van Suijlekom. 
\end{abstract}

\maketitle

\tableofcontents

\section{Introduction}
The spectral action principle of Connes and Chamseddine was originally developed mainly to give a conceptual and geometric formulation of the standard model of particle physics \cite{Chamseddine:1996zu}. The spectral action can be defined for spectral triples $(A, \mathcal{H}, D)$, even when the algebra $A$ is not commutative. An interesting feature here is the additivity of  the spectral action with respect to the direct sum of spectral triples. Conversely, one can wonder whether any such additive functional on spectral triples is obtained via a spectral action.

In a recent paper \cite{CCS18},  Chamseddine, Connes and van Suijlekom have shown that the von Neumann entropy of the Gibbs state naturally defined by a fermionic second quantization of a spectral triple is in fact a spectral action and they find a universal function that defines the spectral action. 

In this paper we extend this result by incorporating chemical potentials, and by considering both fermionic and bosonic second quantization. In fact we show that the von Neumann entropy and 
the average energy 
of the thermal equilibrium state defined by the bosonic or fermionic grand partition function, with a given chemical potential,  can be expressed as spectral actions. We show that all spectral action coefficients can be expressed in terms of the modified Bessel functions of the second kind. 
In the fermionic case,
we  show that the spectral  action coefficients for  the von Neumann entropy, in the limit  when the chemical potential $\mu$ approaches $0,$ can be expressed in terms of  the Riemann zeta  function. 
This  recovers the recent result of \cite{CCS18}.

It should be noted that without the use of chemical potentials, the natural spectral function for the von Neumann entropy in the bosonic case  is singular at $t=0$, and in fact the corresponding functional is not spectral.

We start in section \ref{sect_2nd-quant} by recalling some of the main concepts and results from the theory of second quantization. Our main results are presented in Sections \ref{sect_fermi} (fermionic) and \ref{sect_bose} (bosonic). 

In searching for a suitable expression of spectral action coefficients in all the cases studied in this paper, we were naturally led to the class of modified Bessel functions of the second kind. Appendix \ref{sect_bessel} recalls some basic properties of these functions. In Appendix \ref{sect_heat-exp} we collect some technical results on the heat expansions used throughout this paper. Finally in Appendix 
 \ref{Appendix_sa} we recall some of the basic definitions related to the spectral action principle. 


\tableofcontents

\section{Basics of second quantization}
\label{sect_2nd-quant}
In this section, mainly to fix our notation and terminology, we shall recall some basic definitions and facts from the theory of second quantization in  quantum statistical mechanics.  We shall largely follow \cite[Section 5.2]{BR97}.
\subsection{Fock space and second quantization}

Let $\mathcal{H}$ be a (complex) Hilbert space. 
For any $n>0$ we denote by $\mathcal{H}^{ n}=\mathcal{H}\otimes \mathcal{H}\otimes \dots \otimes\mathcal{H}$ the $n$-fold tensor product of $\mathcal{H}$ with itself, and write $\mathcal{H}^0=\mathbb{C}.$
The {\em Fock space} $\mathcal{F(H)}$ is  the completion of the pre-Hilbert space
$\bigoplus\limits_{n\geq 0}\mathcal{H}^n.$

Define the projection operators $P_\pm$ on $\mathcal{H}^n$ by
\begin{equation*}
\begin{aligned}
&P_+\left(f_1\otimes f_2\otimes \dots \otimes f_n\right)=\left(n!\right)^{-1}\sum_{\pi\in S_n}f_{\pi(1)}\otimes f_{\pi(2)}\otimes \dots \otimes f_{\pi(n)},\\
&P_-\left(f_1\otimes f_2\otimes \dots \otimes f_n\right)=\left(n!\right)^{-1}\sum_{\pi\in S_n}(-1)^{|\pi|}f_{\pi(1)}\otimes f_{\pi(2)}\otimes \dots \otimes f_{\pi(n)},
\end{aligned}
\end{equation*}
for all $f_1,...,f_n\in \mathcal{H}$.
Since $P_\pm$ are bounded operators with norm $1$ on $\mathcal{H}^n$,
they  can be extended by continuity to bounded projection operators  on the Fock space $\mathcal{F(H)}$.

The {\em bosonic  Fock space} $\bose$ and the {\em fermionic Fock space} $\fermi$ are then defined by
\begin{equation*}
\mathcal{F}_\pm(\mathcal{H})=P_\pm(\mathcal{F(H)}).
\end{equation*}
The corresponding $n$-particle subspaces $\mathcal{H}^n_\pm$ are defined by $\mathcal{H}^n_\pm=P_\pm\mathcal{H}^n$.

\medskip

The structure of the Fock space allows us to amplify a linear operator  on $\mathcal{H}$ to the whole bosonic/fermionic Fock spaces $\mathcal{F}_\pm(\mathcal{H})$. 
In fact, this is a functorial procedure and is commonly referred to as {\em second quantization}.

Let $H$ be a self-adjoint operator on $\mathcal{H}$ with domain $D(\mathcal{H})$.  We define $H_n$ on $\mathcal{H}_\pm^n$ by 
\begin{equation*}
H_n\left(P_\pm\left(f_1\otimes\dots\otimes f_n\right)\right)=
\begin{aligned}
&P_\pm\left(\sum_{i=1}^n f_1\otimes f_2\otimes \dots \otimes Hf_i\otimes \dots \otimes f_n\right) 
\end{aligned}
\end{equation*}
for all $f_i\in D(H)$ and $n>0$, while we set $H_0=0$. The direct sum of the $H_n$ is essentially self-adjoint, and the self-adjoint closure of this direct sum operator is called  the {\em second quantization of the self-adjoint operator} $H$  and it is denoted by $d\Gamma(H)$.  

In particular, let $H=\id$ be the identity operator. Then we have 
\begin{equation*}
d\Gamma(\id)=N,
\end{equation*}
where $N$ is the  number operator on $\mathcal{F}_\pm(\mathcal{H})$, whose domain is defined by
\begin{equation*}
D(N)=\left\{\psi=\{\psi^{(n)}\}_{n\geq 0 }; \,  \sum_{n\geq 0}n^2||\psi^{(n)}||^2< \infty\right\},
\end{equation*}
and for any $\psi\in D(N)$
\begin{equation*}
N\psi = \{n\psi^{(n)}\}_{n\geq 0}.
\end{equation*}

For a unitary operator $U$ on $\mathcal{H}$, first we define $U_n$ on $\mathcal{H}^n_\pm$ by
\begin{equation*}
U_n\left(P_\pm\left(f_1\otimes f_2\otimes \dots \otimes f_n\right)\right)=
\begin{aligned}
&P_\pm\left(Uf_1\otimes Uf_2\otimes \dots\otimes Uf_n\right) 
\end{aligned}
\end{equation*}
and $U_0=\id$, and then extend it to the whole Fock space. We denote this extension  by $\Gamma (U)$, 
called  the {\em second quantization of the unitary operator} $U$,    
\begin{equation*}
\Gamma(U)=\bigoplus\limits_{n\geq 0}U_n
.
\end{equation*}

It is worth noticing that here $\Gamma(U)$ is also a unitary operator on $\mathcal{F}_\pm(\mathcal{H})$.
Also, if $U_t=e^{itH}$ is a strongly continuous one-parameter unitary group acting on $\mathcal{H}$, 
then
\begin{equation*}
\Gamma(U_t)=e^{itd\Gamma(H)}
\end{equation*}
on the Fock spaces $\mathcal{F}_\pm(\mathcal{H})$.

\subsubsection{Hamiltonians and time evolution}
If $H$ is a self-adjoint Hamiltonian operator on the one-particle Hilbert space $\mathcal{H}$,
then the dynamics of the ideal Bose gas and the ideal Fermi gas are described by the Schr\"{o}dinger equation
\begin{equation*}
i\hbar \frac{d\psi_t}{dt}=d\Gamma(H)\psi_t
\end{equation*}
on $\bose$ and $\fermi$,
separately.
We choose the units so that $\hbar=1$.
The solution of the Schr\"{o}dinger equation gives us the evolution
\begin{equation*}
\psi\in \mathcal{F}_\pm(\mathcal{H})\mapsto \psi_t=e^{-itd\Gamma(H)}\psi=\Gamma(e^{-itH})\psi,
\end{equation*}
and the evolution of a bounded observable $A\in \mathcal{B}(\mathcal{F}_\pm(\mathcal{H}))$ is given by conjugation as
\begin{equation*}
A\in \mathcal{B}(\mathcal{F}_\pm(\mathcal{H}))\mapsto \tau_t(A)=\Gamma(e^{itH})A\Gamma(e^{-itH}).
\end{equation*}

Next we shall introduce the Gibbs grand canonical equilibrium state $\omega$ of a  particle system at inverse temperature $\beta\in \mathbb{R}$,  and with  chemical potential $\mu\in \mathbb{R}$. 
Let
$$K_\mu=d\Gamma(H-\mu \id)=d\Gamma(H)-\mu N$$
be the so-called {\em modified Hamiltonian}. Then the Gibbs state $\phi$ is defined by 
\begin{equation*}
\phi(A)=\frac{\textrm{Tr}\left(e^{-\beta K_\mu}A\right)}{\textrm{Tr}\left(e^{-\beta K_\mu}\right)}, \qquad \qquad A\in  \mathcal{B}(\mathcal{F}_\pm(\mathcal{H})).
\end{equation*}
Here we assumed that the operator $e^{-\beta K_\mu}$ is a trace-class operator. The {\em density operator} is defined as
$$
\rho = \frac{e^{-\beta d\Gamma(H -\mu \id)}}{\textrm{\emph{Tr}}(e^{-\beta d\Gamma(H -\mu \id})}
$$
so that we have $\phi(A) = \tr \rho A$. 


For later use, we record the following result.
\begin{lemma}\label{additive_lemma}
  We have   $e^{-d\Gamma\left(H_1\oplus H_2\right)}=e^{-d\Gamma\left(H_1\right)}\otimes e^{-d\Gamma\left(H_2\right)}$, as operators on $\mathcal{F}_\pm(\mathcal{H}_1 \oplus \mathcal{H}_2) \cong \mathcal{F}_\pm(\mathcal{H}_1 ) \otimes \mathcal{F}_\pm(\mathcal{H}_2)$. Moreover,
when $e^{-d\Gamma(H_i)}$ are trace-class operators for $i=1,2$, 
we have for the density operators that $\rho_{12} = \rho_1\otimes\rho_2.$
\end{lemma}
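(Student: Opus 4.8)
The plan is to reduce everything to the \emph{exponential law} for Fock spaces together with the additivity of second quantization, and then invoke multiplicativity of the trace under tensor products. First I would recall the canonical unitary isomorphism $\mathcal{F}_\pm(\mathcal{H}_1 \oplus \mathcal{H}_2) \cong \mathcal{F}_\pm(\mathcal{H}_1) \otimes \mathcal{F}_\pm(\mathcal{H}_2)$ already referenced in the statement, which on the $n$-particle level comes from the decomposition $\mathcal{H}^n_\pm(\mathcal{H}_1 \oplus \mathcal{H}_2) \cong \bigoplus_{k+l=n} \mathcal{H}^k_\pm(\mathcal{H}_1) \otimes \mathcal{H}^l_\pm(\mathcal{H}_2)$ obtained by (anti)symmetrizing the partition of the tensor slots into those lying in $\mathcal{H}_1$ and those in $\mathcal{H}_2$.

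Second, under this identification I would establish the key additivity relation $d\Gamma(H_1 \oplus H_2) = d\Gamma(H_1) \otimes \id + \id \otimes d\Gamma(H_2)$. On the algebraic domain of finite-particle vectors built from $D(H_1) \oplus D(H_2)$ this is a direct computation from the definition of $H_n$: applying $H_1 \oplus H_2$ to each tensor slot and then collecting the slots in $\mathcal{H}_1$ versus $\mathcal{H}_2$ reproduces exactly the two summands on the right. Since this algebraic domain is a common core, the identity passes to the self-adjoint closures; the cleanest way to see this is to differentiate the identity $\Gamma(e^{-it(H_1\oplus H_2)}) = \Gamma(e^{-itH_1}) \otimes \Gamma(e^{-itH_2}) = e^{-it\,d\Gamma(H_1)} \otimes e^{-it\,d\Gamma(H_2)}$ at $t = 0$, where the first equality is the standard functoriality of $\Gamma$ on second-quantized unitary groups under the exponential law.

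Third, the two summands $d\Gamma(H_1) \otimes \id$ and $\id \otimes d\Gamma(H_2)$ act on different tensor legs and therefore commute (in the strong resolvent sense), so the exponential of their sum factorizes,
\begin{equation*}
e^{-d\Gamma(H_1 \oplus H_2)} = e^{-d\Gamma(H_1)\otimes \id}\, e^{-\id \otimes d\Gamma(H_2)} = e^{-d\Gamma(H_1)} \otimes e^{-d\Gamma(H_2)},
\end{equation*}
which is the first assertion. Finally, assuming each $e^{-d\Gamma(H_i)}$ is trace-class, multiplicativity of the trace under tensor products gives $\tr\, e^{-d\Gamma(H_1\oplus H_2)} = \tr\, e^{-d\Gamma(H_1)} \cdot \tr\, e^{-d\Gamma(H_2)}$, so dividing the factorized exponential by this product yields $\rho_{12} = \rho_1 \otimes \rho_2$ at once.

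I expect the only genuine obstacle to be the functional-analytic bookkeeping in the second and third steps: since the $H_i$ may be unbounded, one must verify that the finite-particle vectors form a common core on which $d\Gamma(H_1)\otimes\id$ and $\id \otimes d\Gamma(H_2)$ are essentially self-adjoint and commute strongly, so that the factorization of the exponential is legitimate rather than merely formal. Everything else—the isomorphism, the trace identity, and the normalization—is routine and follows from standard Fock-space functoriality as in \cite[Section 5.2]{BR97}.
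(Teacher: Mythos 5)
Your proposal is correct, and in fact the paper offers no proof at all of this lemma --- it is merely ``recorded'' as a standard consequence of the second-quantization formalism of \cite[Section 5.2]{BR97}. Your argument (exponential law for $\mathcal{F}_\pm$, additivity $d\Gamma(H_1\oplus H_2)=\overline{d\Gamma(H_1)\otimes\id+\id\otimes d\Gamma(H_2)}$ obtained via the unitary groups $\Gamma(e^{-it(H_1\oplus H_2)})=e^{-it\,d\Gamma(H_1)}\otimes e^{-it\,d\Gamma(H_2)}$, factorization of the exponential through the joint spectral calculus of the two commuting tensor legs, and multiplicativity of the trace) is precisely the standard reasoning the paper implicitly invokes, including the functional-analytic care about cores that you correctly flag as the only nontrivial point.
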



\subsection{CAR and CCR algebras}\label{subsection_CAR_CCR}
Both of the CAR and  CCR algebras are constructed with the help of creation and annihilation  operators, whose definition we now recall. 

Let $\mathcal{H}$ be a complex Hilbert space. 
For each $f\in \mathcal{H}$,
we define the {\em annihilation operator} $a(f)$,  and the {\em creation operator} $a^*(f)$ acting on the Fock space $\mathcal{F(H)}$ by setting $a(f)\psi^{(0)}=0$, $a^*(f)\psi^{(0)}=f \psi^{(0)}$ and $\psi^{(0)} \in \mathcal H_\pm^0$ and on $\mathcal H_\pm^n$ by setting
\begin{equation*}
\begin{aligned}
&a(f)(f_1\otimes f_2\otimes \cdots \otimes f_n)=\sqrt{n}\left(f,f_1\right)f_2\otimes f_3\otimes \cdots \otimes f_n,\\
&a^*(f)(f_1\otimes f_2\otimes \cdots \otimes f_n)=\sqrt{n+1}f\otimes f_2\otimes f_3\otimes \cdots \otimes f_n.
\end{aligned}
\end{equation*}
for all  $ f \in \mathcal{H}$.
One can see that the maps $f\mapsto a(f)$ are anti-linear while the maps $f\mapsto a^*(f)$ are linear. Also, one can show that $a(f)$ and $a^*(f)$ have well-defined extensions to $D(N^{1/2})$, the domain of the operator $N^{1/2}$.
Moreover, we have that $a^*(f)$ is the formal adjoint of $a(f)$;  namely, for any $\phi, \psi\in D(N^{1/2})$,
one has
\begin{equation*}
\left(a^*(f)\phi, \psi\right)=\left(\phi, a(f)\psi\right).
\end{equation*}

We can then define the annihilation operators $a_\pm(f)$ and the creation operators $a^*_\pm(f)$ on the fermionic/bosonic Fock spaces  $\mathcal{F}_\pm(\mathcal{H})$ by
\begin{equation*}
a_\pm(f)=P_\pm a(f)P_\pm, \qquad a^*_\pm (f)=P_\pm a^* (f)P_\pm.
\end{equation*}
Moreover, since the annihilation operator $a(f)$ keeps the subspaces $\mathcal{F}_\pm(\mathcal{H})$ invariant, we have
\begin{equation*}
a_\pm(f)=a(f)P_\pm, \qquad a^*_\pm(f)=P_\pm a^*(f).
\end{equation*}
One computes straightforwardly that on the fermionic  Fock space $\fermi$,
\begin{equation*}\label{car}
\{a_-(f),a_-(g)\}=\{a^*_-(f),a^*_-(g)\}=0,
\qquad \{a_-(f),a^*_-(g)\}=(f,g)\id,
\end{equation*}
and on the bosonic Fock space $\bose$,
\begin{equation*}\label{ccr}
[a_+(f),a_+(g)]=[a^*_+(f),a^*_+(g)]=0,
\qquad [a_+(f),a^*_+(g)]=(f,g)\id.
\end{equation*}
The first relations are called the canonical anti-commutation relations (CAR), and the second relations are called the canonical commutation relations (CCR).

Roughly speaking, the CAR algebra is the algebra generated by the annihilation operators $a_-(f)$ and creation operators $a_-^*(f)$. In fact, we have the following proposition \cite[Prop. 5.2.2]{BR97}:
\begin{prop}
Let $\mathcal{H}$ be a complex Hilbert space, $\fermi$ be the fermionic Fock space, and $a_-(f)$ and $a^*_-(g)$ the corresponding annihilation and creation operators on $\fermi$.  

(1)  For all $f\in \mathcal{H}$, we have 
\begin{equation*}
||a_-(f)||=||f||=||a^*_-(f)||.
\end{equation*}
Therefore both $a_-(f)$ and $a_-^*(g)$ have bounded extensions on $\fermi$.

(2)  Taking $\Omega=(1,0,0,\cdots)$, 
called the vacuum vector, and $\{f_\alpha\}$ an orthonormal basis of $\mathcal{H}$, then
\begin{equation*}
\psi(f_{\alpha_1},...,f_{\alpha_n}):=a_-^*\left(f_{\alpha_1}\right)\cdots a^*_-\left(f_{\alpha_n}\right)\Omega
\end{equation*}
is an orthonormal basis of $\fermi$,
when $\{f_{\alpha_1},...,f_{\alpha_n}\}$ runs over all the finite subsets of the orthonormal basis $\{f_{\alpha}\}.$

(3) The set of bounded operators 
$\{a_-(f), a^*_-(g);f,g\in \mathcal{H}\}$ is irreducible on $\fermi$.
\end{prop}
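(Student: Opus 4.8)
The plan is to treat the three parts in order, using only the canonical anticommutation relations (CAR) together with the explicit action of the creation and annihilation operators on the vacuum $\Omega$. For part (1), assume $f\neq 0$ (the case $f=0$ being trivial) and abbreviate $A=a_-(f)$, so $A^*=a_-^*(f)$. The CAR relations give $A^2=0$ and $AA^*+A^*A=\|f\|^2\id$. Working on the dense subspace of finite-particle vectors I would compute
\[
(A^*A)^2=A^*(AA^*)A=A^*\big(\|f\|^2\id-A^*A\big)A=\|f\|^2A^*A-(A^*)^2A^2=\|f\|^2A^*A,
\]
the last term vanishing because $A^2=0$. Hence $\|f\|^{-2}A^*A$ is a self-adjoint idempotent, i.e.\ an orthogonal projection, so for any $\psi$ in the domain $\|A\psi\|^2=\langle\psi,A^*A\psi\rangle\le\|f\|^2\|\psi\|^2$, giving $\|a_-(f)\|\le\|f\|$ and a bounded extension. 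For the reverse inequality I would evaluate on the vacuum: $a_-^*(f)\Omega=f$ lies in the one-particle space with norm $\|f\|$, so $\|a_-^*(f)\|\ge\|f\|$. Since $a_-(f)=(a_-^*(f))^*$ the two operators share the same norm, and all three quantities coincide.

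For part (2), orthonormality follows by a normal-ordering computation. The relation $\{a_-^*(f),a_-^*(g)\}=0$ shows $\psi(f_{\alpha_1},\dots,f_{\alpha_n})$ is totally antisymmetric in its arguments and vanishes when two indices coincide, so it suffices to consider strictly ordered tuples. To evaluate $(\psi(f_{\alpha_1},\dots,f_{\alpha_m}),\psi(f_{\beta_1},\dots,f_{\beta_n}))$ I would repeatedly move annihilation operators to the right using $\{a_-(f),a_-^*(g)\}=(f,g)\id$ and $a_-(f)\Omega=0$; orthonormality of $\{f_\alpha\}$ then forces the answer to be $\delta_{mn}$ times a product of Kronecker deltas matching the two index sets. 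For completeness I would identify, by induction on $n$ from the defining formula for $a^*$,
\[
a_-^*(f_{\alpha_1})\cdots a_-^*(f_{\alpha_n})\Omega=\sqrt{n!}\,P_-\big(f_{\alpha_1}\otimes\cdots\otimes f_{\alpha_n}\big);
\]
as $\{f_\alpha\}$ ranges over the orthonormal basis, these antisymmetrized tensors span $\mathcal{H}^n_-=P_-\mathcal{H}^n$, and the direct sum over $n\ge 0$ (with $n=0$ giving $\Omega$) exhausts $\fermi$. The same identification also re-derives orthonormality, since $\|P_-(f_{\alpha_1}\otimes\cdots\otimes f_{\alpha_n})\|^2=1/n!$ for distinct orthonormal arguments.

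For part (3), I would prove that the commutant of the family $\{a_-(f),a_-^*(g)\}$ is trivial, which is equivalent to irreducibility because this family is closed under adjoints. Let $T$ be bounded and commute with every $a_-(f)$ and $a_-^*(g)$. Then $a_-(f)T\Omega=Ta_-(f)\Omega=0$ for all $f$, so $T\Omega$ is annihilated by all annihilation operators; inspecting particle-number components shows that any vector killed by all $a_-(f)$ has vanishing $n$-particle component for $n\ge 1$, hence $T\Omega=\lambda\Omega$ for some $\lambda\in\mathbb{C}$. Commuting $T$ past the creation operators then gives $T\psi(f_{\alpha_1},\dots,f_{\alpha_n})=\lambda\,\psi(f_{\alpha_1},\dots,f_{\alpha_n})$, and since these vectors form an orthonormal basis by part (2), $T=\lambda\id$.

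The only genuinely delicate points are the algebraic identity $(A^*A)^2=\|f\|^2A^*A$ in part (1), which hinges on invoking $A^2=0$ at exactly the right moment, and, in part (3), the characterization of the vacuum as the unique vector (up to scalar) annihilated by every $a_-(f)$; everything else is bookkeeping with the CAR relations and the identification of the $\psi$'s with normalized antisymmetrized tensors.
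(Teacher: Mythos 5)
Your proof is correct, and since the paper itself offers no proof of this proposition---it is quoted verbatim from Bratteli--Robinson \cite[Prop.~5.2.2]{BR97}---the relevant comparison is with that standard argument, which yours essentially reproduces: the norm identity from $a_-(f)^2=0$ and $(a_-^*(f)a_-(f))^2=\|f\|^2\,a_-^*(f)a_-(f)$ on the finite-particle domain, the identification $a_-^*(f_{\alpha_1})\cdots a_-^*(f_{\alpha_n})\Omega=\sqrt{n!}\,P_-(f_{\alpha_1}\otimes\cdots\otimes f_{\alpha_n})$ for part (2), and triviality of the commutant via uniqueness (up to scalars) of the vector annihilated by all $a_-(f)$ for part (3). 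No gaps worth flagging beyond what you already handle by working on the invariant dense domain before extending by boundedness.
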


\begin{definition}
The $C^*$-subalgebra of $\mathcal{B}(\fermi)$ generated by $a_-(f)$, $a_-^*(g)$ and $\id$ is called the {\em CAR algebra}, and will be denoted by $\car$.
\end{definition}

Although the CCR rules look very similar to the CAR rules, however, one can not simply mimic the previous definition of CAR algebras to deduce the definition of CCR algebras. The reason is that the annihilation operators $a_+(f)$ and the creation operators $a_+^*(g)$ are not bounded operators on $\bose$.

First we introduce the set of operators $\{\Phi(f),f\in \mathcal{H}\}$ by
$$\Phi(f)=\frac{a_+(f)+a_+^*(f)}{\sqrt{2}}.$$
Using that the map $f\mapsto a_+(f)$ is anti-linear, and $f\mapsto a_+^*(f)$ is linear, 
we can invert: 
$$a_+(f)=\frac{\Phi(f)+i\Phi(if)}{\sqrt 2}, \quad a_+^*(f)=\frac{\Phi(f)-i\Phi(if)}{\sqrt 2}. $$
Thus it suffices to examine the set of operators $\{\Phi(f),f\in \mathcal{H}\}$.

Let $F_+(\mathcal{H})=P_+\left(\bigoplus_{n\geq 0}\mathcal{H}^n\right)\subseteq \bose$, {\it i.e.} $F_+(\mathcal{H})$ consists of the sequences $\psi=\{\psi^{(n)}\}_{n\geq 0}$ which have only a finite number of nonvanishing components.

It turns out that for each $f\in \mathcal{H}$, $\Phi(f)$ is essentially self-adjoint on $F_+(\mathcal{H})$, so that $\Phi(f)$ can be extended to a self-adjoint operator; we still use the notation $\Phi(f)$ for this closure. 
We have the following result \cite[Prop. 5.2.4]{BR97}:
\begin{prop}
For each $f\in \mathcal{H}$ define a unitary operator 
$W(f)=\exp{(i\Phi(f))}$ on $\bose$.
Let $\ccr$ denote the $C^*$-algebra generated by $\{W(f), f\in \mathcal{H}\}$.
It follows that

(1) For any $f,g\in \mathcal{H}$, $W(f)D(\Phi(g))=D(\Phi(g))$, and
$$W(f)\Phi(g)W(f)^*=\Phi(g)-\emph{Im}(f,g)\id.$$

(2) For each pair $f,g\in \mathcal{H}$
$$W(f)W(g)=e^{-i\textrm{\emph{Im}}(f,g)/2}W(f+g).$$

(3) $W(-f)=W(f)^*$.

(4) For each $f\in \mathcal{H}\backslash \{0\}$
$$||W(f)-\id||=2,$$
and $W(0)=\id.$

(5) The set $\{W(f);f\in \mathcal{H}\}$ is irreducible on $\bose$, and $\ccr$ is a simple algebra.

(6) If $||f_\alpha - f||\rightarrow{0}$, then
$$||\left(W(f_\alpha)-W(f)\right)\psi||\rightarrow 0$$
for all $\psi\in \bose.$
The operators $W(f)$ are called Weyl operators, and the algebra $\ccr$ is called the CCR algebra of $\mathcal{H}$. 

\end{prop}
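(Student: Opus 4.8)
The plan is to reduce every statement to a single algebraic input, the commutator of the field operators. Expanding $\Phi(f)=\tfrac{1}{\sqrt 2}\big(a_+(f)+a_+^*(f)\big)$ and using the CCR relations $[a_+(f),a_+^*(g)]=(f,g)\id$ together with $[a_+(f),a_+(g)]=[a_+^*(f),a_+^*(g)]=0$, a direct computation on the dense core $F_+(\mathcal H)$ gives
\begin{equation*}
[\Phi(f),\Phi(g)]=\tfrac{1}{2}\big((f,g)-(g,f)\big)\id=i\,\mathrm{Im}(f,g)\,\id .
\end{equation*}
Moreover, since $f\mapsto a_+(f)$ is antilinear and $f\mapsto a_+^*(f)$ is linear, the map $f\mapsto\Phi(f)$ is additive, $\Phi(f+g)=\Phi(f)+\Phi(g)$, and satisfies $\Phi(-f)=-\Phi(f)$ on $F_+(\mathcal H)$. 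These two facts are the entire engine behind (1)--(3).

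For (1), the commutator being a $c$-number means all iterated commutators of $\Phi(f)$ with $\Phi(g)$ beyond the first vanish, so the Baker--Campbell--Hausdorff expansion $e^{iA}Be^{-iA}=B+i[A,B]$ truncates and yields $W(f)\Phi(g)W(f)^*=\Phi(g)+i\cdot i\,\mathrm{Im}(f,g)\id=\Phi(g)-\mathrm{Im}(f,g)\id$; the invariance $W(f)D(\Phi(g))=D(\Phi(g))$ then follows because the right-hand side differs from $\Phi(g)$ by a bounded operator, which leaves the domain unchanged. For (2), the same $c$-number commutator makes BCH exact: $e^{i\Phi(f)}e^{i\Phi(g)}=e^{-\frac12[i\Phi(f),i\Phi(g)]}\,e^{i(\Phi(f)+\Phi(g))}=e^{-\frac{i}{2}\mathrm{Im}(f,g)}W(f+g)$, using $\Phi(f)+\Phi(g)=\Phi(f+g)$. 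Part (3) is then immediate, $W(f)^*=e^{-i\Phi(f)}=e^{i\Phi(-f)}=W(-f)$ since $\Phi(f)$ is self-adjoint and $\Phi(-f)=-\Phi(f)$, and $f=0$ gives $W(0)=\id$. The one genuine subtlety here is that $\Phi(f)$ is unbounded, so the formal BCH steps must be justified, either by passing to analytic vectors (the finite-particle vectors $F_+(\mathcal H)$ form a common invariant core) or by verifying that the two-parameter family $s,t\mapsto W(sf)W(tg)$ satisfies the Weyl ODE on this core.

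For (4), unitarity of $W(f)$ gives $\|W(f)-\id\|=\sup_{\lambda\in\sigma(W(f))}|\lambda-1|$, so the claim $\|W(f)-\id\|=2$ for $f\neq0$ reduces to showing that $W(f)=e^{i\Phi(f)}$ has full spectrum on the unit circle, i.e. $\sigma(\Phi(f))=\mathbb R$; this holds because for $f\neq0$ the single-mode operator $\Phi(f)$ is unitarily equivalent to a nonzero multiple of the standard position operator, whose spectrum is all of $\mathbb R$. For (6), I would combine (2) with a continuity estimate at the origin: writing $(W(f_\alpha)-W(f))\psi=(W(f_\alpha)W(f)^*-\id)W(f)\psi$ and using (2) to get $W(f_\alpha)W(f)^*=e^{\frac{i}{2}\mathrm{Im}(f_\alpha,f)}W(f_\alpha-f)$, it suffices to prove $W(g)\phi\to\phi$ as $g\to0$. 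For $\phi\in F_+(\mathcal H)$ the spectral bound $\|(W(g)-\id)\phi\|\le\|\Phi(g)\phi\|$ (from $|e^{i\lambda}-1|\le|\lambda|$) together with $\|\Phi(g)\phi\|\to0$ gives this on a dense set, and the uniform bound $\|W(g)\|=1$ extends it to all of $\bose$.

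The hard part is (5). For irreducibility of $\{W(f):f\in\mathcal H\}$, I would use that the coherent states $\{W(f)\Omega\}_{f\in\mathcal H}$ are total in $\bose$ (with $(\Omega,W(f)\Omega)=e^{-\|f\|^2/4}$), and that if $T\in\mathcal B(\bose)$ commutes with every $W(f)$ then the Weyl relations (2) force the matrix coefficients $(W(g)\Omega,TW(f)\Omega)$ to depend on $f,g$ only through the group structure, pinning down $T\Omega\in\mathbb C\,\Omega$ and hence, by cyclicity of $\Omega$, $T\in\mathbb C\,\id$. Simplicity of $\ccr$ is the most delicate point: one must show any nonzero closed two-sided ideal contains an approximate unit obtained by averaging bounded functions of the $W(f)$ over the Weyl structure, forcing the ideal to be the whole algebra. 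This is precisely the content of \cite[Prop. 5.2.4]{BR97}, whose argument I would invoke rather than reproduce in full.
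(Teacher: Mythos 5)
The first thing to say is that the paper contains no proof of this proposition at all: it is quoted verbatim from Bratteli--Robinson \cite[Prop.~5.2.4]{BR97} as background material, so there is no internal argument for your attempt to diverge from. Your sketch reconstructs the standard textbook proof, and in outline it is sound: reducing (1)--(3) to the $c$-number commutator $[\Phi(f),\Phi(g)]=i\,\mathrm{Im}(f,g)\id$ on the finite-particle core (whose vectors are entire analytic vectors for each $\Phi(f)$, so the exponential series and Nelson's theorem justify the formal manipulations), reducing (4) to $\sigma(\Phi(f))=\mathbb{R}$ via the single-mode position operator, and reducing (6) through the Weyl relation to strong continuity at $0$ via $\|(W(g)-\id)\phi\|\le\|\Phi(g)\phi\|$ are exactly the moves in \cite{BR97}. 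Deferring simplicity to \cite{BR97} is also defensible, since the paper defers the entire proposition.

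Three points need tightening. First, a sign slip in (2): for central $[X,Y]$ one has $e^Xe^Y=e^{\frac12[X,Y]}e^{X+Y}$, so the intermediate factor should be $e^{+\frac12[i\Phi(f),i\Phi(g)]}=e^{-i\,\mathrm{Im}(f,g)/2}$; your final displayed formula is correct but the exponent written before it is not. Second, in (1) the domain claim does not ``follow because the right-hand side differs by a bounded operator''---that presupposes the operator identity with domains, which is what is to be proved. The clean fix is to derive (1) from (2): the Weyl relation gives $W(f)W(tg)W(f)^*=e^{-it\,\mathrm{Im}(f,g)}W(tg)$, and equating the Stone generators of these two unitary groups yields both the conjugation formula and $W(f)D(\Phi(g))=D(\Phi(g))$ automatically. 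Third, your irreducibility mechanism (``matrix coefficients depend only through the group structure'') is not yet an argument; the standard deduction is that an operator $T$ commuting with all $W(tf)$ commutes with the spectral projections of each $\Phi(f)$, hence with $a_+(f)=\bigl(\Phi(f)+i\Phi(if)\bigr)/\sqrt{2}$, so $a_+(f)T\Omega=0$ for all $f$, forcing $T\Omega\in\mathbb{C}\,\Omega$ by uniqueness of the vacuum, and cyclicity of $\Omega$ then gives $T\in\mathbb{C}\,\id$. Your ingredients (totality of coherent states, vacuum, cyclicity) are the right ones, but this pivotal step is the part that must be made explicit.
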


\subsubsection{Gibbs states, entropy and energy}
As before, let $K_\mu$ denote the modified Hamiltonian operator
\begin{equation*}
K_\mu= d\Gamma\left(H-\mu \id\right).
\end{equation*}

In the fermionic case,
we can define the Gibbs state $\phi_f$ over the CAR algebra by
\begin{equation*}
\phi_f(A)=\frac{\textrm{Tr}\left(e^{-\beta K_\mu}A\right)}{\textrm{Tr}\left(e^{-\beta K_\mu}\right)},\qquad( A\in \car).
\end{equation*}
Here we have tacitly assumed that the operator $e^{-\beta K_\mu}$ is a trace-class operator on $\fermi$, but this in fact happens if and only if $e^{-\beta H}$ is of trace-class on the one-particle Hilbert space $\mathcal H$ \cite[Prop. 5.2.22]{BR97}.





In the bosonic case, 
we can define the Gibbs state $\phi_b$ over the CCR algebra $\ccr$ by
\begin{equation*}
\phi_b(A)=\frac{\textrm{Tr}\left(e^{-\beta K_\mu}A\right)}{\textrm{Tr}\left(e^{-\beta K_\mu}\right)}, \qquad ( A\in \ccr).
\end{equation*}
Similarly as above, it is assumed that the operator $e^{-\beta K_\mu}$ is trace-class on $\bose$, which again is equivalent to the assumption that $e^{-\beta H}$ is trace-class on the one-particle Hilbert space $\mathcal{H}$ \cite[Prop. 5.2.27]{BR97}

\medskip


Under the assumption that $\exp(-\beta H)$ is a trace-class operator on $\mathcal H$, we let $\rho$ denote the corresponding density operator:
$$
\rho = \frac{e^{-\beta K_\mu}}{\textrm{Tr}\left(e^{-\beta K_\mu}\right)}
$$
We will also write $\rho_f$ and $\rho_b$ to stress whether we are dealing with the fermionic or the bosonic case.

The {\em von Neumann entropy} is then defined to be $$\mathcal{S}(\rho):=-\textrm{Tr}(\rho\log \rho),$$
while the {\em average energy} $ \langle K_\mu \rangle_\beta$ is given by 
\begin{equation}\label{eq_def_av_eng}
\langle K_\mu \rangle_\beta=\textrm{Tr}(\rho K_\mu)=-\frac{\partial}{\partial \beta}\left(\log Z\right).
\end{equation}

\subsection{Spectral triples and second quantization}
In this paper we are interested in the above concepts of Fock space, entropy, average energy, {\em et cetera} such as they derive from spectral triples in noncommutative geometry. We recall the definition \cite{C94}:
\begin{definition}
  A {\em spectral triple} is a triple $(\mathcal A,\H,D)$ where $\mathcal A$ is a $*$-algebra of bounded operators acting on a Hilbert space $\H$, and $D$ is a self-adjoint operator in $\H$ with compact resolvent and such that $[D,a]$ is a bounded operator for all $a \in \A$.
\end{definition}

Starting with a spectral triple, we can construct the bosonic and fermionic Fock spaces $\bose$ and  $\fermi$, respectively. 
In the fermionic case,
if $\exp(-\beta|D|)$ is trace-class on $\mathcal{H}$ then, as before, %
the operator $\exp(-\beta d\Gamma |D|)$ is trace-class on $\fermi$,
We denote the corresponding density matrix by $\rho_D$ and Gibbs state by $\phi_D$, The von Neumann entropy $\mathcal{S}(\rho_D)$ and average energy $\langle d\Gamma|D|\rangle_\beta$ are thus defined.
Note that both of these two quantities are additive in the sense that if $D$ has a direct sum decomposition $D=S\oplus T$,
then 
$$\mathcal{S}(\rho_D)=\mathcal{S}(\rho_S)+\mathcal{S}(\rho_T),\quad \textrm{and}\quad \langle d\Gamma|D|\rangle_\beta=\langle d\Gamma|S|\rangle_\beta+\langle d\Gamma|T|\rangle_\beta.$$

Arguing as in \cite{CCS18} we thus obtain two `spectral actions':
$$D\mapsto \mathcal{S}(\rho_D),\quad \textrm{and}\quad D\mapsto \langle d\Gamma|D|\rangle_\beta$$
associated naturally to the spectral triple $(\mathcal{A},\mathcal{H},D)$.

\medskip

Even though for the motivating example of a spectral triple given by a Riemannian spin$^c$ manifold it is natural to consider vectors in the Hilbert space as fermions, in general there is no reason to exclude the possibility that they are bosonic instead. We include this possibility here as well, thus replacing the fermionic Fock space by the bosonic Fock space. 

Then, when the second quantization operator $\exp(-\beta d\Gamma D^2)$ is trace-class on $\bose$ we can obtain another two spectral actions given by the von Neumann entropy and average energy. 
However, in general the operator $\exp(-\beta d\Gamma D^2)$ is not trace-class, which is another reason for including the chemical potential $\mu<0$.

In Section \ref{sect_fermi} we shall first discuss the von Neumann entropy and average energy in fermionic Fock space, starting with following one-particle Hamiltonians:
$$
H_{f,\mu}:=\sqrt{D^2+\mu^2\id},\quad \textrm{and}\quad H_{f,\mu}':=|D|-\mu\id.
$$
This generalizes the case considered in \cite{CCS18} without chemical potential.
We will see that even though the operator $H_{f,\mu}'$ is more natural to consider from a physical point of view (as it is the Hamiltonian based on the Dirac sea in the presence of a chemical potential), the operator $H_{f,\mu}$ gives rise to interesting mathematical structure hidden in the entropy and average energy.

Then, in Section \ref{sect_bose} 
we shall analyze the same two concepts but now in bosonic Fock space and for the following one-particle Hamiltonians:
$$
H_{b,\mu}:=\sqrt{D^2+\mu^2\id},\quad \textrm{and} \quad H_{b,\mu}':=D^2-\mu\id.
$$
Again the latter is the one-particle Hamiltonian considered in physics.

\section{Fermionic second quantization}\label{sect_fermi}
Let $D:\mathcal{H}\to\mathcal{H}$ be a self adjoint operator with compact resolvent.
Let $\beta>0$.
If $e^{-\beta|D|}$ is a trace class operator,
then for any $\mu<0$ the one-particle Hamiltonians $H_{f,\mu},H_{f,\mu}'$ as defined above,
give rise to two trace class operators $e^{-\beta H_{f,\mu}}$ and $e^{-\beta H_{f,\mu}'}$. In fact, since 
$$\sqrt{D^2+\mu^2\id}-|D|=\mu^2\left(\sqrt{D^2+\mu^2\id}+|D|\right)^{-1}$$
is a bounded operator, we obtain the following equalities:
\begin{align*}
\tr\left(e^{-\beta H_{f,\mu}}\right)
&=
\tr\left(e^{-\beta|D|}e^{-\beta\mu^2(\sqrt{D^2+\mu^2\id}+|D|)^{-1}}\right),\\
\tr\left(e^{-\beta H_{f,\mu}'}\right)
&=
e^{-\beta\mu}\tr\left(e^{-\beta|D|}\right).
\end{align*}
ensuring the finiteness of the trace.

Let $K_{f,\mu} = d\Gamma H_{f,\mu}$  and $K_{f,\mu'} = d\Gamma H_{f,\mu}'$
denote the second-quantized operators acting as unbounded operators on the fermionic Fock space $\fermi$. We write $\rho_f, \rho_f'$ for the respective density matrices and $\phi_f, \phi_f'$ for the corresponding states. 
We will now compute their von Neumann entropy and their average energy.

\subsection{The entropy of $\rho_f$}
We start by introducing the following function:
\begin{equation*}
h_{\mu}(x) = \frac{\sqrt{x^2+\mu^2}}{e^{\sqrt{x^2+\mu^2}}+1}+\log\left(1+e^{-\sqrt{x^2+\mu^2}}\right).
\end{equation*}
We may imitate the proof of \cite[Theorem 3.4]{CCS18} to show that the von Neumann entropy of $\rho_f$ is given by 
\begin{equation*}
\mathcal{S}(\rho_f)
=
-\tr\left(\rho_{f} \log{\rho_{f}}\right)
=
\tr(h_{\beta\mu}(\beta D)).
\end{equation*}

We now derive the asymptotic expansion of $\tr(h_{\beta\mu}(\beta D))$ for large temperature (that is, as $\beta\to 0^+$).
Let us first show that the function $h_\mu(\sqrt{x})$ can be expressed as a Laplace transform.
According to Proposition 4.4 in \cite{CCS18},
\begin{equation*}
h_{0}(x)=\int_{0}^{\infty}e^{-tx^2}\Tilde{g}(t)dt,
\end{equation*}
where 
\begin{equation*}
\Tilde{g}(t)=\frac{1}{2t}\sum\limits_{n\in \mathbb{Z}}\left(2\pi^2(2n+1)^2t-1\right)e^{-\pi^2(2n+1)^2t}.
\end{equation*}
Thus
\begin{equation*}
h_\mu(x)=\int_0^\infty e^{-t(x^2+\mu^2)} \Tilde{g}(t)dt=\int_{0}^\infty e^{-tx^2}\Tilde{g}_{\mu}(t)dt,
\end{equation*}
where  $\Tilde{g}_{\mu}(t): =e^{-t\mu^2}\Tilde{g}(t)$.

As a preparation for the asymptotic expansion of the entropy for large temperature we now derive some results expressing the moments of the function $h_\mu(x)$, that is to say, the integrals
$$\int_{0}^\infty h_\mu(x)x^\nu dx.$$
To this end, we first compute the two integrals 
$$\int_{0}^\infty \log \left(1+e^{-\sqrt{x^2 +\mu^2}}\right)x^\nu dx
\quad
\textrm{and}
\quad
\int_{0}^\infty\frac{\sqrt{x^2+\mu^2}}{e^{\sqrt{x^2+\mu^2}}+1}x^\nu dx$$ separately, 
and then sum them up.

\begin{lemma}\label{int1}
We have the following formulae:
\begin{equation}\label{1stderiv}
\int_1^\infty e^{-zx}(x^2-1)^{\nu-\frac{1}{2}}xdx
=
\frac{2^\nu}{\sqrt{\pi}}\Gamma\left(\nu+\frac{1}{2}\right)z^{-\nu}K_{\nu+1}(z)
\end{equation}
and
\begin{equation}\label{2ndderiv}
\int_1^\infty e^{-zx}(x^2-1)^{\nu-\frac{1}{2}}x^2dx
=
\frac{2^\nu}{\sqrt{\pi}}\Gamma\left(\nu+\frac{1}{2}\right)z^{-\nu-1}\big(zK_\nu(z)+(1+2\nu)K_{\nu+1}(z)\big).
\end{equation}
\end{lemma}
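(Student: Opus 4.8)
The plan is to read both integrals off from the classical integral representation of the modified Bessel function of the second kind,
\begin{equation*}
K_\nu(z)=\frac{\sqrt{\pi}\,(z/2)^\nu}{\Gamma(\nu+\tfrac12)}\int_1^\infty e^{-zx}(x^2-1)^{\nu-\frac12}\,dx\qquad(\re\,\nu>-\tfrac12,\ \re\,z>0),
\end{equation*}
which should be among the properties recorded in Appendix \ref{sect_bessel}. Rewriting this gives the base identity
\begin{equation*}
I(z):=\int_1^\infty e^{-zx}(x^2-1)^{\nu-\frac12}\,dx=\frac{2^\nu}{\sqrt{\pi}}\,\Gamma\!\left(\nu+\tfrac12\right)z^{-\nu}K_\nu(z).
\end{equation*}
Since $\partial_z e^{-zx}=-x\,e^{-zx}$, the two integrals in the statement are nothing but derivatives of $I$ in the parameter $z$: the integrand of \eqref{1stderiv} equals $-\partial_z$ of that of $I$, and the integrand of \eqref{2ndderiv} equals $\partial_z^2$ of it. So the whole lemma reduces to differentiating the right-hand side of the base identity once, respectively twice, and simplifying with standard Bessel identities.

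First I would justify passing the derivatives inside the integral. For $z$ ranging over a compact subset of $\{\re\,z>0\}$, each integrand $e^{-zx}(x^2-1)^{\nu-1/2}x^k$ with $k=0,1,2$ is dominated by an integrable function, because the exponential decay overwhelms the polynomial growth of $(x^2-1)^{\nu-1/2}x^k$; hence $I$ is smooth in $z$ and $\int_1^\infty e^{-zx}(x^2-1)^{\nu-1/2}x\,dx=-I'(z)$ and $\int_1^\infty e^{-zx}(x^2-1)^{\nu-1/2}x^2\,dx=I''(z)$. For \eqref{1stderiv} I then use the ladder relation $\frac{d}{dz}\big(z^{-\nu}K_\nu(z)\big)=-z^{-\nu}K_{\nu+1}(z)$, which gives directly
\begin{equation*}
-I'(z)=-\frac{2^\nu}{\sqrt{\pi}}\,\Gamma\!\left(\nu+\tfrac12\right)\frac{d}{dz}\big(z^{-\nu}K_\nu(z)\big)=\frac{2^\nu}{\sqrt{\pi}}\,\Gamma\!\left(\nu+\tfrac12\right)z^{-\nu}K_{\nu+1}(z).
\end{equation*}

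For \eqref{2ndderiv} I would differentiate the expression $z^{-\nu}K_{\nu+1}(z)$ once more. Combining the product rule with the derivative formula $K'_{\nu+1}(z)=-\tfrac12\big(K_\nu(z)+K_{\nu+2}(z)\big)$ and the three-term recurrence $K_{\nu+2}(z)=K_\nu(z)+\tfrac{2(\nu+1)}{z}K_{\nu+1}(z)$ (both in Appendix \ref{sect_bessel}) collapses the $K_{\nu+2}$ term and yields
\begin{equation*}
\frac{d}{dz}\big(z^{-\nu}K_{\nu+1}(z)\big)=-z^{-\nu}K_\nu(z)-(2\nu+1)z^{-\nu-1}K_{\nu+1}(z),
\end{equation*}
whence $I''(z)=\frac{2^\nu}{\sqrt{\pi}}\Gamma(\nu+\tfrac12)\,z^{-\nu-1}\big(zK_\nu(z)+(1+2\nu)K_{\nu+1}(z)\big)$, which is exactly \eqref{2ndderiv}. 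All of these are elementary once the correct Bessel identities are available, so the only genuinely delicate point is checking that the base representation applies in the relevant range of $\nu$ (integrability near $x=1$ requires $\nu>-\tfrac12$) and that differentiation under the integral sign is legitimate there; I expect this to be the main, though mild, obstacle.
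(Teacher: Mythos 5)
Your proof is correct and takes essentially the same route as the paper's: both start from the integral representation in Lemma \ref{intrep} and obtain \eqref{1stderiv} and \eqref{2ndderiv} by differentiating once, respectively twice, under the integral sign in $z$ and simplifying with the Bessel recurrences of Appendix \ref{sect_bessel}. The only cosmetic difference is that where the paper invokes \eqref{bess3} and \eqref{bess4} directly, you reach the same simplification by combining \eqref{bess1} and \eqref{bess2}, and you additionally spell out the (routine) domination argument justifying differentiation under the integral, which the paper leaves implicit.
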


\begin{proof}
According to Lemma \ref{intrep}, one has the integral formula:
\begin{equation*}
\int_1^\infty e^{-zx}(x^2-1)^{\nu-\frac{1}{2}}xdx
=
-\frac{2^\nu }{\sqrt{\pi}}\Gamma\left(\nu+\frac{1}{2}\right)z^{-\nu}\left(\frac{\partial}{\partial z}K_{\nu}(z)-\nu K_\nu(z)z^{-1}\right),
\end{equation*}
which combined with \eqref{bess3} and \eqref{bess4} yields Equation \eqref{1stderiv}.

On the other hand,
taking the derivative with respect to $z$ on both sides of \eqref{1stderiv},
we obtain
\begin{equation}\label{2ndderivproof}
\int_1^\infty e^{-zx}(x^2-1)^{\nu-\frac{1}{2}}x^2dx
=
\frac{2^\nu}{\sqrt{\pi}}\Gamma\left(\nu+\frac{1}{2}\right)\left(\nu z^{-\nu-1}K_{\nu+1}(z)-z^{-\nu}\frac{\partial}{\partial z}K_{\nu+1}(z)\right).
\end{equation}
Taking into account Equation \eqref{bess3} again, we obtain \eqref{2ndderiv}.
\end{proof}

Now we can obtain the integrals that we after as follows:

\begin{lemma}\label{1intprop}
When $\nu>-1$, one has the formulae:
\begin{equation}\label{1int}
\begin{aligned}
\quad\int_{0}^\infty \log \left(1+e^{-\sqrt{x^2 +\mu^2}}\right)x^\nu dx
=
|\mu|^{\frac{\nu+2}{2}}\, 2^{\frac{\nu}{2}}\, \frac{1}{\sqrt{\pi}}\, \Gamma\left(\frac{\nu+1}{2}\right)\sum_{n=1}^{\infty}\frac{(-1)^{n+1}}{n^{\frac{\nu}{2}+1}}K_{\frac{\nu}{2}+1}\left(n|\mu|\right),
\end{aligned}
\end{equation}
and
\begin{equation}\label{2int}
\begin{aligned}
\quad \int_0^\infty \frac{\sqrt{x^2+\mu^2}}{e^{\sqrt{x^2+\mu^2}}+1}x^\nu dx
=
\frac{|2\mu|^{\frac{\nu}{2}}}{\sqrt{\pi}}\, \Gamma\left(\frac{\nu+1}{2}\right)\sum_{n=1}^\infty (-1)^{n+1}\left(\frac{|\mu|}{n^{\frac{\nu}{2}}}K_{\frac{\nu}{2}}\left(n|\mu|\right)+\frac{1+\nu}{n^{\frac{\nu}{2}+1}}K_{\frac{\nu}{2}+1}\left(n|\mu|\right)\right).
\end{aligned}
\end{equation}

\end{lemma}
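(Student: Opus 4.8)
The plan is to expand both integrands into exponential series in the variable $s=\sqrt{x^2+\mu^2}$ and then reduce each resulting term to one of the two integrals already evaluated in Lemma \ref{int1}. For the first integral I would use the Mercator series
$$\log\left(1+e^{-s}\right)=\sum_{n=1}^\infty\frac{(-1)^{n+1}}{n}e^{-ns},$$
and for the second the geometric expansion $\tfrac{1}{e^{s}+1}=\sum_{n=1}^\infty(-1)^{n+1}e^{-ns}$, so that the two integrands become $\sum_n\frac{(-1)^{n+1}}{n}e^{-n\sqrt{x^2+\mu^2}}$ and $\sum_n(-1)^{n+1}\sqrt{x^2+\mu^2}\,e^{-n\sqrt{x^2+\mu^2}}$. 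Interchanging summation and integration is legitimate: the alternating partial sums are dominated by the first term $e^{-\sqrt{x^2+\mu^2}}$ (resp. $\sqrt{x^2+\mu^2}\,e^{-\sqrt{x^2+\mu^2}}$), and the hypothesis $\nu>-1$ guarantees $\int_0^\infty e^{-\sqrt{x^2+\mu^2}}x^\nu\,dx<\infty$, so dominated convergence applies; alternatively the exponential decay of $K_{\nu}$ in its argument makes the resulting $n$-series absolutely convergent and one invokes Fubini.

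Next I would perform the substitution $\sqrt{x^2+\mu^2}=|\mu|t$, i.e. $x=|\mu|\sqrt{t^2-1}$ with $dx=|\mu|\,t(t^2-1)^{-1/2}\,dt$, which carries $[0,\infty)$ to $[1,\infty)$. Under this change of variables the generic term of the first series becomes
$$\int_0^\infty e^{-n\sqrt{x^2+\mu^2}}x^\nu\,dx=|\mu|^{\nu+1}\int_1^\infty e^{-n|\mu|t}(t^2-1)^{\frac{\nu-1}{2}}t\,dt,$$
which is exactly the integral \eqref{1stderiv} with the parameter specialized to $\nu/2$ and $z=n|\mu|$; while the generic term of the second series becomes $|\mu|^{\nu+2}\int_1^\infty e^{-n|\mu|t}(t^2-1)^{\frac{\nu-1}{2}}t^2\,dt$, matching \eqref{2ndderiv}. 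Thus no Bessel identities beyond Lemma \ref{int1} are needed, since \eqref{2ndderiv} already delivers the required two–term combination $n|\mu|\,K_{\nu/2}+(1+\nu)K_{\nu/2+1}$ directly. (Equivalently, one could obtain the second integral from the first by writing $\sqrt{x^2+\mu^2}\,e^{-ns}=-\tfrac{\partial}{\partial n}e^{-ns}$ and differentiating under the integral sign, at the cost of invoking the recurrences \eqref{bess3}, \eqref{bess4}; the direct route is cleaner.)

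Finally I would collect the powers of $|\mu|$ and $n$ and resum. Reading off \eqref{1stderiv} gives $\int_0^\infty e^{-n\sqrt{x^2+\mu^2}}x^\nu\,dx=\frac{2^{\nu/2}}{\sqrt\pi}\Gamma\!\left(\tfrac{\nu+1}{2}\right)|\mu|^{\frac{\nu}{2}+1}n^{-\nu/2}K_{\frac{\nu}{2}+1}(n|\mu|)$; multiplying by $(-1)^{n+1}/n$ and summing produces the factor $n^{-\nu/2-1}$ and the power $|\mu|^{\frac{\nu}{2}+1}=|\mu|^{\frac{\nu+2}{2}}$, which is precisely \eqref{1int}. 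For \eqref{2int} one substitutes \eqref{2ndderiv}, factors out the common constant $\frac{2^{\nu/2}}{\sqrt\pi}\Gamma\!\left(\tfrac{\nu+1}{2}\right)$ together with the appropriate power of $|\mu|$, and resums over $n$ to land on the displayed linear combination of $K_{\nu/2}(n|\mu|)$ and $K_{\nu/2+1}(n|\mu|)$. I expect the only genuinely delicate point to be the bookkeeping of the $|\mu|$–powers and the Bessel indices when passing from \eqref{2ndderiv} to the two summands of \eqref{2int}; by contrast the convergence and interchange of sum and integral are routine, being controlled by $\nu>-1$ and the exponential decay of the modified Bessel functions.
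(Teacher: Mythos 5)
Your proposal follows exactly the route of the paper's own proof: expand the logarithm via the Mercator series and the Fermi factor via the alternating geometric series, substitute $z=\sqrt{x^2/\mu^2+1}$ to carry the integrals to $[1,\infty)$, and invoke Lemma \ref{int1}; your justification of the sum--integral interchange (alternating-series domination together with $\nu>-1$) is a welcome addition that the paper leaves implicit. One caveat about your last step for \eqref{2int}, which you yourself flagged as the delicate bookkeeping: your intermediate formula
\begin{equation*}
\int_0^\infty \frac{\sqrt{x^2+\mu^2}}{e^{\sqrt{x^2+\mu^2}}+1}\,x^\nu\, dx
=
|\mu|^{\nu+2}\sum_{n=1}^\infty(-1)^{n+1}\int_1^\infty e^{-n|\mu|t}\left(t^2-1\right)^{\frac{\nu-1}{2}}t^2\,dt
\end{equation*}
is the correct one (the paper's corresponding display has $|\mu|^{\nu+1}$, a slip), but it does \emph{not} resum to \eqref{2int} exactly as printed: carrying $|\mu|^{\nu+2}$ through \eqref{2ndderiv} with $z=n|\mu|$ produces the overall prefactor $2^{\nu/2}|\mu|^{\nu/2+1}/\sqrt{\pi}$, i.e.\ one more power of $|\mu|$ than the printed $|2\mu|^{\nu/2}/\sqrt{\pi}$. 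The discrepancy is a typo in the statement, propagated from the paper's $|\mu|^{\nu+1}$; your corrected version is the consistent one, since adding \eqref{1int} and using the recurrence $K_{\frac{\nu}{2}+2}(z)=K_{\frac{\nu}{2}}(z)+\frac{\nu+2}{z}K_{\frac{\nu}{2}+1}(z)$ then reproduces the $|\mu|^{\frac{\nu}{2}+2}$ prefactor of Lemma \ref{fermimomentprop} and, further downstream, formula \eqref{fermi_coef_fun}. So your derivation is sound and in fact repairs the paper; just state the corrected $|\mu|$-power explicitly instead of asserting that the computation lands on the displayed formula.
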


\begin{proof}
We first notice that 
\begin{equation}\label{1stsum}
\int_0^\infty\log\left(1+e^{-\sqrt{x^2+\mu^2}}\right)x^{\nu}dx
=
\sum_{n=1}^\infty(-1)^{n+1}\frac{1}{n}\int_0^\infty e^{-n\sqrt{x^2+\mu^2}}x^{\nu}dx.
\end{equation}
Let $z=\sqrt{\frac{x^2}{\mu^2}+1}$,
substitute $x$ by $z$ we obtain:
\begin{equation*}
\int_0^\infty e^{-n\sqrt{x^2+\mu^2}}x^{\nu}dx
=
|\mu|^{\nu+1}\int_1^\infty e^{-n|\mu|z}(z^2-1)^{\frac{\nu-1}{2}}zdz.
\end{equation*}
Thus using Lemma \ref{int1},
one has
\begin{equation}\label{1stintegral}
\int_0^\infty e^{-n\sqrt{x^2+\mu^2}}x^{\nu}dx
=
\frac{2^{\frac{\nu}{2}}}{\sqrt{\pi}}\Gamma\left(\frac{\nu+1}{2}\right)\left(n|\mu|\right)^{-\frac{\nu}{2}}K_{\frac{\nu}{2}+1}\left(n|\mu|\right).
\end{equation}
Taking \eqref{1stintegral} into \eqref{1stsum}, 
we get formula \eqref{1int}.

On the other hand,
since
\begin{equation*}
\frac{\sqrt{x^2+\mu^2}}{1+e^{\sqrt{x^2+\mu^2}}}=\sum_{n=1}^{\infty}(-1)^{n+1}\sqrt{x^2+\mu^2}e^{-n\sqrt{x^2+\mu^2}},
\end{equation*} 
we obtain the formula:
\begin{equation*}
\int_0^\infty \frac{\sqrt{x^2+\mu^2}}{e^{\sqrt{x^2+\mu^2}}+1}x^\nu dx
=|\mu|^{\nu+1}\sum_{n=1}^\infty(-1)^{n+1}\int_1^\infty e^{-n|\mu|z}\left(z^2-1\right)^{\frac{\nu-1}{2}}z^2dz.
\end{equation*}
Now applying  Lemma \ref{int1} again,
one gets the integral formula \eqref{2int}.
\end{proof}
If we sum up the formulae \eqref{1int} and \eqref{2int},
we can obtain the $\nu-$th moment of the function $h_\mu(x)$:
\begin{lemma}\label{fermimomentprop}
For $\nu>-1$, one has
\begin{equation*}
\begin{aligned}
\int_{0}^\infty h_\mu(x)x^{\nu} dx
= \frac{|\mu|^{\frac{\nu}{2}+2}2^{\frac{\nu}{2}}}{\sqrt{\pi}}\Gamma\left(\frac{\nu+1}{2}\right)\sum_{n=1}^{\infty}(-1)^{n+1}\left(n^{-\frac{\nu}{2}}K_{\frac{\nu}{2}+2}\left(n|\mu|\right)\right).
\end{aligned}
\end{equation*}
\end{lemma}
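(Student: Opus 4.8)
The plan is to use linearity of the integral together with the two moment formulas already in hand. Writing out the definition,
$$h_\mu(x)=\frac{\sqrt{x^2+\mu^2}}{e^{\sqrt{x^2+\mu^2}}+1}+\log\left(1+e^{-\sqrt{x^2+\mu^2}}\right),$$
the integral $\int_0^\infty h_\mu(x)x^\nu\,dx$ is literally the sum of the two integrals evaluated in \eqref{2int} and \eqref{1int}. Since both right-hand sides share the prefactor $\tfrac{2^{\nu/2}}{\sqrt\pi}\Gamma\!\left(\tfrac{\nu+1}{2}\right)$ and both are alternating series in $n$, I would pull this prefactor out and add the two expressions inside one common sum over $n$; the termwise interchange of sum and integral needed here is the same one already justified in the derivation of \eqref{1int} and \eqref{2int}, so no new convergence question arises.

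The next step is to collect Bessel functions of like order. The order-$\tfrac{\nu}{2}$ contribution $K_{\nu/2}(n|\mu|)$ comes only from \eqref{2int}, while the order-$(\tfrac{\nu}{2}+1)$ contribution $K_{\nu/2+1}(n|\mu|)$ is produced both by \eqref{1int} and by the second half of \eqref{2int}. Adding the latter two, their coefficients combine into a single term carrying the factor $\nu+2$, so that the $n$-th summand reduces to
$$|\mu|^{\frac{\nu}{2}+2}\,n^{-\frac{\nu}{2}}K_{\frac{\nu}{2}}(n|\mu|)+(\nu+2)\,|\mu|^{\frac{\nu}{2}+1}\,n^{-\frac{\nu}{2}-1}K_{\frac{\nu}{2}+1}(n|\mu|).$$

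The crux is to recognize this as a single Bessel function of order $\tfrac{\nu}{2}+2$. Factoring out $|\mu|^{\nu/2+2}n^{-\nu/2}$ leaves $K_{\nu/2}(n|\mu|)+\frac{\nu+2}{n|\mu|}K_{\nu/2+1}(n|\mu|)$, which is exactly the right-hand side of the three-term recurrence $K_{\lambda+1}(z)=K_{\lambda-1}(z)+\tfrac{2\lambda}{z}K_{\lambda}(z)$ (see \eqref{bess3}) evaluated at $\lambda=\tfrac{\nu}{2}+1$ and $z=n|\mu|$, since then $2\lambda=\nu+2$; hence the bracket equals $K_{\nu/2+2}(n|\mu|)$. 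Restoring the common prefactor and summing over $n$ then yields the asserted formula. I expect the only genuine difficulty to be bookkeeping: the two halves of \eqref{2int} carry different powers of $|\mu|$ and of $n$, and these must be tracked carefully so that the coefficients of $K_{\nu/2+1}$ align into the precise shape demanded by the recurrence — an arithmetic slip in the powers of $|\mu|$ there is the easiest way to miss the clean collapse to $K_{\nu/2+2}$.
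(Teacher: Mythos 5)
Your proposal is correct and is exactly the argument the paper leaves implicit when it states that the lemma follows by summing \eqref{1int} and \eqref{2int}: pull out the common prefactor, combine the two $K_{\frac{\nu}{2}+1}$ contributions into a coefficient $\nu+2$, and collapse the bracket via the three-term recurrence $K_{\lambda+1}(z)=K_{\lambda-1}(z)+\frac{2\lambda}{z}K_{\lambda}(z)$ at $\lambda=\frac{\nu}{2}+1$, $z=n|\mu|$. Two bookkeeping remarks only: that recurrence is \eqref{bess1} (rearranged), not \eqref{bess3}, which involves a derivative; and your $n$-th summand silently supplies the factor of $|\mu|$ that is missing from the printed right-hand side of \eqref{2int} (a typo inherited from the intermediate substitution step, where $|\mu|^{\nu+1}$ should read $|\mu|^{\nu+2}$, as a check of the large-$|\mu|$ asymptotics confirms) --- precisely the slip in powers of $|\mu|$ you warned about, and with your corrected powers the collapse to $K_{\frac{\nu}{2}+2}$ goes through as stated.
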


Let us write 
\begin{equation}\label{spectral_action_coeff}
\gamma_{\mu}(a)
:=
\int_0^\infty t^a\Tilde{g}_{\mu}(t)dt
=
\int_0^\infty t^ae^{-t\mu^2}\Tilde{g}(t)dt.
\end{equation}
It is clear that for a  fixed chemical potential $\mu<0$,  the equation \eqref{spectral_action_coeff} is an entire function with respect to $a\in \mathbb{C}$.
In view of Lemma \ref{fermimomentprop}, we deduce that when the order $a<0$, the coefficient of $t^{a}$ in the heat expansion is
\begin{align}
\gamma_{\mu}(a)
&=
\frac{1}{\Gamma(-a)}\int_0^\infty h_{\mu}(x^{\frac{1}{2}})x^{-a-1}dx\nonumber
=
\frac{2}{\Gamma(-a)}\int_0^\infty h_{\mu}(x)x^{-2a-1}dx\nonumber\\
&=
\frac{1}{\sqrt{\pi}}2^{-a+\frac{1}{2}}|\mu|^{-a+\frac{3}{2}}\sum_{n=1}^\infty (-1)^{n+1}\left(n^{a+\frac{1}{2}}K_{-a+\frac{3}{2}}\left(n|\mu|\right)\right)\label{fermi_coef_fun}, 
\quad a<0.
\end{align}

Now we will show that for any fixed chemical potential $\mu<0$, 
the function \eqref{fermi_coef_fun} is an entire function with respect to $a\in \mathbb{C}$ ,
so that the function \eqref{fermi_coef_fun} can give rise to spectral action coefficients for any  order $a$.


\begin{prop}\label{thm_fermi_coeff}
For any fixed chemical potential $\mu<0$, the function $\eqref{fermi_coef_fun}$ is an entire  function in $a\in \mathbb{C}$. 
Hence we have the formula
\begin{equation}\label{eq_fermi_coeff_gamma}
\gamma_{\mu}(a)
=
\frac{1}{\sqrt{\pi}}2^{-a+\frac{1}{2}}|\mu|^{-a+\frac{3}{2}}\sum_{n=1}^\infty (-1)^{n+1}\left(n^{a+\frac{1}{2}}K_{-a+\frac{3}{2}}\left(n|\mu|\right)\right)
\end{equation}
for all  $a$. 
\end{prop}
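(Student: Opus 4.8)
The plan is to prove that the right-hand side of \eqref{fermi_coef_fun} is an entire function of $a$ and then to conclude by analytic continuation. First I would dispose of the prefactor $\frac{1}{\sqrt{\pi}}2^{-a+\frac{1}{2}}|\mu|^{-a+\frac{3}{2}}$, which is manifestly entire since it is a product of the exponentials $e^{-a\log 2}$ and $e^{-a\log|\mu|}$. Next, for each fixed $n\ge 1$ the summand $n^{a+\frac{1}{2}}K_{-a+\frac{3}{2}}(n|\mu|)$ is entire in $a$: the factor $n^{a+\frac{1}{2}}=e^{(a+\frac{1}{2})\log n}$ is entire, and for a fixed argument $z=n|\mu|>0$ the modified Bessel function $K_\nu(z)$ is an entire function of its order $\nu$ (the apparent singularities at integer $\nu$ of the representation $K_\nu=\tfrac{\pi}{2}\tfrac{I_{-\nu}-I_\nu}{\sin\pi\nu}$ being removable), so $a\mapsto K_{-a+\frac{3}{2}}(n|\mu|)$ is entire. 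By the Weierstrass theorem on locally uniform limits of holomorphic functions, it then suffices to show that the series $\sum_{n\ge 1}(-1)^{n+1}n^{a+\frac{1}{2}}K_{-a+\frac{3}{2}}(n|\mu|)$ converges uniformly on every compact subset of $\mathbb{C}$.

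For the uniform bound I would fix a compact set $K\subseteq\mathbb{C}$ and set $M=\sup_{a\in K}|\mathrm{Re}(-a+\frac{3}{2})|$ and $\sigma_0=\sup_{a\in K}(\mathrm{Re}(a)+\frac{1}{2})$. Using the integral representation $K_\nu(z)=\int_0^\infty e^{-z\cosh s}\cosh(\nu s)\,ds$ for $z>0$ (Appendix \ref{sect_bessel}) together with the elementary inequality $|\cosh(\nu s)|\le\cosh(\mathrm{Re}(\nu)\,s)$ and the fact that $K_s(z)$ is even and nondecreasing in the real order $s\ge 0$, one obtains the domination
\begin{equation*}
\bigl|K_{-a+\frac{3}{2}}(n|\mu|)\bigr|\le K_{M}(n|\mu|)\qquad(a\in K,\ n\ge 1).
\end{equation*}
Invoking the large-argument asymptotics $K_M(z)\sim\sqrt{\tfrac{\pi}{2z}}\,e^{-z}$ as $z\to\infty$, the sequence $K_M(n|\mu|)e^{n|\mu|/2}$ tends to $0$ and is therefore bounded, so there is a constant $C=C_K$ with $K_M(n|\mu|)\le C\,e^{-n|\mu|/2}$ for all $n\ge 1$ (the argument never approaches $0$ because $|\mu|>0$). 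Combining this with $|n^{a+\frac{1}{2}}|\le n^{\sigma_0}$ bounds the $n$-th term on $K$ by $C\,n^{\sigma_0}e^{-n|\mu|/2}$, which is summable and independent of $a\in K$. This yields the desired uniform convergence, and hence the entirety of the right-hand side of \eqref{fermi_coef_fun}.

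Finally, as already noted after \eqref{spectral_action_coeff}, the function $\gamma_\mu$ defined there is itself entire, and the computation leading to \eqref{fermi_coef_fun} establishes that it agrees with the (now entire) right-hand side for all $a<0$. Since $\{a\in\mathbb{C}:a<0\}$ has an accumulation point in $\mathbb{C}$, the identity theorem forces the two entire functions to coincide on all of $\mathbb{C}$, which is precisely \eqref{eq_fermi_coeff_gamma}. The one genuinely delicate point is the uniform control of the Bessel series in the \emph{complex} order; I expect this to be the main obstacle, and it is handled cleanly by the domination $|K_\nu(z)|\le K_{\mathrm{Re}\,\nu}(z)$ coming from the integral representation, combined with the exponential decay in the argument. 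Everything else reduces to the routine observations that each summand is entire and that the claimed identity already holds on a set with an accumulation point.
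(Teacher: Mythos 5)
Your proof is correct and follows essentially the same route as the paper's: domination of $K_{-a+3/2}(n|\mu|)$ in complex order via the integral representation $K_\nu(z)=\int_0^\infty e^{-z\cosh t}\cosh(\nu t)\,dt$, the large-argument asymptotics $K_\nu(z)\sim\sqrt{\pi/(2z)}\,e^{-z}$ to get summability, locally uniform convergence hence entirety, and then agreement with the entire function $\gamma_\mu(a)$ of \eqref{spectral_action_coeff} for $a<0$ extended by analytic continuation. Your minor refinements --- bounding by $K_{|\mathrm{Re}\,\nu|}$ rather than $K_{|\nu|}$, the explicit majorant $C\,n^{\sigma_0}e^{-n|\mu|/2}$, and spelling out the identity-theorem step that the paper leaves implicit --- are improvements in presentation, not a different argument.
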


\begin{proof}
We only need to show that the series
\begin{equation}\label{uniform_convergent_function}
\sum_{n=1}^\infty (-1)^{n+1}\left(n^{a+\frac{1}{2}}K_{-a+\frac{3}{2}}\left(n|\mu|\right)\right)
\end{equation}
is an entire function in  $a\in \mathbb{C}$.
In fact, using the integral expression for the Bessel  function $K_\nu(z)$ \cite[8.432]{gradshteyn2007},
we have
\begin{equation*}
K_\nu (z)=
\int_0^\infty e^{-z\cosh{t}}\cosh(\nu t)dt, \qquad |\textrm{arg}z|<\frac{\pi}{2} \textrm{ or } \textrm{Re}(z)=0 \textrm{ and }\nu=0.
\end{equation*}
We see that for a fixed $z>0$
the function $K_\nu(z)$ is an entire function with respect to $\nu\in \mathbb{C}$.
Now we need to show that equation \eqref{uniform_convergent_function} is locally uniformly convergent.
In fact, for $|\nu|\leq R$, 
\begin{equation*}
|K_v(z)|\leq \int_0^\infty e^{-z\cosh{t}}\cosh(Rt)dt
=K_R(z).
\end{equation*}
For $|-a+\frac{3}{2}|\leq R$, where $R<\infty$, we have
\begin{equation*}
\Big|\sum_{n=1}^\infty (-1)^{n+1}\left(n^{a+\frac{1}{2}}K_{-a+\frac{3}{2}}\left(n|\mu|\right)\right)\Big|
\leq
\sum_{n=1}^\infty n^{R+2}K_R\left(n|\mu|\right).
\end{equation*}
According to Lemma \ref{lemma_bessel_asymp},
 we have the asymptotic expansion
\begin{equation*}
K_{\nu}(z)\sim \sqrt{\frac{\pi}{2z}}e^{-z} \qquad z\rightarrow \infty,
\end{equation*}
from which it follows that  the series
$\sum_{n=1}^\infty n^{R+2}K_R\left(n|\mu|\right)$ is convergent.
Therefore the series  $\eqref{uniform_convergent_function}$ is locally uniformly convergent, and  the function $\eqref{eq_fermi_coeff_gamma}$ is an entire function.  Since according to Proposition \ref{thm_fermi_coeff}, $\gamma_{\mu}(a)$ is an entire function, the expression $\eqref{eq_fermi_coeff_gamma}$ is valid for all $a$.
\end{proof}

In addition, we can express the spectral action coefficients $\gamma_{\mu}(a)$ in a more concise way via the Poisson summation formula.

\begin{prop}\label{fermi_spectral_action_coeff_2nd_prop}
For any fixed chemical potential $\mu<0$,  we have 
\begin{equation}\label{2nd_eq__fermi_entropy}
\gamma_{\mu}(a)
=
\frac{\Gamma(a)}{2}\sum_{n=-\infty}^\infty\frac{(2a-1)(2n+1)^2\pi^2-\mu^2}{((2n+1)^2\pi^2+\mu^2)^{a+1}}.
\end{equation}
\end{prop}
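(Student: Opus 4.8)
The plan is to bypass the Bessel-function expression \eqref{eq_fermi_coeff_gamma} entirely and instead feed the explicit series for $\tilde{g}$ directly into the defining integral \eqref{spectral_action_coeff}. Writing $A_n := (2n+1)^2\pi^2 + \mu^2$ and recalling that $\tilde{g}_\mu(t) = e^{-t\mu^2}\tilde{g}(t)$, the integrand of $\gamma_\mu(a)$ becomes, after absorbing the factor $\tfrac1{2t}$,
$$t^a\,\tilde{g}_\mu(t) = \frac12\sum_{n\in\mathbb Z}\Big(2\pi^2(2n+1)^2\, t^{a} - t^{a-1}\Big)e^{-A_n t}.$$
The whole computation then reduces to the elementary Mellin integral $\int_0^\infty t^{s-1}e^{-A t}\,dt = \Gamma(s)A^{-s}$, applied once with $s = a+1$ and once with $s=a$.

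First I would fix a right half-plane $\operatorname{Re}(a) > \tfrac12$ and justify interchanging the sum over $n$ with the integral. Bounding $|2\pi^2(2n+1)^2 t - 1|$ by $2\pi^2(2n+1)^2 t + 1$ and integrating term by term produces the majorant $\sum_n\big(2\pi^2(2n+1)^2\Gamma(\sigma+1)A_n^{-\sigma-1} + \Gamma(\sigma)A_n^{-\sigma}\big)$ with $\sigma = \operatorname{Re}(a)$; since $A_n \sim (2n+1)^2\pi^2$, each summand is $O\big((2n+1)^{-2\sigma}\big)$, so the majorant is finite precisely when $\sigma > \tfrac12$, and Tonelli's theorem legitimises the interchange there. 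Next comes the term-by-term evaluation: the Mellin integral gives, for each $n$,
$$\frac12\Big(2\pi^2(2n+1)^2\,\Gamma(a+1)A_n^{-a-1} - \Gamma(a)A_n^{-a}\Big) = \frac{\Gamma(a)}{2}\,A_n^{-a-1}\Big(2a\pi^2(2n+1)^2 - A_n\Big),$$
where I used $\Gamma(a+1) = a\Gamma(a)$ and $A_n^{-a} = A_n\cdot A_n^{-a-1}$. Substituting $A_n = (2n+1)^2\pi^2 + \mu^2$ collapses the bracket to $(2a-1)(2n+1)^2\pi^2 - \mu^2$, and summing over $n\in\mathbb Z$ yields exactly \eqref{2nd_eq__fermi_entropy} on the half-plane $\operatorname{Re}(a) > \tfrac12$.

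Finally I would promote this to all $a$. By Proposition \ref{thm_fermi_coeff} the left-hand side $\gamma_\mu(a)$ is entire, while the right-hand side is an absolutely convergent (hence analytic) series on $\operatorname{Re}(a) > \tfrac12$; since the two agree there, the identity theorem gives equality wherever the series converges, and the entire function $\gamma_\mu$ supplies the continuation elsewhere. The one genuine subtlety, and the only place where real care is needed, is this mismatch of domains: each individual term is harmless for $\operatorname{Re}(a) > 0$, but the cancellation in $\tilde{g}(t)$ as $t\downarrow 0$ (the very feature that makes $\gamma_\mu$ entire) is lost term by term, so the concise series represents $\gamma_\mu$ absolutely only on $\operatorname{Re}(a) > \tfrac12$. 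This direct route obtains \eqref{2nd_eq__fermi_entropy} without invoking Poisson summation explicitly; the connection to the Poisson summation formula is simply that \eqref{eq_fermi_coeff_gamma} and \eqref{2nd_eq__fermi_entropy} are two representations of the same Mellin transform of $\tilde{g}_\mu$, dual to one another through the theta-series shape of $\tilde{g}$.
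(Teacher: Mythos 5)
Your proof is correct, but it takes a genuinely different route from the paper's. The paper first establishes the Bessel-function representation \eqref{eq_fermi_coeff_gamma} (Proposition \ref{thm_fermi_coeff}) and then converts the alternating Bessel series into \eqref{2nd_eq__fermi_entropy} using the Fourier-transform identity of Lemma \ref{lemma_2_order_FT} together with the Poisson summation formula, valid initially for $a\geq\tfrac32$; you instead substitute the theta-type series defining $\Tilde{g}$ directly into the definition \eqref{spectral_action_coeff} and integrate term by term, the elementary Mellin integral $\int_0^\infty t^{s-1}e^{-A_n t}\,dt=\Gamma(s)A_n^{-s}$ doing all the work. Your justification of the interchange is sound: since $A_n\geq(2n+1)^2\pi^2$, the majorant $\sum_n\bigl(2\pi^2(2n+1)^2\Gamma(\sigma+1)A_n^{-\sigma-1}+\Gamma(\sigma)A_n^{-\sigma}\bigr)$ is $O\bigl(\sum_n(2n+1)^{-2\sigma}\bigr)$, finite exactly for $\sigma=\operatorname{Re}(a)>\tfrac12$, and the algebra via $\Gamma(a+1)=a\Gamma(a)$ collapsing the bracket to $(2a-1)(2n+1)^2\pi^2-\mu^2$ is right. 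What your route buys: it bypasses the Bessel apparatus (Lemmas \ref{int1}--\ref{fermimomentprop} and Lemma \ref{lemma_2_order_FT}) and any explicit Poisson summation, and it proves \eqref{2nd_eq__fermi_entropy} directly on the larger half-plane $\operatorname{Re}(a)>\tfrac12$ rather than $a\geq\tfrac32$; both arguments then finish identically, invoking the entirety of $\gamma_\mu$ so that, exactly as in the paper, the identity for general $a$ is understood in the sense of analytic continuation of the right-hand side (the series indeed diverges for $\operatorname{Re}(a)\leq\tfrac12$, as you correctly flag --- the term-by-term Mellin transform destroys the cancellation in $\Tilde{g}$ at $t\downarrow 0$ that makes the left-hand side entire). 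Your closing remark is also apt: Poisson summation is not truly absent from your argument, since it is encoded in the theta-function shape of $\Tilde{g}$ itself, which is why \eqref{eq_fermi_coeff_gamma} and \eqref{2nd_eq__fermi_entropy} appear as dual representations of the same Mellin transform. The one dependency worth making explicit is that your continuation step still leans on the entirety of $\gamma_\mu$ (Proposition \ref{thm_fermi_coeff}, resting ultimately on the rapid decay of $\Tilde{g}$ at $t=0$); that is established earlier in the paper, so no gap results.
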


\begin{proof}
Using  Lemma \ref{lemma_2_order_FT} and  the Poisson summation formula, when $\nu\geq -\frac{1}{2}$, $a>0$, we have
\begin{equation}\label{fermi_entropy_fourier_transf}
\sum_{n=1}^\infty(-1)^n |n|^{\nu+2}K_\nu(a|n|)
=
\frac{1}{2}\frac{\Gamma\left(\nu+\frac{1}{2}\right)(2a)^\nu}{-4\pi\Gamma\left(\frac{1}{2}\right)}\sum_{n=-\infty}^\infty \phi_{\nu,a}^{\prime\prime}(n),
\end{equation}
where $\phi_{\nu,a}(x)=\frac{1}{((2x+1)^2\pi^2+a^2)^{\nu+\frac{1}{2}}}$.
Since we have the equation
\begin{equation*}
K_{-a+\frac{3}{2}}\left(n|\mu|\right)=K_{a-\frac{3}{2}}\left(n|\mu|\right),
\end{equation*}
applying Equation \eqref{fermi_entropy_fourier_transf} to Proposition \ref{thm_fermi_coeff}
we then get Equation \eqref{2nd_eq__fermi_entropy} when $a\geq \frac{3}{2}$.
Now, in Proposition \ref{thm_fermi_coeff} we saw that $\gamma_{\mu}(a)$ is an entire function.
It follows that  the function \eqref{2nd_eq__fermi_entropy} has an analytic extension to the whole complex plane $\mathbb{C}$, and 
therefore equation \eqref{2nd_eq__fermi_entropy} holds for all $a\in \mathbb{C}$.
\end{proof}

Whenever $\mu\in (-\pi,0)$ we can express the function $\gamma_\mu(a)$ in terms of Riemann $\xi$ function as well, similar to \cite[Prop. 4.6]{CCS18}.

\begin{lemma}\label{prop_gamma_mu_xi}
When $\mu\in(-\pi, 0)$,
the function $\gamma_\mu(a)$ can be expressed as
\begin{equation}\label{3rd_eq_fermi_entropy}
\gamma_\mu(a)=\sum_{k=0}^\infty(-1)^k\frac{1-2^{-(2a+2k)}}{\Gamma(k+1)\pi^{a+k}(a+k)}\xi(2a+2k)\mu^{2k}.
\end{equation}
\end{lemma}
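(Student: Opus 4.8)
The plan is to start from the Laplace-transform representation $\gamma_\mu(a)=\int_0^\infty t^a e^{-t\mu^2}\Tilde{g}(t)\,dt$ of \eqref{spectral_action_coeff} and to expand the weight $e^{-t\mu^2}$ as a power series in $\mu^2$. Writing $e^{-t\mu^2}=\sum_{k=0}^\infty\frac{(-1)^k}{k!}t^k\mu^{2k}$ and integrating term by term should yield
\begin{equation*}
\gamma_\mu(a)=\sum_{k=0}^\infty\frac{(-1)^k}{k!}\mu^{2k}\int_0^\infty t^{a+k}\Tilde{g}(t)\,dt=\sum_{k=0}^\infty\frac{(-1)^k}{\Gamma(k+1)}\gamma_0(a+k)\,\mu^{2k},
\end{equation*}
where $\gamma_0(b):=\int_0^\infty t^b\Tilde{g}(t)\,dt$ is the chemical-potential-free coefficient, i.e. $\gamma_\mu(a)$ evaluated at $\mu=0$.

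I would then invoke the $\mu=0$ computation of \cite[Prop. 4.6]{CCS18}, which gives $\gamma_0(b)=\frac{1-2^{-2b}}{b}\pi^{-b}\xi(2b)$ in terms of the Riemann $\xi$-function. Substituting $b=a+k$ and rewriting $2^{-2(a+k)}=2^{-(2a+2k)}$, $k!=\Gamma(k+1)$ and $\xi(2(a+k))=\xi(2a+2k)$ converts each summand into precisely the term appearing in \eqref{3rd_eq_fermi_entropy}, which establishes the formula modulo the analytic justification of the two steps above.

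The main obstacle is to justify the interchange of summation and integration and, simultaneously, to see that it pins down the range $\mu\in(-\pi,0)$. For the interchange I would apply Tonelli's theorem to $\sum_k\frac{\mu^{2k}}{k!}\int_0^\infty t^{a+k}|\Tilde{g}(t)|\,dt$, so that everything reduces to controlling the growth of the moments in $k$. Since $\Tilde{g}(t)$ decays like $e^{-\pi^2 t}$ as $t\to\infty$ (the smallest frequency being $(2\cdot 0+1)^2\pi^2=\pi^2$), these moments grow like $\pi^{-2(a+k)}\Gamma(a+k+1)$ up to subexponential factors; equivalently this follows from the closed form together with the asymptotics $\zeta(s)\to 1$, $\xi(s)\sim\tfrac12 s(s-1)\pi^{-s/2}\Gamma(s/2)$ as $s\to+\infty$ (and in turn the large-order Bessel asymptotics of Appendix \ref{sect_bessel}). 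By Stirling $\Gamma(a+k+1)/\Gamma(k+1)\sim k^a$, so the $k$-th term behaves like $(\mu^2/\pi^2)^k k^a$: the series has radius of convergence $\pi$ in $\mu$, converging exactly for $|\mu|<\pi$. This simultaneously legitimizes the interchange on that disc and singles out $\mu\in(-\pi,0)$ as the natural domain.

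A cleaner way to finish, avoiding sharp estimates, is by analytic continuation in the variable $z=\mu^2$: the left-hand side $\int_0^\infty t^a e^{-tz}\Tilde{g}(t)\,dt$ is holomorphic for $\mathfrak{R}(z)>-\pi^2$, the right-hand side is holomorphic on $|z|<\pi^2$, and the two agree for small real $z>0$ by the elementary term-by-term argument; hence they coincide on $|z|<\pi^2$, in particular for $\mu\in(-\pi,0)$.
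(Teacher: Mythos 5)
Your proof is correct, but it follows a genuinely different route from the paper's. The paper proves the lemma by taking the Poisson-summation expression \eqref{2nd_eq__fermi_entropy} from Proposition \ref{fermi_spectral_action_coeff_2nd_prop}, expanding each summand binomially in $\mu^2/((2n+1)^2\pi^2)$ (valid precisely because $|\mu|<\pi$ makes this ratio less than $1$ for every $n$), resumming over the odd integers to produce the factors $(1-2^{-(2a+2k)})\zeta(2a+2k)$, and finally rewriting via $\xi(s)=\frac{s(s-1)}{2\pi^{s/2}}\Gamma(\frac{s}{2})\zeta(s)$. You instead expand the weight $e^{-t\mu^2}$ in the defining Laplace integral \eqref{spectral_action_coeff} and import the $\mu=0$ evaluation $\gamma(b)=\frac{1-2^{-2b}}{b}\pi^{-b}\xi(2b)$ from \cite[Prop.~4.6]{CCS18}, which immediately yields the form \eqref{eq_espresses_gamma_mu} that the paper only records \emph{after} the lemma as a reformulation. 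Each approach has its merits: yours makes the provenance of the radius $|\mu|<\pi$ conceptually transparent --- it is the exponential decay rate $e^{-\pi^2 t}$ of $\Tilde{g}$, i.e.\ the smallest frequency $\pi^2$ in its theta-type expansion --- and, via the Tonelli bound on the moments (growth $\sim \pi^{-2(a+k)}\Gamma(a+k+1)$, so $k$-th term $\sim (\mu^2/\pi^2)^k k^a$) or the identity-theorem argument in $z=\mu^2$ on $\{\re(z)>-\pi^2\}\cap\{|z|<\pi^2\}$, it supplies the analytic justification for the interchange that the paper's proof leaves implicit (the paper's swap of the $n$- and $k$-sums also requires an absolute-convergence or continuation argument that is not spelled out). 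The paper's route, on the other hand, is self-contained within its own results, relying on \cite{CCS18} only for notation rather than for the closed-form evaluation of $\gamma$; your proof shifts that burden onto the cited Proposition 4.6 of \cite{CCS18}, which is a legitimate citation but does make the argument less autonomous.
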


\begin{proof}
We will try to modify formula \eqref{2nd_eq__fermi_entropy} to get to \eqref{3rd_eq_fermi_entropy}.
Since $\mu\in(-\pi,0)$,
\begin{equation*}
\begin{aligned}
\frac{1}{((2n+1)^2\pi^2+\mu^2)^a}
&=
\frac{1}{((2n+1)^2\pi^2)^a}\frac{1}{\left(1+\frac{\mu^2}{(2n+1)^2\pi^2}\right)^a}\\
&=
\frac{1}{((2n+1)^2\pi^2)^a}\sum_{k=0}^\infty\frac{\Gamma(-a+1)}{\Gamma(k+1)\Gamma(-a-k+1)}\left(\frac{\mu^2}{(2n+1)^2\pi^2}\right)^k,
\end{aligned}
\end{equation*}
which we may substitute in formula \eqref{2nd_eq__fermi_entropy} to obtain
\begin{equation*}
\gamma_\mu(a)=\sum_{k=0}^\infty\frac{(-1)^k\Gamma(k+a)(2a+2k-1)}{\Gamma(k+1)}\frac{2^{2a+2k}-1}{(2\pi)^{2a+2k}}\zeta(2a+2k)\mu^{2k}.
\end{equation*}
Since 
\begin{equation*}
\xi(s)=\frac{s(s-1)}{2\pi^{s/2}}\Gamma\left(\frac{s}{2}\right)\zeta(s),
\end{equation*}
we obtain the formula \eqref{3rd_eq_fermi_entropy}.
\end{proof}

We shall follow the same notation as in \cite{CCS18} and denote by
\begin{equation*}\label{eq_gamma}
\gamma(a)=\frac{1-2^{-2a}}{a}\pi^{-a}\xi(2a). 
\end{equation*}
We then have a more concise expression of $\gamma_\mu(a)$:
\begin{equation}\label{eq_espresses_gamma_mu}
\gamma_\mu(a)=\sum_{k=0}^\infty(-1)^k\frac{\gamma(a+k)}{k!}\mu^{2k},\quad \mu\in(-\pi,0),
\end{equation}
leading to the following proposition:

\begin{prop}
Under the assumption of a heat trace expansion of the form \eqref{eq_ht},
we write $\psi_l(\beta, \mu):=\sum\limits_{z\in X_l}\beta^{-2z}\gamma_{\beta\mu}(-z)$.

For $\mu<0$ we then have the following expansion for the entropy:
\begin{equation*}
\mathcal{S}(\rho_f)
=
\tr(h_{\beta\mu}(\beta D))
\sim
\sum_{l}^\infty \psi_l(\beta, \mu),\quad \beta\to 0^+.
\end{equation*}
More precisely,
$$
\tr(h_{\beta\mu}(\beta D))
-
\sum_{l=0}^L \psi_l(\beta, \mu)
=
o_0(\beta^{2L}),
$$
and each term $\psi_{l}(\beta, \mu)=o_0(\beta^{r_l})$.
\end{prop}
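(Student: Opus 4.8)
The plan is to reduce the statement to the general heat-expansion machinery of Appendix~\ref{sect_heat-exp} by first rewriting the entropy as a single integral of the heat trace. Recall the Laplace representation $h_{\beta\mu}(x)=\int_0^\infty e^{-tx^2}\tilde{g}_{\beta\mu}(t)\,dt$ with $\tilde{g}_{\beta\mu}(t)=e^{-t(\beta\mu)^2}\tilde{g}(t)$. Substituting the self-adjoint operator $\beta D$ for $x$ (so that $x^2\mapsto\beta^2D^2$) and interchanging the trace with the $t$-integral, which is justified by the standing trace-class assumption on $e^{-\beta|D|}$ together with the rapid decay of $\tilde{g}_{\beta\mu}$, one obtains the basic identity
\[
\tr\big(h_{\beta\mu}(\beta D)\big)=\int_0^\infty \tr\big(e^{-t\beta^2D^2}\big)\,\tilde{g}_{\beta\mu}(t)\,dt .
\]
This exhibits the entropy as a Mellin--Laplace convolution of the heat trace $s\mapsto\tr(e^{-sD^2})$ against the weight $\tilde{g}_{\beta\mu}$, under the scaling $s=\beta^2t$, and so the problem becomes one of transferring the small-$s$ asymptotics \eqref{eq_ht} of the heat trace into a small-$\beta$ asymptotics of this integral.

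The next step is to feed the assumed expansion \eqref{eq_ht} into this integral. The generic exponent term $s^{-z}=(\beta^2t)^{-z}$ integrates against $\tilde{g}_{\beta\mu}$ to give, by the very definition \eqref{spectral_action_coeff},
$\beta^{-2z}\int_0^\infty t^{-z}\tilde{g}_{\beta\mu}(t)\,dt=\beta^{-2z}\gamma_{\beta\mu}(-z)$,
and grouping the exponents $z$ according to the index sets $X_l$ of the appendix assembles these contributions into the terms $\psi_l(\beta,\mu)$. Thus, at the level of formal term-by-term integration, the claimed expansion $\tr(h_{\beta\mu}(\beta D))\sim\sum_l\psi_l(\beta,\mu)$ drops out immediately; the content of the proposition is that this formal manipulation is a genuine asymptotic expansion with the stated remainder and ordering.

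For the order statement $\psi_l(\beta,\mu)=o_0(\beta^{r_l})$ I would invoke the small-$\beta$ behaviour of the universal coefficients. By \eqref{eq_espresses_gamma_mu} (valid once $\beta\mu\in(-\pi,0)$, hence for $\beta$ small), or directly from the Bessel representation \eqref{eq_fermi_coeff_gamma} together with the asymptotics of Lemma~\ref{lemma_bessel_asymp}, one has $\gamma_{\beta\mu}(-z)=\gamma(-z)+O(\beta^2)$, so that each summand $\beta^{-2z}\gamma_{\beta\mu}(-z)$ has leading order governed by $\mathrm{Re}(z)$. The grouping into the $X_l$ is arranged in the appendix precisely so that the resulting orders $r_l$ increase with $l$, which both justifies the ordering of the series and, combined with the truncated heat expansion, yields the sharp cumulative bound $\tr(h_{\beta\mu}(\beta D))-\sum_{l=0}^L\psi_l(\beta,\mu)=o_0(\beta^{2L})$.

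The main obstacle is the rigorous remainder control, since \eqref{eq_ht} is valid only as $s=\beta^2t\to0^+$ whereas the integral runs over all $t\in(0,\infty)$. A naive split at a fixed cutoff $t=T$ already shows the difficulty: on $(0,T)$ the expansion applies, but the tail $\int_T^\infty\tr(e^{-t\beta^2D^2})\tilde{g}_{\beta\mu}(t)\,dt$, although controlled by the exponential decay $\tilde g(t)\sim 2\pi^2e^{-\pi^2t}$ and by $\tr(e^{-t\beta^2D^2})\le\tr(e^{-T\beta^2D^2})$, only gives a bound of the shape $\lesssim\beta^{-2z_{\max}}e^{-\pi^2T}$ for fixed $T$, which is not beyond all orders. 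The clean resolution is the Mellin/contour-shift argument packaged in the heat-expansion lemma of Appendix~\ref{sect_heat-exp}, whose uniform remainder estimates produce the sharp $o_0(\beta^{2L})$ bound. Consequently the real work is to verify that $\tilde{g}_{\beta\mu}$ meets the hypotheses of that lemma --- in particular that its decay and the extra $\beta$-dependence through the factor $e^{-t(\beta\mu)^2}$ do not disturb those estimates --- after which the proposition follows by direct application.
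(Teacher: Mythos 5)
Your reduction is the right one and matches the paper's intended mechanism: the identity $\tr(h_{\beta\mu}(\beta D))=\int_0^\infty \tr(e^{-t\beta^2D^2})\,\tilde g_{\beta\mu}(t)\,dt$, term-by-term integration of the truncated expansion \eqref{eq_ht} producing $\psi_l(\beta,\mu)=\sum_{z\in X_l}\beta^{-2z}\gamma_{\beta\mu}(-z)$ via the definition \eqref{spectral_action_coeff}, and the order estimates from $\gamma_{\beta\mu}(-z)\to\gamma(-z)$. (The paper states this proposition without a displayed proof; the argument it relies on is the one written out for the parallel average-energy statement, Proposition \ref{prop_ave_eng_D_1_mu}.) But the step you explicitly leave open --- the remainder control --- is exactly where the content lies, and the tool you point to does not exist: Appendix \ref{sect_heat-exp} contains no Mellin/contour-shift lemma for pushing the heat expansion through the weighted integral; it only derives expansions of $\tr(e^{-t(|D|-\mu)^2})$ and $\tr(e^{-t(D^2-\mu)^2})$ from \eqref{eq_ht}. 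So as written your proof has a genuine gap at its crucial point.

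The missing argument is elementary and is the one used in the proof of Proposition \ref{prop_ave_eng_D_1_mu}, following \cite[Theorem 3.20]{Eckstein:2019dcb}. Write $R_L(s)=\tr(e^{-sD^2})-\sum_{l=0}^L\rho_l(s)$. What your fixed-cutoff split misses is that $R_L$ is controlled at \emph{both} ends: $R_L(s)=o(s^{r_L})$ as $s\to0^+$ and $R_L(s)=O(s^{r_{L+1}})$ as $s\to\infty$, since $\tr(e^{-sD^2})$ stays bounded for large $s$ while each $\rho_l(s)=\sum_{z\in X_l}a_zs^{-z}$ is $o(s^{r_{l+1}})$ there. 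Given $\epsilon>0$ pick $\delta$ with $|R_L(s)|\le\epsilon s^{r_L}$ for $s\le\delta$ and $M$ with $|R_L(s)|\le Ms^{r_{L+1}}$ for $s\ge\delta$; splitting the $t$-integral at the $\beta$-dependent cutoff $t=\delta\beta^{-2}$ (not at a fixed $T$) yields
$$
\Big|\int_0^\infty R_L(t\beta^2)\,\tilde g_{\beta\mu}(t)\,dt\Big|
\le
\epsilon\,\beta^{2r_L}\int_0^\infty t^{r_L}|\tilde g_{\beta\mu}(t)|\,dt
+
M\,\beta^{2r_{L+1}}\int_0^\infty t^{r_{L+1}}|\tilde g_{\beta\mu}(t)|\,dt,
$$
and both moments are finite and uniformly bounded for small $\beta$ because $|\tilde g_{\beta\mu}|\le|\tilde g|$, with $\tilde g$ rapidly decreasing at $0$ and exponentially decaying at $\infty$. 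Since $r_{L+1}>r_L$, the remainder is $o_0(\beta^{2r_L})$; the same two-sided bound applied to the single term $\rho_l$ gives $\psi_l(\beta,\mu)=o_0(\beta^{2r_l})$, which is the bound your third step actually produces and is the intended reading of the proposition (compare the exponents in the analogous propositions elsewhere in the paper; note that $o_0(\beta^{2r_l})$ does not imply $o_0(\beta^{r_l})$ when $r_l<0$). With this substituted for your appeal to a contour-shift lemma, your proof closes and coincides with the paper's.
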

\noindent Here we refer to Appendix \ref{Appendix_asy_exp} for the notation $r_l$ in the above Proposition.

\black

\subsection{The average energy of $K_{f,\mu}$}\label{section_fermi_av_eng}
Now we shall consider the average energy $\langle K_{f,\mu} \rangle_\beta = \langle d\Gamma H_{f,\mu}\rangle_\beta$.
Let
\begin{equation*}
Z_f
=
\tr(e^{-\beta d\Gamma H_{f,\mu}})
\end{equation*}
be the corresponding canonical partition function.
 Recall  the formula \eqref{eq_def_av_eng} for  average energy. We see that 
 the average energy with respect to the operator $d\Gamma H_{f,\mu}$ is given by
\black
\begin{equation*}
\langle d\Gamma H_{f,\mu}\rangle_\beta
=
-\frac{\partial }{\partial \beta}\log Z_{f}
=
\tr\left(\frac{H_{f,\mu}}{1+e^{\beta H_{f,\mu}}}\right)
\equiv \frac{1}{\beta}\tr(u_{\beta\mu}(\beta D))
,
\end{equation*}
in terms of the function $u_{ \mu}(x):=\frac{\sqrt{x^2+\mu^2}}{1+e^{\sqrt{x^2+\mu^2}}}$. 

\black

We observe that the function $u_{\mu}(x)$ is just the first part of the function $h_\mu(x)$ used to describe the entropy above.
Let us first show that $u_\mu(\sqrt x)$ can be expressed as a Laplace transform.
Indeed, since we have the expansion 
\begin{equation*}
u_\mu(x)
=
\frac{\sqrt{x^2+\mu^2}}{1+e^{\sqrt{x^2+\mu^2}}}
=
\sum_{n=1}^\infty (-1)^{n+1}\sqrt{x^2+\mu^2}e^{-n\sqrt{x^2+\mu^2}},
\end{equation*}
and 
({\em cf.} e.g. \cite[P146, Formula(29)]{bateman1954tables})
\black
\begin{equation*}
\sqrt{x}e^{-n\sqrt{x}}
=
\frac{1}{\sqrt{\pi}}\int_0^\infty t^{-5/2}\left(\frac{n^2}{4}-\frac{t}{2}\right)e^{-\frac{n^2}{4t}}e^{-tx}dt,
\end{equation*}
for a fixed $\mu<0$, we obtain 
\begin{equation*}
u_\mu(x)
=
\sum_{n=1}^\infty(-1)^{n+1}
\frac{1}{\sqrt{\pi}}\int_0^\infty t^{-5/2}\left(\frac{n^2}{4}-\frac{t}{2}\right)e^{-\frac{n^2}{4t}}e^{-t(x^2+\mu^2)}dt. 
\end{equation*}
Using the Fubini theorem we can switch the order of infinite sum and integral, 
so that
\begin{equation*}
u_\mu(x)
=
\frac{1}{\sqrt{\pi}}\int_0^\infty t^{-5/2}\sum_{n=1}^\infty(-1)^{n+1}\left(\frac{n^2}{4}-\frac{t}{2}\right)e^{-\frac{n^2}{4t}}e^{-t(x^2+\mu^2)}dt, \qquad x\geq 0. 
\end{equation*}
In order to obtain the sought-for Laplace transform, we introduce
\begin{equation*}
r_{\mu}(t)
:=
\frac{1}{\sqrt{\pi}} t^{-5/2}\sum_{n=1}^\infty(-1)^{n+1}\left(\frac{n^2}{4}-\frac{t}{2}\right)e^{-\frac{n^2}{4t}}e^{-t\mu^2}. 
\end{equation*}
Then we obtain the following expression for the Laplace transform of $r_{\mu}$:
\begin{equation*}
u_\mu(\sqrt x)
=
\int_0^\infty r_{\mu}(t)e^{-tx}dt, \qquad \mu<0,  \quad  x\geq 0.
\end{equation*}

\begin{figure}[h]

\begin{subfigure}{0.49\textwidth}
\includegraphics[width=0.9\linewidth, height=5cm]{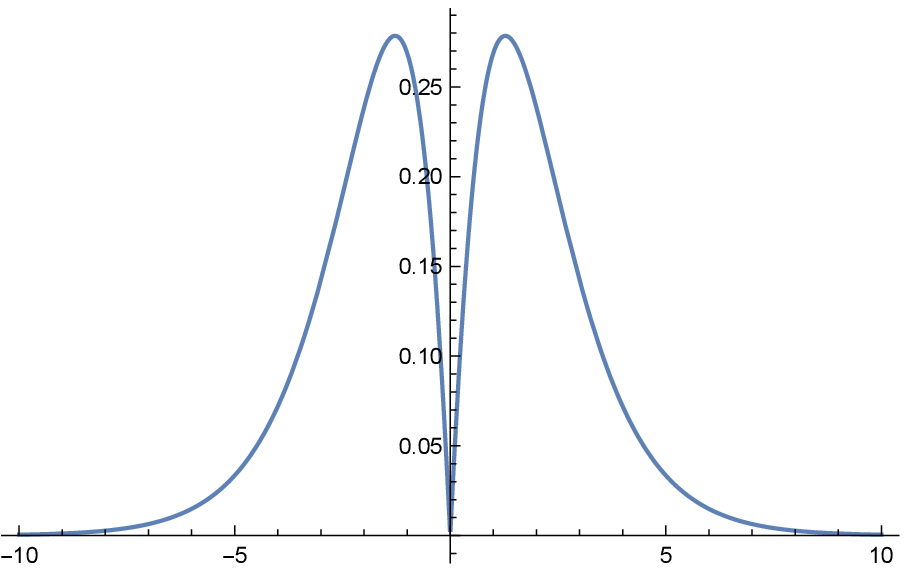}
\caption{The image of $u_{0}(x)$}
\end{subfigure}
\begin{subfigure}{0.49\textwidth}
\includegraphics[width=0.9\linewidth, height=5cm]{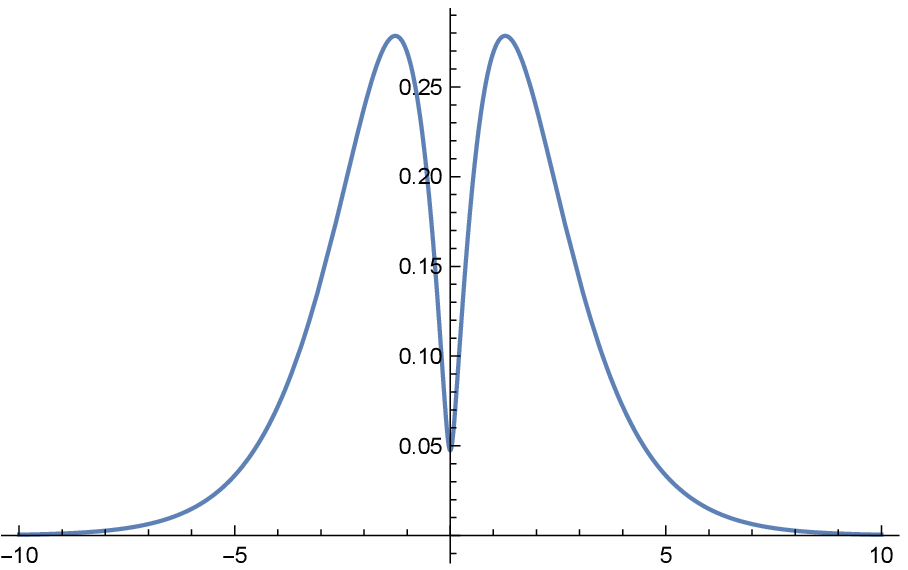}
\caption{The image of $u_{-0.1}(x)$}
\end{subfigure}

\label{fig:image2}
\end{figure}

When $a<0$, the spectral action coefficient of $t^a$ is given by
\begin{equation*}
\omega_{\mu}(a)
=
\int_0^\infty r_\mu(t)t^a dt
=
\frac{1}{\Gamma(-a)}\int_0^\infty u_\mu(\sqrt{x})x^{-a-1}dx.
\end{equation*} 
As before, we can express $\omega_\mu(a)$ in terms of the modified Bessel functions of the second kind:

\begin{lemma}\label{lemma_fermi_average_energy_1nd_eq}
For any fixed chemical potential $\mu<0$, the function $\omega_\mu(a)$ is given by
\begin{equation}\label{formula_1st_fermi_average_energy}
\omega_\mu(a)
=
\frac{2|2\mu|^{-a-\frac{1}{2}}}{\sqrt{\pi}}\sum_{n=1}^\infty (-1)^{n}\left(2a n^{a-\frac{1}{2}}K_{-a+\frac{1}{2}}\left(n|\mu|\right)- n^{a+\frac{1}{2}}|\mu| K_{-a-\frac{1}{2}}\left(n|\mu|\right)\right).
\end{equation}
Moreover, it can be extended to an entire function in $a$.
\end{lemma}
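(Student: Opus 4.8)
The plan is to reduce the whole computation to the moment integral \eqref{2int} already established in Lemma \ref{1intprop}, and then to argue entireness exactly as in the proof of Proposition \ref{thm_fermi_coeff}.

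First, for $a<0$ I would substitute $x=y^2$ in the defining integral for $\omega_\mu(a)$, which turns it into
$$
\omega_\mu(a)=\frac{2}{\Gamma(-a)}\int_0^\infty u_\mu(y)\,y^{-2a-1}\,dy
=\frac{2}{\Gamma(-a)}\int_0^\infty \frac{\sqrt{y^2+\mu^2}}{e^{\sqrt{y^2+\mu^2}}+1}\,y^{-2a-1}\,dy.
$$
This is precisely the left-hand side of \eqref{2int} with exponent $\nu=-2a-1$; note that the hypothesis $\nu>-1$ of Lemma \ref{1intprop} translates into $a<0$, which is exactly the range in which $\omega_\mu(a)$ is defined by the integral. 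Substituting $\nu=-2a-1$, so that $\tfrac\nu2=-a-\tfrac12$, $\tfrac\nu2+1=-a+\tfrac12$, $\tfrac{\nu+1}2=-a$ and $1+\nu=-2a$, the factor $\Gamma\!\left(\tfrac{\nu+1}2\right)=\Gamma(-a)$ produced by \eqref{2int} cancels against the $\Gamma(-a)^{-1}$ in front. Collecting the two Bessel contributions and absorbing the overall minus sign into $(-1)^n=-(-1)^{n+1}$ then yields formula \eqref{formula_1st_fermi_average_energy} for all $a<0$.

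Second, I would show that the series on the right of \eqref{formula_1st_fermi_average_energy} defines an entire function of $a$, thereby furnishing the analytic continuation of $\omega_\mu$. This step repeats the argument in Proposition \ref{thm_fermi_coeff}: for fixed $z>0$ each $K_\nu(z)$ is entire in $\nu$, and the bound $|K_\nu(z)|\le K_R(z)$ valid for $|\nu|\le R$, combined with the large-argument asymptotics $K_\nu(z)\sim\sqrt{\pi/2z}\,e^{-z}$ from Lemma \ref{lemma_bessel_asymp}, lets me dominate the series on any disk $|a|\le R$ by a convergent series of the form $\sum_n C\,n^{R+1}K_{R+1}(n|\mu|)$. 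The prefactor $|2\mu|^{-a-\frac12}$ and the polynomial factor $2a$ are entire and locally bounded, so the whole expression converges locally uniformly and hence is entire. Since it agrees with $\omega_\mu(a)$ for $a<0$, it is the desired extension, and \eqref{formula_1st_fermi_average_energy} then holds for all $a$.

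The computation is essentially bookkeeping: the only point requiring care is the precise matching of the exponents $\tfrac\nu2$ and $\tfrac\nu2+1$ to the orders $-a-\tfrac12$ and $-a+\tfrac12$, together with the $(-1)^{n+1}\mapsto(-1)^n$ sign flip, when passing from \eqref{2int} to \eqref{formula_1st_fermi_average_energy}. The convergence and entireness step is a near-verbatim repetition of Proposition \ref{thm_fermi_coeff}, so I do not expect any genuine obstacle beyond careful indexing of the two Bessel terms.
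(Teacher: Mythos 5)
Your proof is correct and takes essentially the same route as the paper: the formula \eqref{formula_1st_fermi_average_energy} is indeed obtained by specializing the moment formula \eqref{2int} of Lemma \ref{1intprop} at $\nu=-2a-1$ (so that the hypothesis $\nu>-1$ corresponds exactly to $a<0$ and $\Gamma\left(\frac{\nu+1}{2}\right)=\Gamma(-a)$ cancels the prefactor), and the paper's own one-line proof of entireness is precisely your repetition of the locally uniform convergence argument from Proposition \ref{thm_fermi_coeff}. Your exponent bookkeeping and the $(-1)^{n+1}\mapsto(-1)^{n}$ sign flip check out exactly against the stated formula.
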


\begin{proof}
Taking any $\mu<0$, and using the same argument as in the proof of Proposition \ref{thm_fermi_coeff}, we can show that $\omega_\mu(a)$ can be extended to an entire function.
\end{proof}

A more explicit expression for $\omega_\mu(a)$ can again be found using the Poisson summation formula.

\begin{lemma}\label{lemma_fermi_average_energy_2nd_eq}
For any fixed chemical potential $\mu<0$,
we can express  $\omega_\mu(a)$ as
\begin{equation}\label{formula_2st_fermi_average_energy}
\omega_\mu(a)
=
\Gamma(a+1)\sum_{n=-\infty}^{\infty}\frac{(2n+1)^2\pi^2}{((2 n+1)^2\pi^2+\mu^2)^{a+1}}
-
\frac{|\mu|^{1-2a}}{4\sqrt{\pi}}\Gamma\left(a-\frac{1}{2}\right).
\end{equation}
\end{lemma}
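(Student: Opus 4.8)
The plan is to mirror the proof of Proposition~\ref{fermi_spectral_action_coeff_2nd_prop}, replacing the second-order Poisson summation used there for the entropy by its undifferentiated (zeroth-order) version. Starting from the Bessel expression \eqref{formula_1st_fermi_average_energy} of Lemma~\ref{lemma_fermi_average_energy_1nd_eq}, I first use $K_{-a+1/2}=K_{a-1/2}$ and $K_{-a-1/2}=K_{a+1/2}$ to put each of the two series in the homogeneous form $\sum_{n\ge1}(-1)^n n^\nu K_\nu(n|\mu|)$, with $\nu=a-\tfrac12$ in the first and $\nu=a+\tfrac12$ in the second. The decisive structural point is that here the power of $n$ equals the Bessel order $\nu$ (rather than $\nu+2$ as in the entropy case); consequently the $k=0$ term in the Poisson summation no longer vanishes, and it is exactly this term that will generate the non-series summand $-\tfrac{|\mu|^{1-2a}}{4\sqrt\pi}\Gamma(a-\tfrac12)$.

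Next I would record the required summation identity. Writing $\phi_\nu(x)=((2x+1)^2\pi^2+\mu^2)^{-(\nu+1/2)}$ and using the cosine integral
\[
\int_0^\infty \frac{\cos(ku)}{(u^2+\mu^2)^{\nu+1/2}}\,du
= \frac{\sqrt\pi}{\Gamma(\nu+\frac12)}\left(\frac{|k|}{2|\mu|}\right)^\nu K_\nu(|\mu|\,|k|),
\]
exactly as in Proposition~\ref{fermi_spectral_action_coeff_2nd_prop} (the substitution $u=(2x+1)\pi$ producing the alternating factor $(-1)^k$), Poisson summation $\sum_n\phi_\nu(n)=\sum_k\widehat{\phi_\nu}(k)$ gives
\[
\sum_{n\ge1}(-1)^n n^\nu K_\nu(n|\mu|)
=\frac{\sqrt\pi\,\Gamma(\nu+\frac12)(2|\mu|)^\nu}{2}\,\Sigma_{\nu+\frac12}
-2^{\nu-2}\Gamma(\nu)|\mu|^{-\nu},
\]
where $\Sigma_s:=\sum_{m=-\infty}^\infty((2m+1)^2\pi^2+\mu^2)^{-s}$ and the final term is the $k=0$ contribution, evaluated through $K_\nu(z)\sim\tfrac12\Gamma(\nu)(z/2)^{-\nu}$ as $z\to0$.

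I would then substitute this with $\nu=a-\tfrac12$ into the first series of \eqref{formula_1st_fermi_average_energy} and with $\nu=a+\tfrac12$ into the second, and collect the two types of contributions. The series parts combine into $\Gamma(a+1)\bigl(\Sigma_a-\mu^2\Sigma_{a+1}\bigr)$, and since $(2m+1)^2\pi^2=((2m+1)^2\pi^2+\mu^2)-\mu^2$ this telescopes into
\[
\Gamma(a+1)\sum_{m=-\infty}^\infty\frac{(2m+1)^2\pi^2}{((2m+1)^2\pi^2+\mu^2)^{a+1}},
\]
the first summand of \eqref{formula_2st_fermi_average_energy}. The two $k=0$ boundary terms combine into a multiple of $\Gamma(a+\tfrac12)-a\,\Gamma(a-\tfrac12)$, which by $\Gamma(a+\tfrac12)=(a-\tfrac12)\Gamma(a-\tfrac12)$ collapses to $-\tfrac12\Gamma(a-\tfrac12)$, yielding precisely the second summand.

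The step I expect to be most delicate is the bookkeeping that turns these pieces into the stated coefficients: one must track the powers of $2$ and of $|\mu|$ completely consistently (in particular the factor of $|\mu|$ arising from the extra $\sqrt{x^2+\mu^2}$ in $u_\mu$, i.e. from \eqref{2int}), and one must correctly evaluate and combine the two nonvanishing $k=0$ terms, whose cancellation $-a\Gamma(a-\tfrac12)+\Gamma(a+\tfrac12)=-\tfrac12\Gamma(a-\tfrac12)$ is what fixes both the constant $\tfrac14$ and the Gamma-factor. Finally, the manipulation above is legitimate only where the relevant series converge, say for $\mathrm{Re}(a)$ large; since $\omega_\mu(a)$ is entire by Lemma~\ref{lemma_fermi_average_energy_1nd_eq} and the right-hand side of \eqref{formula_2st_fermi_average_energy} is meromorphic in $a$, the identity then extends to all $a\in\mathbb{C}$ by analytic continuation, just as at the end of Proposition~\ref{fermi_spectral_action_coeff_2nd_prop}.
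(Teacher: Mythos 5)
Your proposal is correct and is essentially the paper's own proof written out in full detail: the paper likewise derives, from \eqref{Alt_0_order_Fourier_transform} and Poisson summation, the identity $\sum_{n\geq 1}(-1)^n n^\nu K_\nu(zn)=\frac{\sqrt{\pi}}{2}\Gamma\left(\nu+\frac{1}{2}\right)(2z)^\nu\sum_{n\in\mathbb{Z}}\left((2\pi n+\pi)^2+z^2\right)^{-\nu-\frac{1}{2}}-\frac{\Gamma(\nu)}{4}\left(\frac{2}{z}\right)^{\nu}$ (your zeroth-order Poisson step, with the nonvanishing $k=0$ term), applies it to \eqref{formula_1st_fermi_average_energy} for $a>\frac{1}{2}$ with $\nu=a-\frac{1}{2}$ and $\nu=a+\frac{1}{2}$, and extends to all $a\in\mathbb{C}$ by the entirety of $\omega_\mu$, exactly as you do. The one point worth recording is that your (correct) bookkeeping tacitly uses the prefactor $(2|\mu|)^{-a+\frac{1}{2}}/\sqrt{\pi}$ rather than the printed $2\,|2\mu|^{-a-\frac{1}{2}}/\sqrt{\pi}$ in \eqref{formula_1st_fermi_average_energy}, which as printed is short one factor of $|\mu|$ (a typo traceable to the missing overall power $|\mu|^{\frac{\nu}{2}+1}$ in \eqref{2int}); with that correction the series parts combine to $\Gamma(a+1)\left(\Sigma_a-\mu^2\Sigma_{a+1}\right)$ with $\Sigma_s=\sum_{m\in\mathbb{Z}}\left((2m+1)^2\pi^2+\mu^2\right)^{-s}$, and the two $k=0$ terms, via $\Gamma\left(a+\frac{1}{2}\right)-a\,\Gamma\left(a-\frac{1}{2}\right)=-\frac{1}{2}\Gamma\left(a-\frac{1}{2}\right)$, to $-\frac{|\mu|^{1-2a}}{4\sqrt{\pi}}\Gamma\left(a-\frac{1}{2}\right)$, landing exactly on \eqref{formula_2st_fermi_average_energy} as you state.
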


\begin{proof}
Using   \eqref{Alt_0_order_Fourier_transform} and  applying Poisson's summation formula, we obtain, for any $\nu>0$ and $z>0$,
\begin{equation*}
\sum_{n=1}^\infty (-1)^n n^\nu K_{\nu}(z\, n)=\frac{\sqrt{\pi}}{2} \Gamma\left(\nu+\frac{1}{2}\right)(2z)^\nu\sum_{n=-\infty}^\infty\frac{1}{((2\pi n + \pi)^2+z^2)^{\nu+\frac{1}{2}}}-\frac{\Gamma(\nu)}{4}\left(\frac{2}{z}\right)^\nu.
\end{equation*} 
When $a>\frac{1}{2}$, 
if we apply this formula to \eqref{formula_1st_fermi_average_energy},
we can deduce the equation \eqref{formula_2st_fermi_average_energy}.
Now since $\omega_\mu(a)$ is an entire function, we conclude that  \eqref{formula_2st_fermi_average_energy} is valid in the whole complex plane.
\end{proof}
Moreover,
when $\mu\in(-\pi,0)$,
using a similar argument as in the proof of Lemma \ref{prop_gamma_mu_xi}
we can express the function $\omega_{\mu}$ in terms of $\xi$ function as well.

\begin{lemma}
When $\mu\in (-\pi,0)$,
we obtain the equation
\begin{equation*}
\omega_{\mu}(a)
=
\sum_{k=0}^\infty(-1)^k\frac{2(1-2^{-2(a+k)})\xi(2(a+k))}{(2a+2k-1)k!\pi^{a+k}}\mu^{2k}
-
\frac{\Gamma(a-1/2)}{4\sqrt{\pi}}|\mu|^{1-2a}.
\end{equation*}
\end{lemma}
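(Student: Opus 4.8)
The plan is to follow the strategy of Lemma \ref{prop_gamma_mu_xi}, now starting from the Poisson-summation expression \eqref{formula_2st_fermi_average_energy} for $\omega_\mu(a)$ rather than from \eqref{2nd_eq__fermi_entropy}. The second summand $-\frac{|\mu|^{1-2a}}{4\sqrt{\pi}}\Gamma\left(a-\frac12\right)$ appearing there is already in the desired form and will simply be carried through unchanged, so all the work concerns the first summand
\[
\Gamma(a+1)\sum_{n=-\infty}^{\infty}\frac{(2n+1)^2\pi^2}{((2n+1)^2\pi^2+\mu^2)^{a+1}}.
\]

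First I would observe that the hypothesis $\mu\in(-\pi,0)$ is exactly what guarantees $\mu^2<\pi^2\le (2n+1)^2\pi^2$ for every $n\in\mathbb{Z}$, so that each summand may be expanded in a generalized binomial series in the small parameter $\mu^2/((2n+1)^2\pi^2)$:
\[
\frac{(2n+1)^2\pi^2}{((2n+1)^2\pi^2+\mu^2)^{a+1}}
=
\sum_{k=0}^\infty\frac{(-1)^k\Gamma(a+1+k)}{k!\,\Gamma(a+1)}\,\frac{\mu^{2k}}{((2n+1)^2\pi^2)^{a+k}}.
\]
Next I would interchange the two summations, over $n$ and over $k$; since the resulting double series converges absolutely for $\textrm{Re}(a)$ large, the interchange is justified there, and the identity then extends to all $a$ by analytic continuation, using that $\omega_\mu(a)$ is entire (Lemma \ref{lemma_fermi_average_energy_1nd_eq}). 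Carrying out the $n$-sum yields $\sum_{n=-\infty}^\infty|2n+1|^{-2(a+k)}=2(1-2^{-2(a+k)})\zeta(2(a+k))$, so the first summand becomes
\[
2\sum_{k=0}^\infty\frac{(-1)^k\Gamma(a+1+k)}{k!}\,(1-2^{-2(a+k)})\,\zeta(2(a+k))\,\pi^{-2(a+k)}\,\mu^{2k}.
\]

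Finally I would convert $\zeta$ into $\xi$ through $\xi(s)=\frac{s(s-1)}{2}\pi^{-s/2}\Gamma(s/2)\zeta(s)$ evaluated at $s=2(a+k)$, and simplify the accumulated $\Gamma$- and $\pi$-factors. The key cancellations are $\Gamma(a+1+k)=(a+k)\Gamma(a+k)$ together with the factor $(a+k)(2a+2k-1)$ produced by $\tfrac{s(s-1)}{2}$, which reduce the prefactor of the $k$-th term to $\frac{2(1-2^{-2(a+k)})}{(2a+2k-1)k!\,\pi^{a+k}}$, giving exactly the stated series. The main point to be careful about is the interchange of the two sums and the ensuing analytic-continuation argument; the algebraic bookkeeping of the $\Gamma$- and $\pi$-powers in the $\zeta\to\xi$ passage is routine but must be tracked exactly, so that the poles of $\zeta$ at $2(a+k)=1$ are seen to be absorbed by the $(s-1)$ factor in $\xi$, leaving each term regular.
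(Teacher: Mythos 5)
Your proof is correct and is essentially the paper's own argument: the paper gives no separate proof of this lemma, saying only that one argues ``as in the proof of Lemma \ref{prop_gamma_mu_xi}'', which is exactly what you do --- binomially expand the Poisson-summation formula \eqref{formula_2st_fermi_average_energy} in the small parameter $\mu^2/((2n+1)^2\pi^2)$ (legitimate precisely because $|\mu|<\pi$), interchange the sums with an absolute-convergence-plus-analytic-continuation justification, evaluate $\sum_{n\in\mathbb{Z}}|2n+1|^{-2(a+k)}=2(1-2^{-2(a+k)})\zeta(2(a+k))$, and convert $\zeta$ to $\xi$ using $\Gamma(a+1+k)=(a+k)\Gamma(a+k)$, carrying the $\Gamma(a-\tfrac12)|\mu|^{1-2a}$ term along unchanged. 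One small correction to your closing remark: each term of the resulting series is \emph{not} regular at $a=\tfrac12-k$, since the denominator $(2a+2k-1)$ produces a simple pole there ($\xi(1)=\tfrac12\neq 0$); that pole is cancelled not within the term but against the corresponding pole of $\frac{\Gamma(a-1/2)}{4\sqrt{\pi}}|\mu|^{1-2a}$, consistently with $\omega_\mu(a)$ being entire (Lemma \ref{lemma_fermi_average_energy_1nd_eq}), so at those exceptional values the stated identity is to be read as a limit.
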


If we write
\begin{equation*}
\omega(a):=\frac{2(1-2^{-2a})\xi(2a)}{(2a-1)\pi^{a}},
\end{equation*}
we can then express $\omega_\mu(a)$ as
\begin{equation}\label{eq_espresses_omega_mu}
\omega_\mu(a)
=
\sum_{k=0}^\infty(-1)^k\frac{\omega(a+k)}{k!}\mu^{2k}
-
\frac{\Gamma(a-1/2)}{4\sqrt{\pi}}|\mu|^{1-2a},\quad \mu\in(-\pi,0),
\end{equation}
Hence with the assumption of a heat expansion of the form \eqref{eq_ht},
we can express the average energy as follows:
\begin{prop}\label{prop_ave_eng_D_1_mu}
Assume we have the heat trace expansion $\tr(e^{-tD^2})\sim \sum_{l}\rho_l(t)$,
with respect to the scale $r_0 < r_1 < \cdots< r_l <\cdots$ we have
$\rho_l(t)=o_0(t^{r_{l}})$ and $\rho_l(t)=o_\infty(t^{r_l+1})$.
If we write $\psi_l(\beta, \mu):=\frac{1}{\beta}\sum\limits_{z\in X_l}\beta^{-2z}\omega_{\beta\mu}(-z)$
then we have the following asymptotic expansion for the average energy $\langle K_{f,\mu}\rangle_\beta$:
$$
\langle K_{f,\mu}\rangle_\beta = \frac{1}{\beta}\tr(u_{\beta\mu}(\beta D))
\sim
\sum_{l}\psi_l(\beta,\mu),\quad |\beta\mu|\to 0^+, |\mu|\to \infty,
$$
in the sense that 
$$
\frac{1}{\beta}\tr(u_{\beta\mu}(\beta D))
-
\sum_{l=0}^L\psi_l(\beta, \mu)
=
o_0(\beta^{2r_L-1}\omega_{\beta\mu}(r_L)),
\quad 
|\beta\mu|\to 0^+,\quad |\mu|\to \infty,
$$
and each term $\psi_l(\beta, \mu)=o_0(\beta^{2r_l-1}\omega_{\beta\mu}(r_l))$.
\end{prop}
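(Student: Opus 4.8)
The plan is to turn $\tr(u_{\beta\mu}(\beta D))$ into a Mellin-type pairing of the Laplace density $r_{\beta\mu}$ with the heat trace of $D^2$, insert the assumed heat expansion, and read off the terms $\psi_l$. Starting from the identity $\langle K_{f,\mu}\rangle_\beta = \frac1\beta\tr(u_{\beta\mu}(\beta D))$ obtained above and the Laplace representation $u_{\beta\mu}(\sqrt x)=\int_0^\infty r_{\beta\mu}(t)e^{-tx}\,dt$, I would apply the spectral calculus with $x=\beta^2D^2$ and interchange the trace with the $t$-integral, giving
$$\frac1\beta\tr\big(u_{\beta\mu}(\beta D)\big)=\frac1\beta\int_0^\infty r_{\beta\mu}(t)\,\tr\big(e^{-t\beta^2D^2}\big)\,dt.$$
The interchange is legitimate for $\mu<0$ because $e^{-t\beta^2D^2}$ is trace-class for each $t>0$ under the standing hypothesis on $e^{-\beta|D|}$, and $r_{\beta\mu}$ is absolutely integrable against $\tr(e^{-t\beta^2D^2})$ thanks to the Gaussian factors $e^{-n^2/4t}$ and $e^{-t\mu^2}$ appearing in its definition.

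Next I would substitute the heat trace expansion $\tr(e^{-sD^2})\sim\sum_l\rho_l(s)$ of the form \eqref{eq_ht} with $s=t\beta^2$. For the finite partial sum, the $l$-th term decomposes over the exponents $z\in X_l$, and since each monomial of $\rho_l$ scales as $(\beta^2t)^{-z}$, pairing it with $r_{\beta\mu}$ and using the moment formula $\omega_{\beta\mu}(a)=\int_0^\infty r_{\beta\mu}(t)t^a\,dt$ produces exactly the combination $\frac1\beta\sum_{z\in X_l}\beta^{-2z}\omega_{\beta\mu}(-z)=\psi_l(\beta,\mu)$ (with the heat coefficients incorporated into the sum over $X_l$ as in Appendix \ref{sect_heat-exp}). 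Here it is essential that $\omega_{\beta\mu}$ has been extended to an entire function of its argument (Lemma \ref{lemma_fermi_average_energy_1nd_eq}), so that the moments $\omega_{\beta\mu}(-z)$ are defined for every exponent occurring in the expansion, including those for which the naive integral $\frac1{\Gamma(-a)}\int_0^\infty u_{\beta\mu}(\sqrt x)x^{-a-1}\,dx$ fails to converge.

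The main obstacle is the uniform control of the remainder. Writing $R_L(s):=\tr(e^{-sD^2})-\sum_{l=0}^L\rho_l(s)$, I must show that
$$\frac1\beta\int_0^\infty r_{\beta\mu}(t)\,R_L(t\beta^2)\,dt=o_0\big(\beta^{2r_L-1}\omega_{\beta\mu}(r_L)\big)$$
in the regime $|\beta\mu|\to0^+$, $|\mu|\to\infty$. I would split the $t$-integral at a fixed threshold. On the region where $s=t\beta^2$ is small, the expansion property $R_L(s)=o_0(s^{r_L})$ lets me dominate the integrand by $(t\beta^2)^{r_L}$ times a factor tending to $0$, so this part is bounded by $\beta^{2r_L}\omega_{\beta\mu}(r_L)\cdot o(1)$; after the prefactor $\beta^{-1}$ this is the asserted order. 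On the complementary region where $s$ is large, I would use the bound $\rho_l(s)=o_\infty(s^{r_l+1})$ together with the rapid decay of $r_{\beta\mu}(t)$ to show the contribution is negligible. The delicate point is the bookkeeping of the two coupled parameters: $\beta$ enters both through the rescaling $\beta^2t$ and through the explicit powers $\beta^{-1-2z}$, while $\mu$ enters only through the product $\beta\mu$ inside $r_{\beta\mu}$ and hence inside $\omega_{\beta\mu}$; it is precisely the joint limit $|\beta\mu|\to0^+,\ |\mu|\to\infty$ that makes the remainder uniformly small relative to the reference moment $\omega_{\beta\mu}(r_L)$. The per-term estimate $\psi_l(\beta,\mu)=o_0(\beta^{2r_l-1}\omega_{\beta\mu}(r_l))$ then follows from the same moment bookkeeping applied to a single index $l$, using the asymptotics of the modified Bessel functions recorded in Appendix \ref{sect_bessel}.
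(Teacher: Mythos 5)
Your overall route is the same as the paper's: pass to the Laplace representation $\frac1\beta\int_0^\infty r_{\beta\mu}(t)\tr(e^{-t\beta^2D^2})\,dt$, insert the expansion \eqref{eq_ht}, identify the $\psi_l$ via the moments $\omega_{\beta\mu}(-z)$, and estimate the remainder in the style of Eckstein--Iochum. However, there is a genuine gap at the decisive step, and the heuristic you offer in its place is actually false in the relevant regime. The splitting argument does not make the large-$t$ part of $\int_0^\infty r_{\beta\mu}(t)R_L(t\beta^2)\,dt$ ``negligible'': the decay of $r_{\beta\mu}(t)$ at infinity is governed by the factor $e^{-t(\beta\mu)^2}$, whose rate degenerates precisely as $|\beta\mu|\to0^+$, so there is no uniform rapid decay to invoke. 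What the splitting (or the argument of \cite[Theorem 3.20]{Eckstein:2019dcb}, using $R_L=o_0((t\beta^2)^{r_L})$ and $R_L=O_\infty((t\beta^2)^{r_{L+1}})$) actually yields is the two-term bound
\begin{equation*}
\Big|\frac1\beta\int_0^\infty r_{\beta\mu}(t)R_L(t\beta^2)\,dt\Big|
\leq
\epsilon\,\beta^{2r_L-1}\omega_{\beta\mu}(r_L)+M\,\beta^{2r_{L+1}-1}\omega_{\beta\mu}(r_{L+1}),
\end{equation*}
and the entire content of the proposition is then the comparison
$\beta^{2(r_{L+1}-r_L)}\,\omega_{\beta\mu}(r_{L+1})/\omega_{\beta\mu}(r_L)\to0$. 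You assert this (``it is precisely the joint limit\dots that makes the remainder uniformly small relative to the reference moment'') but never prove it, and it is not obvious: $\omega_{\beta\mu}(r_{L+1})$ typically \emph{diverges} like $|\beta\mu|^{1-2r_{L+1}}$ as $|\beta\mu|\to0$ when $r_{L+1}>\frac12$, so the second term genuinely competes with the first.

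The paper closes this gap using the expansion \eqref{eq_espresses_omega_mu}, which exhibits $\omega_{\beta\mu}(a)$ as a regular part $\omega(a)+O((\beta\mu)^2)$ plus the singular part $-\frac{\Gamma(a-1/2)}{4\sqrt\pi}|\beta\mu|^{1-2a}$, followed by a three-case analysis according to the signs of $1-2r_L$ and $1-2r_{L+1}$. For instance, when $r_L,r_{L+1}>\frac12$ both moments are dominated by their singular parts and the ratio becomes
$\beta^{2(r_{L+1}-r_L)}|\beta\mu|^{2(r_L-r_{L+1})}=|\mu|^{2(r_L-r_{L+1})}$, which tends to $0$ only because $|\mu|\to\infty$ --- the powers of $\beta$ cancel exactly, so $\beta\to0$ alone would not suffice. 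This computation is where the hypothesis $|\mu|\to\infty$ enters and is the irreducible core of the proof; your proposal should carry it out explicitly (and the same comparison is needed again for the per-term estimate $\psi_l(\beta,\mu)=o_0(\beta^{2r_l-1}\omega_{\beta\mu}(r_l))$, which you also dispatch by appeal to ``moment bookkeeping'' and Bessel asymptotics without performing it). The first half of your write-up --- the interchange of trace and integral and the identification of the $\psi_l$ --- matches the paper and is fine.
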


\begin{proof}
By definition,
\begin{equation*}
\begin{aligned}
\langle K_{f,\mu}\rangle_\beta
&=
\frac{1}{\beta}\tr(u_{\beta\mu}(\beta D))
=
\frac{1}{\beta}\int_0^\infty r_{\beta\mu}(t)\tr(e^{-t\beta^2 D^2})dt,
\end{aligned}
\end{equation*}
and 
$$
\tr(e^{-t\beta^2D^2})
=
\sum_{l=0}^L\rho_{l}(t\beta^2) + R_L(t\beta^2),
$$
where the remainder term $R_L(t\beta^2)=o_0((t\beta^2)^{r_L})$,
and $R_L(t\beta^2)=o_\infty((t\beta^2)^{r_{L+1}})$.
we derive that
\begin{equation*}
\frac{1}{\beta}\tr(u_{\beta\mu}(\beta D))
=
\frac{1}{\beta}\sum_{l=0}^L\int_0^\infty r_{\beta\mu}(t)\rho_l(t\beta^2)dt 
+
\frac{1}{\beta}\int_0^\infty r_{\beta\mu}(t)R_L(t\beta^2))dt.
\end{equation*}
Recall that each term $\rho_l(t\beta^2)=\sum\limits_{z\in X_l}a_z(t\beta^2)^{-z}$ so that
$\int_0^\infty r_{\beta\mu}(t)\rho_l(t\beta^2)dt
=
\sum\limits_{z\in X_l}\beta^{-2z}\omega_{\beta\mu}(-z).
$
Now we need to show that the remainder 
$$
\frac{1}{\beta}\int_0^\infty r_{\beta\mu}(t)R_L(t\beta^2))dt=o_0(\beta^{2r_L-1}\omega_{\beta\mu}(r_L)), \textrm{ as } |\beta\mu|\to 0, \,\, |\mu|\to \infty.
$$ 
Since the remainder term $R_L(t\beta^2)=o_0((t\beta^2)^{r_L})$ and $R_L(t\beta^2)=O_\infty((t\beta^2)^{r_{L+1}})$,
using a similar argument as in the proof of \cite[Theorem 3.20]{Eckstein:2019dcb},
for any $\epsilon > 0$, there is an $M > 0$ such that 
\begin{equation}\label{eq_remainder_omega}
\left|\frac{1}{\beta}\int_0^\infty r_{\beta\mu}(t)R_L(t\beta^2))dt\right|
\leq 
\epsilon \beta^{2r_L-1} \omega_{\beta\mu}(r_{L})
+
M \beta^{2r_{L+1}-1}\omega_{\beta\mu}(r_{L+1}).
\end{equation}
We only need to show that
$$
\beta^{2(r_{L+1}-r_L)}\frac{\omega_{\beta\mu}(r_{L+1})}{\omega_{\beta\mu}(r_L)}\to 0, \textrm{ as}\quad |\mu|\to 0, \,\,|\beta\mu|\to 0.
$$
In fact by the formula \eqref{eq_espresses_omega_mu},
when $|\beta\mu|\to 0$ and $|\mu|\to \infty$,
$$
\beta^{2(r_{L+1}-r_L)}\frac{\omega_{\beta\mu}(r_{L+1})}{\omega_{\beta\mu}(r_L)}
\sim
\beta^{2(r_{L+1}-r_L)}\frac{\omega(r_{L+1})-\frac{\Gamma(r_{L+1}-1/2)}{4\sqrt \pi}|\beta\mu|^{1-2r_{L+1}}}{\omega(r_{L})-\frac{\Gamma(r_{L}-1/2)}{4\sqrt \pi}|\beta\mu|^{1-2r_{L}}},
$$
if $1-2r_{L+1} < 1-2r_L < 0,$
$$
\beta^{2(r_{L+1}-r_L)}\frac{\omega_{\beta\mu}(r_{L+1})}{\omega_{\beta\mu}(r_L)}
\sim
\frac{\Gamma(r_{L+1}-1/2)}{\Gamma(r_L-1/2)}|\mu|^{2(r_L-r_{L+1})}\sim 0, 
$$
if $0 < 1-2r_{L+1} < 1-2r_{L}$,
$$
\beta^{2(r_{L+1}-r_L)}\frac{\omega_{\beta\mu}(r_{L+1})}{\omega_{\beta\mu}(r_L)}
\sim
\beta^{2(r_{L+1}-r_L)}\frac{\omega(r_{L+1})}{\omega(r_L)}
\sim
0, 
$$
if $1-2r_{L+1} < 0 < 1-2r_L$,
$$
\beta^{2(r_{L+1}-r_L)}\frac{\omega_{\beta\mu}(r_{L+1})}{\omega_{\beta\mu}(r_L)}
\sim
\frac{\omega(r_{L+1})}{\omega(r_L)}\beta^{2(r_{L+1}-r_L)}-\frac{\Gamma(r_{L+1}-1/2)}{4\sqrt \pi \omega(r_L)}\beta^{1-2r_L}|\mu|^{1-2r_{L+1}}
\sim 0,
$$
hence
$$
\frac{1}{\beta}\int_0^\infty r_{\beta\mu}(t)R_L(t\beta^2)dt
=
o_0(\beta^{2r_L-1} \omega_{\beta\mu}(r_{L})).
$$
Since  each term $\rho_l(t\beta^2)=o_0((t\beta^2)^{r_{l}})$ and $\rho_l(t\beta^2)=o_\infty((t\beta^2)^{r_{l+1}})$,
using a similar argument we obtain that
\[
\psi_l(\beta, \mu)
=
\frac{1}{\beta}\int_{0}^\infty \rho_{l}(t\beta^2)r_{\beta\mu}(t)dt
=
o_0(\beta^{2r_l-1}\omega_{\beta\mu}(r_l)).
\qedhere
\]
\end{proof}

\subsection{The entropy of $\rho_f'$}
We now consider the function $h'_\mu$ defined by 
\begin{equation*}
h_\mu'(x) = \frac{|x|-\mu}{e^{|x|-\mu}+1}+\log\left(1+e^{-(|x|-\mu)}\right).
\end{equation*}
The entropy of the density operator $\rho'_f$ associated to $K_{f,\mu}'$ can then be expressed as
\begin{equation*}
\mathcal{S}(\rho_f')
=
-\tr\left(\rho_{f}'\log{\rho_{f}'}\right)
=
\tr(h_{\beta\mu}'(\beta D)).
\end{equation*}
Notice that 
\begin{equation*}
\tr(h_{\beta\mu}'(\beta D))\label{eq_spectral_action_h_2}
=
\int_0^\infty\Tilde{g}_{\beta\mu}(t)\tr\left(e^{-t\beta^2(D^2-2\mu|D|)}\right)dt,
\end{equation*}
where $\Tilde{g}_{\beta\mu}(t)=\Tilde{g}(t)e^{-t\beta^2\mu^2}.$

We can obtain the asymptotic expansion of $\tr(h_{\beta\mu}'(\beta D))$:

\begin{prop}
With the formula \eqref{eq_ht} and \eqref{eq_rho_l_k} we write
\begin{equation}
\begin{aligned}
\psi_{l,k}(\beta)
&=
\sum_{\substack{z\in X_l\\z+k/2\notin \mathbb{Z}^-}}\frac{1}{2}\Gamma(z+k/2)\Res(\zeta_D(s), z)\gamma_{\beta\mu}(-z+k/2)\beta^{-2z+k}\\ 
&+
\sum_{\substack{z\in X_l\\z+k/2\in \mathbb{Z}^-}}\frac{(-1)^{z+k/2}}{(-(z+k/2))!}\zeta_D(2z)\gamma_{\beta\mu}(-z+k/2)\beta^{-2z+k} . 
\end{aligned}
\end{equation}
We then have the following asymptotic expansion:
\begin{equation}\label{eq_asymp_h_2}
\tr(h_{\beta\mu}'(\beta D))
\sim
\sum_{k,l\geq 0}\frac{(2\mu)^k}{k!}\psi_{l,k}(\beta), \quad \beta\to 0^+,
\end{equation}
and each term $\psi_{l,k}(\beta)=o_0(\beta^{2r_l+k}).$
\end{prop}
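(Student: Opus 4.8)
The plan is to reduce the statement to the scalar Laplace–transform representation of $h'_{\beta\mu}$ recorded just above the proposition and then feed in the heat expansions of $\tr(|D|^k e^{-sD^2})$ supplied by \eqref{eq_rho_l_k}. First I would start from the identity
$$
\tr(h'_{\beta\mu}(\beta D)) = \int_0^\infty \tilde{g}_{\beta\mu}(t)\,\tr\!\left(e^{-t\beta^2(D^2-2\mu|D|)}\right)dt,
\qquad \tilde{g}_{\beta\mu}(t)=\tilde{g}(t)e^{-t\beta^2\mu^2}.
$$
Since $D^2$ and $|D|$ commute and $\mu<0$, the factor $e^{2t\beta^2\mu|D|}$ is a contraction, and writing $D^2-2\mu|D|=(|D|-\mu)^2-\mu^2$ shows that $e^{-t\beta^2(D^2-2\mu|D|)}=e^{-t\beta^2D^2}e^{2t\beta^2\mu|D|}$ is trace-class (the eigenvalues $(\lambda-\mu)^2$ of $(|D|-\mu)^2$ grow like $\lambda^2$, so this semigroup decays faster than $e^{-\beta|D|}$). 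One may therefore expand the second factor in its norm-convergent power series and take the trace term by term:
$$
\tr\!\left(e^{-t\beta^2(D^2-2\mu|D|)}\right)=\sum_{k=0}^\infty \frac{(2\mu)^k (t\beta^2)^k}{k!}\,\tr\!\left(|D|^k e^{-t\beta^2 D^2}\right).
$$
This is where the weights $(2\mu)^k/k!$ in \eqref{eq_asymp_h_2} originate; the interchange of sum and trace is justified by the trace-class bound together with the rapid decay of the heat semigroup.

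Next I would substitute, for each fixed $k$, the expansion \eqref{eq_rho_l_k} of $\tr(|D|^k e^{-sD^2})$ at $s=t\beta^2$, organized by the scales $r_l$. Each scale contribution is a finite sum of monomials $c\,s^{-(z+k/2)}$ indexed by $z\in X_l$: the poles of $\zeta_D$ contribute the $\tfrac12\Gamma(z+k/2)\Res(\zeta_D(s),z)$ coefficients (the case $z+k/2\notin\mathbb{Z}^-$), while the nonnegative integer powers ($z+k/2\in\mathbb{Z}^-$) contribute the $\frac{(-1)^{z+k/2}}{(-(z+k/2))!}\zeta_D(2z)$ coefficients. Integrating each monomial against $\tilde{g}_{\beta\mu}$, the prefactor $(t\beta^2)^k$ shifts both the power of $\beta$ and the order of the moment:
$$
\int_0^\infty \tilde{g}_{\beta\mu}(t)\,(t\beta^2)^k\,c\,(t\beta^2)^{-(z+k/2)}\,dt
= c\,\beta^{-2z+k}\int_0^\infty t^{-z+k/2}\tilde{g}_{\beta\mu}(t)\,dt
= c\,\beta^{-2z+k}\gamma_{\beta\mu}(-z+k/2),
$$
using the definition \eqref{spectral_action_coeff} of $\gamma_{\beta\mu}$. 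Collecting the monomials of scale $l$ reproduces precisely $\psi_{l,k}(\beta)$, and summing over $k,l$ with the weights $(2\mu)^k/k!$ gives the asserted series.

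It then remains to promote these formal manipulations to a genuine asymptotic statement and to read off the order of each term. For the order, every summand of $\psi_{l,k}(\beta)$ carries the factor $\beta^{-2z+k}\gamma_{\beta\mu}(-z+k/2)$; since $z\in X_l$ forces $-z>r_l$ (this is exactly the meaning of $\rho_l(t)=o_0(t^{r_l})$) we get $-2z+k>2r_l+k$, while $\gamma_{\beta\mu}(-z+k/2)$ remains bounded as $\beta\to 0^+$ (it converges to the moment $\int_0^\infty t^{-z+k/2}\tilde{g}(t)\,dt$). Hence $\psi_{l,k}(\beta)=o_0(\beta^{2r_l+k})$. For the asymptotic relation itself I would truncate \eqref{eq_rho_l_k} at level $L$, bound the integral of $\tilde{g}_{\beta\mu}(t)\,(t\beta^2)^k R_{L,k}(t\beta^2)$ against the remainder $R_{L,k}$ (which is $o_0$ as $s\to0$ and controlled as $s\to\infty$), and run the same splitting-of-the-integral argument employed in the proof of Proposition \ref{prop_ave_eng_D_1_mu} and in \cite[Theorem 3.20]{Eckstein:2019dcb}.

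The main obstacle I anticipate is the genuinely new double limit. The sum over $k$ does not appear in the $H_{f,\mu}$ case, so one must control the interplay between the growth in $k$ of the moments $\gamma_{\beta\mu}(-z+k/2)$ and of the heat coefficients $\tr(|D|^k e^{-sD^2})$, and justify that the $k$-summation commutes with the $t$-integration and with the remainder estimates uniformly in $\beta$. A secondary technical point is the separate bookkeeping for the nonnegative integer powers $z+k/2\in\mathbb{Z}^-$, where a pole of $\Gamma$ meets the $\zeta_D$ factor and the monomial becomes an honest power $s^{n}$; keeping the two families of terms apart (and tacitly assuming the poles of $\zeta_D$ stay simple and do not collide with these integer points, so that no logarithmic terms arise) is exactly what the two sums defining $\psi_{l,k}$ encode.
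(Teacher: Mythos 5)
Your proposal follows essentially the route the paper itself takes: the proposition is not proved separately in the text but rests on the Laplace-transform identity $\tr(h'_{\beta\mu}(\beta D))=\int_0^\infty \tilde{g}_{\beta\mu}(t)\tr\bigl(e^{-t\beta^2(D^2-2\mu|D|)}\bigr)dt$ combined with the expansion of $\tr(e^{-t(|D|-\mu)^2})$ from Appendix \ref{sect_heat-exp}, and your term-by-term integration producing $\gamma_{\beta\mu}(-z+k/2)\beta^{-2z+k}$, your separation of the two families of terms in \eqref{eq_rho_l_k}, and your order count (for $z\in X_l$ one has $\re(-2z)>2r_l$, while $\gamma_{\beta\mu}(-z+k/2)$ stays bounded as $\beta\to0^+$ by dominated convergence) are all exactly right. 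The one substantive deviation is your treatment of the $k$-sum: you expand $e^{2t\beta^2\mu|D|}$ into the full infinite power series. That interchange is legitimate pointwise in $(t,\beta)$ — absolute convergence on the spectral side, since $e^{-t\beta^2 D^2+2t\beta^2|\mu|\,|D|}$ is trace class — but it is precisely what creates the ``double limit'' obstacle you flag at the end: an infinite $k$-series gives no uniform-in-$\beta$ control of the tail, and indeed the proposition is only claimed in the truncated sense spelled out in the Remark, with error $o_0(\beta^{2r_0+K})$ after restricting to $k\le K$ and $r_{l+1}<r_0+\frac{K-k}{2}$. The paper closes this gap by never summing to infinity: in Appendix \ref{sect_heat-exp} it Taylor-expands $e^{-t(|\lambda|-\mu)^2}$ to finite order $K$ eigenvalue by eigenvalue, with the explicit remainder bound $|\mathcal{R}_K(t\mu)|\le |2t\mu\lambda|^{K+1}/(K+1)!$, yielding the operator-trace inequality \eqref{eq_exp_expand} and hence the quantified expansion \eqref{eq_ht_mu}; integrating that against $\tilde{g}(t)$ (note $\tilde{g}_{\beta\mu}(t)\tr\bigl(e^{-t\beta^2(D^2-2\mu|D|)}\bigr)=\tilde{g}(t)\tr\bigl(e^{-t\beta^2(|D|-\mu)^2}\bigr)$, so the factor $e^{-t\beta^2\mu^2}$ is absorbed exactly where you put it) gives the proposition. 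So the fix for the obstacle you anticipated is simply to replace your infinite series by the finite Taylor expansion with remainder of \eqref{eq_exp_expand}; with that substitution your argument coincides with the paper's.
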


\begin{remark}
Similar to the heat trace expansion \eqref{eq_ht_mu},
the asymptotic formula \eqref{eq_asymp_h_2} should be read in the sense that
$$
\tr(h_{\beta\mu}' (\beta D))
-
\sum_{\substack{0\leq k\leq K\\r_{l+1} < r_0+\frac{K-k}{2}}}\frac{(2\mu)^k}{k!}\psi_{l,k}(\beta)
=
o_0(\beta^{2r_0+K}).
$$
\end{remark} 

\subsection{The average energy of $K_{f,\mu}'$}
The average energy of $K_{f,\mu}'$ is given by
\begin{equation*}
\langle K_{f,\mu}' \rangle_\beta = \langle d\Gamma H_{f,\mu}'\rangle_\beta  
=
\tr\left(\frac{H_{f,\mu}'}{1+e^{\beta H_{f,\mu}'}}\right)
.
\end{equation*}
If we define
\begin{equation*}
u_\mu'(x)=\frac{|x|-\mu}{1+e^{|x|-\mu}}.
\end{equation*}
then it follows that
\begin{equation*}
\langle d\Gamma H_{f,\mu}'\rangle_\beta
=
\frac{1}{\beta}\tr(u_{\beta\mu}'(\beta D))
=
\frac{1}{\beta}\int_0^\infty r_{\beta\mu}(t)\tr(e^{-t\beta^2(D^2-2\mu |D|)})dt.
\end{equation*}

\begin{prop}
With the formula \eqref{eq_ht} and \eqref{eq_rho_l_k},
we denote by 
$$\psi_{l,k}(\beta,\mu)=\frac{1}{\beta}\int_0^\infty (t\beta^2)^k\rho_{l,k}(t\beta^2) r_{\beta\mu}(t)dt,$$
which can be explicitly expressed as
\begin{equation}\label{eq_psi_l_k}
\begin{aligned}
\psi_{l,k}(\beta, \mu)
&=
\sum_{\substack{z\in X_l\\z+k/2\notin \mathbb{Z}^-}}\Gamma(z+k/2)\Res(\zeta_{D^2}(s), z)\omega_{\beta\mu}(-z+k/2)\beta^{-2z+k-1}\\ 
&+
\sum_{\substack{z\in X_l\\z+k/2\in \mathbb{Z}^-}}\frac{(-1)^{z+k/2}}{(-(z+k/2))!}\zeta_{D^2}(z)\omega_{\beta\mu}(-z+k/2)\beta^{-2z+k-1},
\end{aligned}
\end{equation}
We have the following asymptotic expansion:
\begin{equation*}
\langle d\Gamma H_{f,\mu}'\rangle_\beta
=
\frac{1}{\beta}\tr(u_{\beta\mu}'(\beta D))
\sim
\sum_{k,l\geq 0} \frac{(2\mu)^k}{k!}\psi_{l,k}(\beta, \mu),\quad |\beta\mu|\to 0,\,\, |\mu|\to \infty,
\end{equation*}
and each term 
$$
\frac{(2\mu)^k}{k!}\psi_{l,k}(\beta, \mu)
=
o_0(\beta^{2r_l+k-1}\omega_{\beta\mu}(r_l+k/2)|\mu|^k).
$$
\end{prop}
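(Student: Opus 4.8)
The plan is to start from the Laplace--transform representation
$\langle d\Gamma H_{f,\mu}'\rangle_\beta = \frac{1}{\beta}\int_0^\infty r_{\beta\mu}(t)\tr(e^{-t\beta^2(D^2-2\mu|D|)})\,dt$
established just above, and to reduce everything to the heat trace expansion of the twisted kernel together with the moment machinery already developed for $\omega_\mu$. Since $D^2$ and $|D|$ commute, I would first split $e^{-t\beta^2(D^2-2\mu|D|)} = e^{-t\beta^2 D^2}e^{2\mu t\beta^2|D|}$ and Taylor--expand the second factor to obtain
$$
\tr(e^{-t\beta^2(D^2-2\mu|D|)}) = \sum_{k\geq 0}\frac{(2\mu)^k(t\beta^2)^k}{k!}\tr(|D|^k e^{-t\beta^2 D^2}).
$$
This isolates the $\mu$--dependence in the scalar prefactors $(2\mu)^k(t\beta^2)^k/k!$ and leaves the $\mu$--independent heat traces $\tr(|D|^k e^{-sD^2})$, for which the refined expansion \eqref{eq_rho_l_k} supplies $\tr(|D|^k e^{-sD^2})\sim\sum_l\rho_{l,k}(s)$.

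To derive the explicit formula \eqref{eq_psi_l_k} for $\psi_{l,k}$, I would insert this expansion with $s=t\beta^2$ into the integral and use the moment identity $\omega_{\beta\mu}(a)=\int_0^\infty t^a r_{\beta\mu}(t)\,dt$ from Lemma \ref{lemma_fermi_average_energy_1nd_eq}. Writing each scale component as a sum of monomials $s^{-(z+k/2)}$ over $z\in X_l$, a term $(t\beta^2)^k\rho_{l,k}(t\beta^2)$ produces $\beta^{-2z+k}t^{-z+k/2}$, whose $r_{\beta\mu}$--moment is $\beta^{-2z+k}\omega_{\beta\mu}(-z+k/2)$; the factor $1/\beta$ then gives the exponent $\beta^{-2z+k-1}$ appearing in \eqref{eq_psi_l_k}. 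The coefficient follows from the pole structure of $\Gamma(w)\zeta_{D^2}(w-k/2)$, which governs the small-$s$ expansion of $\tr(|D|^k e^{-sD^2})$: at a genuine pole $z$ of $\zeta_{D^2}$ (with $z+k/2\notin\mathbb{Z}^-$) the coefficient is $\Gamma(z+k/2)\Res(\zeta_{D^2},z)$, while at the nonpositive integers where this collides with a pole of $\Gamma$ (that is, $z+k/2\in\mathbb{Z}^-$) the coefficient becomes the regular value $\frac{(-1)^{z+k/2}}{(-(z+k/2))!}\zeta_{D^2}(z)$; these are exactly the two sums in \eqref{eq_psi_l_k}.

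For the asymptotic statement and the per-term order estimate I would follow the blueprint of Proposition \ref{prop_ave_eng_D_1_mu}. For fixed $k$, split $\tr(|D|^k e^{-sD^2}) = \sum_{l=0}^L\rho_{l,k}(s) + R_{L,k}(s)$. Because multiplication by $|D|^k$ shifts the heat scale by $-k/2$, reweighting by $(t\beta^2)^k$ produces a net shift of $+k/2$, so that $(t\beta^2)^k R_{L,k}(t\beta^2)=o_0((t\beta^2)^{r_L+k/2})$ and $O_\infty((t\beta^2)^{r_{L+1}+k/2})$. Feeding this into $\frac1\beta\int_0^\infty (t\beta^2)^k R_{L,k}(t\beta^2) r_{\beta\mu}(t)\,dt$ and arguing as for \eqref{eq_remainder_omega}, the remainder is bounded by $\epsilon\,\beta^{2r_L+k-1}\omega_{\beta\mu}(r_L+k/2)+M\beta^{2r_{L+1}+k-1}\omega_{\beta\mu}(r_{L+1}+k/2)$. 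The crux is then to show $\beta^{2(r_{L+1}-r_L)}\omega_{\beta\mu}(r_{L+1}+k/2)/\omega_{\beta\mu}(r_L+k/2)\to 0$ in the regime $|\beta\mu|\to0$, $|\mu|\to\infty$; this is precisely the case analysis of Proposition \ref{prop_ave_eng_D_1_mu}, using \eqref{eq_espresses_omega_mu} and distinguishing the signs of $1-2(r_{L+1}+k/2)$ and $1-2(r_L+k/2)$ to decide whether the regular part $\omega(\cdot)$ or the singular part $|\beta\mu|^{1-2(\cdot)}$ dominates. The same comparison yields each term's order $\frac{(2\mu)^k}{k!}\psi_{l,k}(\beta,\mu)=o_0(\beta^{2r_l+k-1}\omega_{\beta\mu}(r_l+k/2)|\mu|^k)$.

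Finally I would assemble the $k$--summation. The main obstacle I anticipate is not any single estimate but the bookkeeping of the two simultaneous truncations --- the heat-scale index $l$ and the $\mu$-power index $k$ --- together with the justification of interchanging the double sum with the integral and controlling the Taylor remainder in $\mu$. As in the Remark following the entropy expansion for $\rho_f'$, one must organize the truncation jointly (keeping $l$ up to the scale dictated by $r_0+(K-k)/2$ for each $0\le k\le K$) and verify uniform control in the double limit $|\beta\mu|\to0$, $|\mu|\to\infty$; the delicate point throughout is that the singular contributions $|\beta\mu|^{1-2a}$ in $\omega_{\beta\mu}$ must be shown to be subdominant, which is exactly where the hypothesis $|\mu|\to\infty$ is used.
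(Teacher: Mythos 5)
Your proposal is correct and follows essentially the same route as the paper: the same Laplace-transform representation, the expansion of $\tr(e^{-t\beta^2(|D|-\mu)^2})$ in powers of $2\mu$ against the refined heat coefficients $\rho_{l,k}$ of \eqref{eq_rho_l_k} (the paper packages this as the appendix expansion \eqref{eq_ht_mu}, using a finite Taylor expansion with explicit remainder at the eigenvalue level rather than your full exponential series for $e^{2\mu t\beta^2|D|}$), term-by-term integration against $r_{\beta\mu}$ via the moments $\omega_{\beta\mu}$ to obtain \eqref{eq_psi_l_k}, and the remainder and per-term orders by the case analysis of Proposition \ref{prop_ave_eng_D_1_mu}. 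The joint $(k,l)$ truncation that you flag as the main bookkeeping obstacle is precisely how the paper organizes its remainder $R_K$, so your plan matches the paper's proof in substance.
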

\begin{remark}
The asymptotic expansion is such that
$$
\frac{1}{\beta}\tr(u_{\beta\mu}'(\beta D))
-
\sum_{\substack{0\leq k\leq K\\r_{l+1}\leq r_0+\frac{K-k}{2}}} \frac{(2\mu)^k}{k!}\psi_{l,k}(\beta, \mu)
=
o_0\left(\beta^{2r_0+K}\omega_{\beta\mu}(r_0+\frac{K+1}{2})|\mu|^{K+1}\right).
$$
\end{remark}

\begin{proof}
By definition
$$
\langle d\Gamma H_{f,\mu}'\rangle_\beta
=
\frac{1}{\beta}\tr(u_{\beta\mu}'(\beta D))
=
\frac{1}{\beta}\int_0^\infty \tr(e^{-t\beta^2(|D|-\mu)^2})r(t)dt.
$$
We denote the remainder by $R_K(t)$ :
$$
R_K(t)
=
\tr(e^{-t\beta^2(|D|-\mu)^2})
-
e^{-t\beta^2\mu^2}\sum_{\substack{0\leq k\leq K\\r_{l+1}\leq r_0+\frac{K-k}{2}}}\frac{(2t\beta^2\mu)^k}{k!}\rho_{l,k}(t\beta^2),
$$
we observe that $R_K(t) = o_0((t\beta^2)^{r_0+K/2})$, 
and $R_K(t) =o_\infty((t\beta^2)^{r_1+K/2})$,
hence using a similar argument as in the proof of Proposition \ref{prop_ave_eng_D_1_mu} we can show that
$$
\frac{1}{\beta}\int_0^\infty R_K(\beta, \mu)r(t\beta^2)dt
=
o_0\left(\beta^{2r_0+K}\omega_{\beta\mu}(r_0+\frac{K+1}{2})|\mu|^{K+1}\right).
$$
From the explicit expression of $\rho_{l,k}(t)$  given in \eqref{eq_rho_l_k} it follows that
\begin{multline*}
\rho_{l,k}(t)=\sum_{\substack{z\in X_l\\z+k/2\notin \mathbb{Z}^-}}\Gamma(z+k/2)\Res(\zeta_{D^2}(s), z)t^{-z-k/2} \\
+
\sum_{\substack{z\in X_l\\z+k/2\in \mathbb{Z}^-}}\frac{(-1)^{z+k/2}}{(-(z+k/2))!}\zeta_{D^2}(z)t^{-z-k/2}
,
\end{multline*}
Since this is a finite sum we may compute the integral term-by-term to obtain the equation \eqref{eq_psi_l_k}.
Since
$\rho_{l,k}(t\beta^2) = o_0((t\beta^2)^{r_l-k/2})$ and $\rho_{l,k}(t\beta^2) = o_\infty((t\beta^2)^{r_{l+1}-k/2})$,
we obtain that 
\begin{align*}
\frac{(2\mu)^k}{k!}\psi_{l,k}(\beta, \mu)
&=
\frac{(2\mu)^k}{k!}\int_0^\infty(t\beta^2)^k\rho_{l,k}(t\beta^2)r_{\beta\mu}(t)dt\\
&=
o_0\left(\beta^{2r_l+k-1}\omega_{\beta\mu}(r_l+k/2)|\mu|^k\right).\qedhere
\end{align*}
\end{proof}

\section{Bosonic second quantization}\label{sect_bose}

In this section we again assume $D:\mathcal{H}\to \mathcal{H}$ is a self adjoint operator with compact resolvent.

Let $\beta>0$, $\mu<0$ be the inverse temperature and chemical potential, respectively.
As before we denote by $H_{b,\mu}:=\sqrt{D^2+\mu^2\id}$ and $H_{b,\mu}':=D^2-\mu\id$ and write
\begin{equation*}
K_{b,\mu} := d\Gamma H_{b,\mu}: \bose\to \bose,
\quad
K_{b,\mu}' :=d\Gamma H_{b,\mu}': \bose\to \bose
\end{equation*}
for the corresponding second-quantized operators on the bosonic Fock space $\bose$.
When $e^{-\beta H_{b,\mu}}$ and $e^{-\beta H_{b,\mu}'}$ are trace class operators, we denote by
\begin{equation*}
Z_{b}=\tr(e^{-\beta K_{b,\mu}}),
\quad
Z_{b}'=\tr(e^{-\beta K_{b,\mu}'})
\end{equation*}
the corresponding canonical partition functions, and by
$$
\begin{aligned}
\rho_{b}
&:=
\frac{e^{-\beta K_{b,\mu}}}{Z_{b}},
\quad 
\phi_{b}(\cdot)
:=
\tr(\rho_{b}\cdot),\\
\rho_{b}'
&:=
\frac{e^{-\beta K_{b,\mu}'}}{Z_{b}'},
\quad 
\phi_{b}'(\cdot)
:=
\tr(\rho_{b}'\,\cdot)
\end{aligned}
$$
the corresponding density operators and Gibbs states. 

\subsection{The entropy of $\rho_b$}
We define a  function $k(x)$ by
\begin{equation}\label{eq_k}
k(x)
:=
-\frac{x}{1-e^{x}}-\log\left(1-e^{-x}\right).
\end{equation}
If we denote by $k_{\mu}(x):=k(\sqrt{x^2+\mu^2})$,
then we get
\begin{equation*}
\mathcal{S}(\phi_b)
=
-\tr(\rho_b\log\rho_b)
=
\tr\left(k_{\beta\mu}(\beta D)\right).
\end{equation*}

\begin{lemma}
The function $k(x)$ is an even positive function of the variable $x\in \mathbb{R}\backslash \{0\}$, and its derivative is
\begin{equation*}
\frac {dk }{dx}(x)=-\frac{x}{4\sinh^2(\frac{x}{2})}.
\end{equation*}
\end{lemma}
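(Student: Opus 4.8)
The statement asserts two facts about the function
\[
k(x) = -\frac{x}{1-e^{x}} - \log\left(1-e^{-x}\right) :
\]
that it is even and positive on $\mathbb{R}\setminus\{0\}$, and that its derivative equals $-\dfrac{x}{4\sinh^2(x/2)}$. My plan is to prove the derivative formula first by direct differentiation, since the evenness and positivity will both follow cleanly once the derivative is in hand and once we understand the behaviour of $k$ near the origin and at $\pm\infty$.

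\textbf{Step 1: Compute the derivative.} I would differentiate term by term. For the first term, write $-\dfrac{x}{1-e^x}$ and apply the quotient rule, using $\frac{d}{dx}(1-e^x) = -e^x$; for the second term, $\frac{d}{dx}\big[-\log(1-e^{-x})\big] = -\dfrac{e^{-x}}{1-e^{-x}} = -\dfrac{1}{e^x-1}$. Adding the two contributions and clearing the common denominator $(1-e^x)^2$, I expect the numerator to collapse dramatically. The cleanest route is to multiply numerator and denominator through by $e^{-x}$ (or equivalently to rewrite everything symmetrically in $e^{x/2}$ and $e^{-x/2}$), which should convert $(1-e^x)^2 = e^x(e^{x/2}-e^{-x/2})^2 \cdot(\text{sign bookkeeping})$; recalling $\sinh(x/2) = \tfrac12(e^{x/2}-e^{-x/2})$ gives $(e^{x/2}-e^{-x/2})^2 = 4\sinh^2(x/2)$, which is exactly the target denominator. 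The main thing to watch is the algebra in combining $-\frac{x}{1-e^x}$'s derivative with $-\frac{1}{e^x-1}$; I anticipate the $x$-free pieces cancel and leave only the $-x/(4\sinh^2(x/2))$ term.

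\textbf{Step 2: Evenness.} Rather than manipulate $k(-x)$ directly (the logarithm of $1-e^{x}$ is awkward for $x>0$), I would observe that the derivative $k'(x) = -x/(4\sinh^2(x/2))$ is manifestly an \emph{odd} function, since $\sinh^2$ is even and $x$ is odd. An odd derivative forces $k$ to be even up to an additive constant; to pin the constant to zero I would check the symmetry of $k$ at a single convenient point or examine the symmetric combination $k(x)-k(-x)$ and confirm it vanishes (for instance via the identity $\frac{x}{1-e^x} + \frac{x}{1-e^{-x}} = -x$ and $\log(1-e^{-x}) - \log(1-e^{x}) = -x$, which together give $k(x)=k(-x)$ directly). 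This algebraic identity is in fact short enough that it could replace Step 2 entirely, so I would present it as the primary argument and use the odd-derivative remark as a consistency check.

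\textbf{Step 3: Positivity.} By evenness it suffices to treat $x>0$. Since $k'(x) = -x/(4\sinh^2(x/2)) < 0$ for $x>0$, the function $k$ is strictly decreasing on $(0,\infty)$, so $k(x) > \lim_{x\to\infty} k(x)$ there. As $x\to+\infty$ we have $\frac{x}{1-e^x}\to 0$ and $\log(1-e^{-x})\to 0$, hence $\lim_{x\to\infty}k(x)=0$, giving $k(x)>0$ for all $x>0$ and thus for all $x\neq 0$. The one point requiring a little care is the behaviour near $x=0$, where each term individually diverges: the first term $-x/(1-e^x)\to 1$ while $-\log(1-e^{-x})\to +\infty$, so $k(x)\to+\infty$ as $x\to 0$, consistent with monotonic decrease from $+\infty$ down to $0$. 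The only genuine obstacle is the bookkeeping in Step 1, ensuring no sign error survives into the final hyperbolic form; everything else is routine given the derivative.
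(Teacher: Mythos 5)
The paper states this lemma without proof, so the comparison is against the routine verification it implicitly relies on; your plan (direct differentiation, an algebraic symmetry identity, monotonicity plus limits) is exactly that verification, and Steps 1 and 3 are correct as sketched. Indeed the two derivative contributions combine to $k'(x)=\frac{-(1-e^x+xe^x)+(1-e^x)}{(1-e^x)^2}=\frac{-xe^x}{(1-e^x)^2}$, and since $1-e^x=-2e^{x/2}\sinh(x/2)$ one has $(1-e^x)^2=4e^x\sinh^2(x/2)$, giving $k'(x)=-x/\bigl(4\sinh^2(x/2)\bigr)$; then $k'<0$ on $(0,\infty)$ together with $k(x)\to 0$ as $x\to+\infty$ yields positivity.

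There is, however, a concrete error in Step 2: the identity $\frac{x}{1-e^x}+\frac{x}{1-e^{-x}}=-x$ is false. Putting the two fractions over the common denominator $2-e^x-e^{-x}$ shows $\frac{1}{1-e^x}+\frac{1}{1-e^{-x}}=1$, so the sum equals $+x$, not $-x$; with your sign the symmetric combination would give $k(x)-k(-x)=2x$ and evenness would fail. A second, smaller caveat: for $x>0$ the quantity $1-e^x$ is negative, so $\log(1-e^x)$ is not real — equivalently, $k(-x)$ as literally written involves a logarithm with negative argument — and both your second identity and the evenness statement itself only make sense after replacing the logarithm by $\log\lvert 1-e^{-x}\rvert$ (or, equivalently, defining $k$ for $x<0$ by even extension; note the paper only ever evaluates $k$ at the positive argument $\sqrt{x^2+\mu^2}$). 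With these two corrections — $\frac{x}{1-e^x}+\frac{x}{1-e^{-x}}=x$ and $\log\lvert 1-e^{-x}\rvert-\log\lvert 1-e^{x}\rvert=-x$, both immediate from $1-e^{-x}=e^{-x}(e^x-1)$ and $1-e^x=-(e^x-1)$ — your computation gives $k(x)-k(-x)=-x+x=0$, and the rest of the argument stands. Your fallback observation that $k'$ is odd is also fine, but it carries the same caveat: it determines $k$ only up to a constant on each of the two components of $\mathbb{R}\setminus\{0\}$, so one still needs the matching of values across the origin, e.g.\ via the corrected identities or the common limit $k(x)\to+\infty$ with identical asymptotics as $x\to 0^\pm$.
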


\begin{figure}[h]

\begin{subfigure}{0.49\textwidth}
\includegraphics[width=0.9\linewidth, height=5cm]{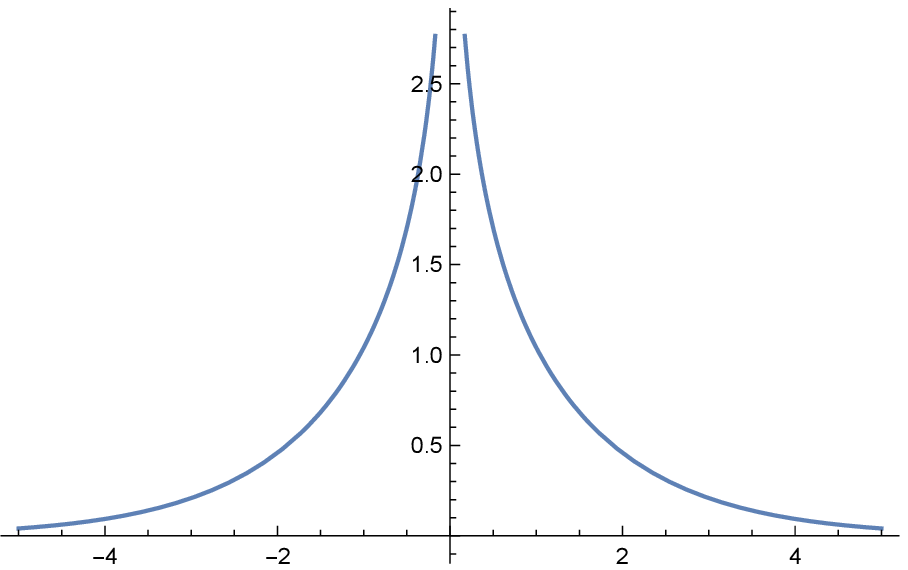}
\caption{The image of $k_{0}(x)$}
\end{subfigure}
\begin{subfigure}{0.49\textwidth}
\includegraphics[width=0.9\linewidth, height=5cm]{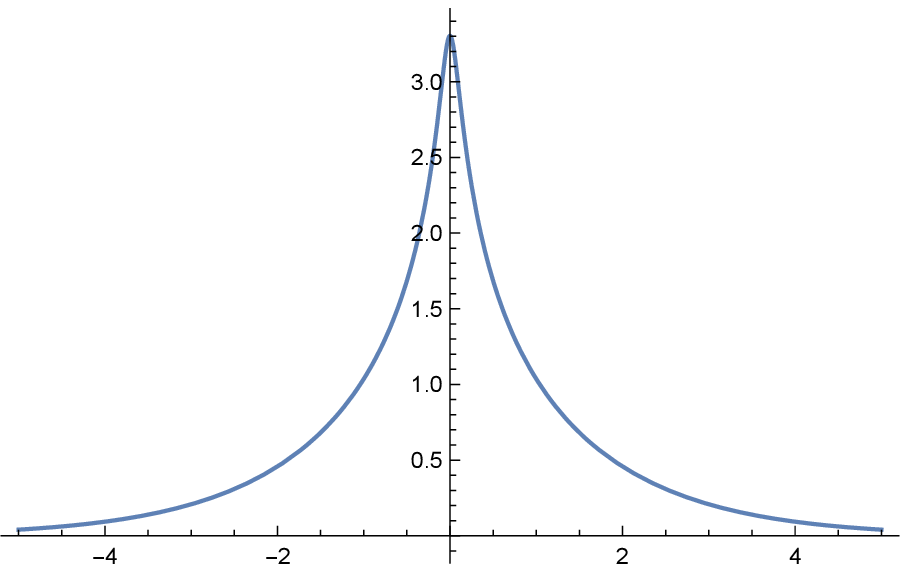}
\caption{The image of $k_{-0.1}(x)$}
\end{subfigure}

\label{fig:image2_}
\end{figure}

\begin{lemma}
For $x >0$, 
\begin{equation*}
\sum_{\mathbb{Z}}\frac{(2\pi n)^2-x}{((2\pi n)^2+x)^2}=-\frac{1}{4\sinh^2(\frac{\sqrt{x}}{2})}.
\end{equation*}
\end{lemma}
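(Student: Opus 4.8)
The plan is to reduce the identity to the classical Mittag-Leffler expansion of $\coth$ together with a single differentiation in $x$. Write $S(x):=\sum_{n\in\mathbb Z}\frac{1}{(2\pi n)^2+x}$ for $x>0$. The first step is to record the closed form
\begin{equation*}
S(x)=\frac{1}{2\sqrt x}\coth\left(\frac{\sqrt x}{2}\right),
\end{equation*}
which follows from the standard partial-fraction identity $\sum_{n\in\mathbb Z}\frac{1}{y^2+n^2}=\frac{\pi}{y}\coth(\pi y)$ after substituting $y=\frac{\sqrt x}{2\pi}$ and factoring $(2\pi)^2$ out of each denominator.

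Next I would rewrite the summand so that the target sum becomes a combination of $S$ and its derivative. Using $(2\pi n)^2-x=\big((2\pi n)^2+x\big)-2x$ gives
\begin{equation*}
\frac{(2\pi n)^2-x}{\big((2\pi n)^2+x\big)^2}=\frac{1}{(2\pi n)^2+x}-\frac{2x}{\big((2\pi n)^2+x\big)^2}.
\end{equation*}
Since $\frac{d}{dx}\frac{1}{(2\pi n)^2+x}=-\frac{1}{\big((2\pi n)^2+x\big)^2}$ and the differentiated series converges locally uniformly on $(0,\infty)$, term-by-term differentiation yields $\sum_{n}\frac{1}{\big((2\pi n)^2+x\big)^2}=-S'(x)$, and therefore
\begin{equation*}
\sum_{n\in\mathbb Z}\frac{(2\pi n)^2-x}{\big((2\pi n)^2+x\big)^2}=S(x)+2x\,S'(x).
\end{equation*}

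Finally I would substitute the closed form for $S$ and simplify. Differentiating $S(x)=\tfrac12 x^{-1/2}\coth(\tfrac12\sqrt x)$ produces a $\coth$ term and a $1/\sinh^2$ term; in the combination $S(x)+2x\,S'(x)$ the two $\coth$ contributions cancel exactly, leaving $-\frac{1}{4\sinh^2(\sqrt x/2)}$, which is the asserted value.

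The computation is otherwise routine; the only point that genuinely requires care is the justification of differentiating $S$ term by term, which is immediate because $\sum_{n}\big((2\pi n)^2+x\big)^{-2}$ converges uniformly on compact subsets of $(0,\infty)$. As an alternative one could evaluate the entire sum at once by the residue method, summing $\pi\cot(\pi z)\,\frac{(2\pi z)^2-x}{\big((2\pi z)^2+x\big)^2}$ over the double poles $z=\pm\frac{i\sqrt x}{2\pi}$, but this forces one to expand $\pi\cot(\pi z)$ to second order at each pole and is more laborious than the differentiation argument above.
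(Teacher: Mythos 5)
Your proof is correct, but it takes a genuinely different route from the paper. The paper's proof is a two-line computation: it invokes the second-order Eisenstein series $\sum_{n\in\mathbb Z}\frac{1}{(\pi n + z)^2}=\frac{1}{\sin^2 z}$ at the purely imaginary point $z=i\frac{\sqrt x}{2}$, converts via $\sinh x=-i\sin(ix)$, and then (implicitly) symmetrizes the complex sum by pairing $n$ with $-n$ so that the terms $\frac{1}{4(\pi n+i\sqrt x/2)^2}$ combine into the real summand $\frac{(2\pi n)^2-x}{((2\pi n)^2+x)^2}$. You instead stay entirely over the reals: you start from the first-order partial-fraction expansion $S(x)=\sum_{n}\frac{1}{(2\pi n)^2+x}=\frac{1}{2\sqrt x}\coth\left(\frac{\sqrt x}{2}\right)$, split the summand as $\frac{1}{(2\pi n)^2+x}-\frac{2x}{((2\pi n)^2+x)^2}$, and identify the second piece with $-S'(x)$ by term-by-term differentiation, after which $S(x)+2xS'(x)$ collapses to $-\frac{1}{4\sinh^2(\sqrt x/2)}$ (your algebra here checks out: the two $\coth$ terms do cancel). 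The two arguments are close cousins — the $1/\sin^2$ Eisenstein identity is itself the derivative of the cotangent expansion — but each has its own merits: the paper's version is shorter once one grants the Eisenstein identity at complex arguments, while yours avoids complex substitution and the silent regrouping of a complex-valued series, at the modest cost of justifying differentiation under the sum, which you correctly dispatch via locally uniform convergence of $\sum_n((2\pi n)^2+x)^{-2}$ on compacts of $(0,\infty)$. Your closing remark about the residue method is also accurate, and your assessment that it is more laborious (double poles forcing a second-order expansion of $\pi\cot(\pi z)$) is fair.
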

\begin{proof} 
We use the Eisenstein series \cite{CCS18} 
\begin{equation*}
\sum_{\mathbb{Z}}\frac{1}{(\pi n + x)^2}=\frac{1}{\sin^2 x},
\end{equation*}
in conjunction with 
\begin{equation*}
\sinh x =-i\sin(ix).
\end{equation*}
Thus
\begin{equation*}
\begin{aligned}
\frac{1}{4\sinh^2(\frac{\sqrt{x}}{2})}
=
-\frac{1}{4\sin^2(i\frac{\sqrt{x}}{2})}
=
-\sum_{\mathbb{Z}}\frac{1}{4(\pi n + i\frac{\sqrt{x}}{2})^2}
=
-\sum_{\mathbb{Z}}\frac{(2\pi n)^2-x}{((2\pi n)^2+x)^2}.
\end{aligned}
\end{equation*}
\end{proof}

Now since one has the equation
\begin{equation*}
\int_0^\infty \left(2(2\pi n)^2t-1\right)e^{-(2\pi n)^2t-tx}dt=\frac{(2\pi n)^2-x}{\left((2\pi n)^2+x\right)^2},
\end{equation*}
by the Fubini theorem we have the formula
\begin{equation*}
-\frac{1}{4\sinh^2{\frac{\sqrt{x}}{2}}}=\int_0^\infty f(t)e^{-tx}dt,\quad f(t):=\sum_{n\in \mathbb{Z}}\left(2(2\pi n)^2t-1\right)e^{-(2\pi n)^2t},
\end{equation*}
when $x>0$.

Next, let us determine the behaviour of the function $f(t)$ as $t\to0^+$ and $t\to+\infty$.
\begin{lemma}\label{lma}
Let
\begin{equation*}
f(t)=\sum_{n\in \mathbb{Z}}\left(2(2\pi n)^2t-1\right)e^{-(2\pi n)^2t}. 
\end{equation*}
The function $f(t)$ is rapidly decreasing as $t\to 0^+$ and approaches $-1$ as $t\to+\infty$.
\end{lemma}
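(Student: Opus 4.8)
The plan is to recognize $f(t)$ as the image of a Jacobi theta function under a first-order differential operator that annihilates the leading small-$t$ singularity. Set $\theta(t) := \sum_{n\in\mathbb{Z}} e^{-(2\pi n)^2 t}$, which converges together with all its $t$-derivatives, locally uniformly on $(0,\infty)$. Since $-2t\frac{d}{dt}e^{-(2\pi n)^2 t} - e^{-(2\pi n)^2 t} = (2(2\pi n)^2 t - 1)e^{-(2\pi n)^2 t}$, differentiating term by term yields the identity
\begin{equation*}
f(t) = \left(-2t\frac{d}{dt} - 1\right)\theta(t).
\end{equation*}
This single observation reduces both asymptotic claims to properties of $\theta$.

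For $t \to +\infty$ I would isolate the $n=0$ term of $f$, which equals $-1$, and estimate the remainder $\sum_{n\neq 0}(2(2\pi n)^2 t - 1)e^{-(2\pi n)^2 t}$. For $t \geq 1$ this tail is bounded in absolute value by $C\sum_{n\geq 1} n^2 t\, e^{-4\pi^2 n^2 t}$, whose leading ($n=1$) term behaves like $t\, e^{-4\pi^2 t}$ and tends to $0$; hence $f(t)\to -1$. This is the routine half of the statement.

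The substantive half is $t \to 0^+$. Here I would invoke the Jacobi transformation (Poisson summation applied to Gaussians),
\begin{equation*}
\theta(t) = \frac{1}{2\sqrt{\pi t}}\sum_{k\in\mathbb{Z}} e^{-k^2/(4t)},
\end{equation*}
and exploit the crucial algebraic fact that the operator $L := -2t\frac{d}{dt}-1$ kills the power $t^{-1/2}$, since $L[t^{-1/2}] = -2t(-\tfrac12)t^{-3/2} - t^{-1/2} = 0$. Consequently the $k=0$ term $\frac{1}{2\sqrt{\pi}}t^{-1/2}$, which is the entire power-law part of $\theta$ as $t\to 0^+$, contributes nothing to $f$; applying $L$ term by term to the remaining, exponentially small part leaves the closed form
\begin{equation*}
f(t) = -\frac{1}{2\sqrt{\pi}}\, t^{-3/2}\sum_{k=1}^\infty k^2\, e^{-k^2/(4t)}.
\end{equation*}
Every summand is of the shape $t^{-3/2}e^{-k^2/(4t)}$, so for each fixed $N$ the ratio $f(t)/t^N\to 0$ as $t\to 0^+$, exponential decay beating any power, which is precisely the assertion that $f$ is rapidly decreasing.

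The only genuine obstacle is justifying the term-by-term application of $L$ to the transformed series and the interchange of $L$ with the infinite sum; both follow from the locally uniform convergence on $(0,\infty)$ of $\sum_k e^{-k^2/(4t)}$ and its derivatives, which I would confirm by the Weierstrass $M$-test on compact $t$-intervals. Conceptually the whole lemma rests on the identity $L[t^{-1/2}]=0$: this is what forces the leading singular term of the theta function to cancel and leaves behind only the Schwartz-type remainder.
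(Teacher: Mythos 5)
Your proof is correct and takes essentially the same route as the paper's: both rest on the Jacobi theta function and its inversion formula, and your observation that $L=-2t\frac{d}{dt}-1$ annihilates $t^{-1/2}$ is exactly the mechanism by which the paper's computation collapses $g(t)=-2t\theta'(t)-\theta(t)$ to $2t^{-3/2}\theta'(1/t)$. Indeed your closed form $f(t)=-\frac{1}{2\sqrt{\pi}}\,t^{-3/2}\sum_{k\geq 1}k^2 e^{-k^2/(4t)}$ is precisely the paper's expression $f(t)=g(4\pi t)$ written out term by term, so the two arguments differ only in whether the functional equation is applied to $\theta$ as a whole or to the transformed series termwise.
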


\begin{proof}
Consider the theta function 
$\theta(t)=\sum\limits_{n\in \mathbb{Z}}e^{-\pi n^2t}.$
Let 
$g(t)=-2t\, \theta^\prime(t)-\theta(t).$
We have $f(t)=g(4\pi t).$
Thus it suffices to show that $g(t)$ is rapidly decreasing as $t\to 0^+$ and approaches $-1$ as $t\to+\infty$. 
Since $\theta'(t)$ is rapidly decreasing when $t\to +\infty$,
and $\theta(t)\to 1$ as $t\to+\infty$,
we proved that $g(t)$ approaches $-1$ as $t\to+\infty$.

Now,
using the Jacobi inversion formula,
$\theta(t)=\frac{1}{\sqrt{t}}\theta\left(\frac{1}{t}\right),$
we have
\begin{align*}
g(t)
&=
-2t\left(-\frac{1}{2}t^{-3/2}\theta\left(\frac{1}{t}\right) - \frac{1}{\sqrt t}\theta^\prime\left(\frac{1}{t}\right)t^{-2}\right)-\theta(t)\nonumber\\
&=
t^{-1/2}\theta\left(\frac{1}{t}\right)+2t^{-3/2}\theta^\prime\left(\frac{1}{t}\right) - \theta (t)\nonumber\\
&=
2t^{-3/2}\theta^\prime\left(\frac{1}{t}\right).
\end{align*}
Since as $t\to 0^+$, $\theta^\prime\left(\frac{1}{t}\right)$ is rapidly decreasing,
$g(t)$ is rapidly decreasing,
and also the function $f(t)$ is rapidly decreasing as $t\to 0^+.$
\end{proof}

Then we have the following lemma:
\begin{lemma}
When $x>0$, one has
\begin{equation}\label{k0_formula}
k(x)=\int_0^\infty e^{-tx^2}\Tilde{f}(t)dt, 
\end{equation}
here
$$\Tilde{f}(t)
=
\frac{f(t)}{2t}
=
\frac{1}{2t}\sum_{n\in \mathbb{Z}}\left(2(2\pi n)^2t-1\right)e^{-(2\pi n)^2t}.
$$
\end{lemma}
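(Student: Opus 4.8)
The plan is to identify both sides of \eqref{k0_formula} as functions of $x$ whose derivatives coincide and which share the same limit as $x\to\infty$, so that they must be equal. All the ingredients are already in place: the derivative formula $\frac{dk}{dx}(x)=-\frac{x}{4\sinh^2(x/2)}$, the Laplace representation $-\frac{1}{4\sinh^2(\sqrt y/2)}=\int_0^\infty f(t)e^{-ty}\,dt$ for $y>0$ established just above, and the decay information on $f$ from Lemma \ref{lma}.

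Concretely, I would set $\Phi(x):=\int_0^\infty e^{-tx^2}\tilde f(t)\,dt$ and differentiate under the integral sign. Since $t\tilde f(t)=\tfrac12 f(t)$, this gives
\[
\Phi'(x)=\int_0^\infty(-2tx)e^{-tx^2}\tilde f(t)\,dt=-x\int_0^\infty f(t)e^{-tx^2}\,dt,
\]
and substituting the Laplace representation with Laplace variable $y=x^2$ turns the remaining integral into $-\tfrac{1}{4\sinh^2(x/2)}$, so that $\Phi'(x)$ is an explicit multiple of $\frac{x}{4\sinh^2(x/2)}$, matching $\frac{dk}{dx}$ up to the bookkeeping of signs. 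It then remains only to fix the additive constant, which I would do by checking that both $\Phi(x)$ and $k(x)$ tend to $0$ as $x\to\infty$. A slightly cleaner variant that pins the constant automatically is to start from the identity $\frac{d}{dy}k(\sqrt y)=\tfrac12\int_0^\infty f(t)e^{-ty}\,dt$, integrate in $y$ from $y$ to $\infty$ using $k(\infty)=0$, and interchange the two integrations; the inner integral $\int_y^\infty e^{-ty'}\,dy'=e^{-ty}/t$ reproduces precisely the factor $1/(2t)$ in the definition of $\tilde f$, giving the representation \eqref{k0_formula} directly, provided one keeps careful track of the sign coming through the Eisenstein identity.

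The main obstacle is justifying the interchange of limits — either the differentiation under the integral or the Fubini step — because $\tilde f(t)=f(t)/(2t)$ carries a potentially dangerous $1/t$ factor at the origin and decays only like $-1/(2t)$ as $t\to\infty$. Here Lemma \ref{lma} is exactly what is needed: $f$ is rapidly decreasing as $t\to 0^+$, so the $1/t$ is harmless and $\tilde f$ remains rapidly decreasing there, while for large $t$ the Gaussian weight $e^{-tx^2}$ (with $x>0$ fixed) dominates and secures absolute convergence and the interchange. I would be especially careful to track signs through the Eisenstein identity and the formula for $\frac{dk}{dx}$, since this is where an error most easily creeps in; the limit $x\to\infty$ then fixes the constant of integration and closes the argument.
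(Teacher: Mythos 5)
Your strategy --- differentiate under the integral sign, identify the derivative through the $\sinh$ Laplace identity, and pin the additive constant via the limit $x\to\infty$ --- is exactly the paper's own proof, and your convergence justifications (rapid decay of $f$ at $t\to 0^+$ from Lemma \ref{lma}, the Gaussian weight $e^{-tx^2}$ absorbing the $-\frac{1}{2t}$ tail at infinity) are sound, indeed more careful than the paper's. But the one point you defer twice as ``bookkeeping of signs'' is not bookkeeping: it is where the argument fails to close. Carrying it out: since $t\tilde f(t)=\tfrac12 f(t)$, you get $\Phi'(x)=-x\int_0^\infty f(t)e^{-tx^2}\,dt$, and the identity established just before the lemma is $\int_0^\infty f(t)e^{-ty}\,dt=-\frac{1}{4\sinh^2(\sqrt y/2)}$; hence $\Phi'(x)=+\frac{x}{4\sinh^2(x/2)}=-k'(x)$. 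With $\Phi(\infty)=k(\infty)=0$ this proves $\Phi=-k$, not $\Phi=k$. Your ``cleaner variant'' lands in the same place: the relation $\frac{d}{dy}k(\sqrt y)=-\frac{1}{8\sinh^2(\sqrt y/2)}=\tfrac12\int_0^\infty f(t)e^{-ty}\,dt$ is correct, but integrating it from $y$ to $\infty$ with $k(\infty)=0$ gives $k(\sqrt y)=-\int_0^\infty \frac{f(t)}{2t}e^{-ty}\,dt$, again with the extra minus sign.

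No amount of care rescues \eqref{k0_formula} with $\tilde f(t)=f(t)/(2t)$ as printed, because the two sides have opposite signs: the Jacobi-inversion computation in the proof of Lemma \ref{lma} shows $f(t)=g(4\pi t)$ with $g(t)=2t^{-3/2}\theta'\bigl(\tfrac1t\bigr)<0$ for all $t>0$ (as $\theta'<0$), so $\tilde f(t)<0$ everywhere, while $k(x)>0$ for $x>0$. The correct statement is $k(x)=-\int_0^\infty e^{-tx^2}\tilde f(t)\,dt$, equivalently one must take $\tilde f(t)=\frac{1}{2t}\sum_{n\in\mathbb Z}\bigl(1-2(2\pi n)^2t\bigr)e^{-(2\pi n)^2t}$; the sign flip relative to the fermionic case of \cite{CCS18} comes from $\sinh^2 z=-\sin^2(iz)$, whereas $\cosh^2 z=+\cos^2(iz)$, and the corrected (positive) kernel is also what the later positive Bessel-sum formulas, e.g.\ Lemma \ref{bosemomentprop} and \eqref{eq_bose_coeff_entropy}, require. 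For what it is worth, the paper's own proof contains the identical slip: its displayed chain correctly ends in $\partial_x\tilde k(x)=-\partial_x k(x)$ and then concludes $\tilde k=k$, a non sequitur. So your proposal reproduces the paper's argument faithfully, including the unresolved sign; to prove a true statement it must conclude with the corrected sign in \eqref{k0_formula}.
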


\begin{proof}
According to Lemma \ref{lma}, $\Tilde{f}(t)$ is rapidly decreasing as $t\to 0^+$. 
Thus when $x>0$, the integral on the right hand side is well-defined. We denote the integral on the right hand side of \eqref{k0_formula} by $\Tilde{k}(x)$. We have
\begin{equation*}
\partial_x \Tilde{k}(x)=-2x \int_0^\infty e^{-tx^2}t\Tilde{f}(t)dt=\frac{x}{4\sinh^2{\frac{x}{2}}}=-\partial_x k(x),
\end{equation*}
and since both $k(x)$ and $\Tilde{k}(x)$ approach to $0$ when $x\rightarrow \infty$, thus $k_0(x)=\Tilde{k}(x)$.
\end{proof}

Thus immediately we have 
\begin{prop}
When the chemical potential $\mu<0$, for all $x\in \mathbb{R}$, 
\begin{equation*}
k_{\mu}(x)=\int_0^\infty e^{-tx^2}\Tilde{f}_\mu(t)dt,
\end{equation*}
where 
$$\Tilde{f}_\mu(t)
=
e^{-\mu^2 t}\Tilde{f}(t)
=
\frac{e^{-\mu^2 t}}{2t}\sum_{n\in \mathbb{Z}}\left(2(2\pi n)^2t-1\right)e^{-(2\pi n)^2t}.
$$
\end{prop}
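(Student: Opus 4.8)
The plan is to obtain the formula directly from the Laplace-transform representation \eqref{k0_formula} established in the previous lemma, namely $k(y)=\int_0^\infty e^{-ty^2}\Tilde{f}(t)\,dt$, which was proved for $y>0$. The key observation that makes this work uniformly is that the chemical potential is strictly negative, so $\mu^2>0$; consequently, for every real $x$ the argument $y=\sqrt{x^2+\mu^2}$ satisfies $y\geq|\mu|>0$ and in particular never meets the singular point of $k$ at the origin. This is exactly what allows the representation to be invoked for all $x\in\mathbb{R}$, including $x=0$, whereas for $\mu=0$ it was only available for $x\neq 0$.

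Concretely, I would first recall that $k_\mu(x)=k(\sqrt{x^2+\mu^2})$ and note that $\sqrt{x^2+\mu^2}>0$ for every $x\in\mathbb{R}$ since $\mu<0$. Substituting $y=\sqrt{x^2+\mu^2}$ into \eqref{k0_formula} then gives
$$
k_\mu(x)=\int_0^\infty e^{-t(x^2+\mu^2)}\Tilde{f}(t)\,dt,
$$
and factoring $e^{-t(x^2+\mu^2)}=e^{-tx^2}e^{-t\mu^2}$ and absorbing the $x$-independent factor into the density yields
$$
k_\mu(x)=\int_0^\infty e^{-tx^2}\bigl(e^{-t\mu^2}\Tilde{f}(t)\bigr)\,dt=\int_0^\infty e^{-tx^2}\Tilde{f}_\mu(t)\,dt,
$$
which is the claim. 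This is the exact bosonic analogue of the fermionic computation in Section \ref{sect_fermi}, where the density $\Tilde{g}_\mu(t)=e^{-t\mu^2}\Tilde{g}(t)$ was introduced for the same reason.

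There is no serious analytic obstacle, only a convergence check that should be stated explicitly. By Lemma \ref{lma} the density $\Tilde{f}(t)=f(t)/(2t)$ is rapidly decreasing as $t\to 0^+$, so the integrand is harmless near the origin for any value of $x$; the delicate end is $t\to+\infty$, where $f(t)\to-1$ forces $\Tilde{f}(t)$ to decay only like $1/t$, so that for $x=0$ and $\mu=0$ the integral would diverge. The factor $e^{-t\mu^2}$ coming from $\mu^2>0$ supplies the missing exponential decay at infinity and makes $\Tilde{f}_\mu$ absolutely integrable against $e^{-tx^2}$ on $(0,\infty)$ for every $x\in\mathbb{R}$. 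Thus the only genuine content beyond the substitution is this regularization at $t\to+\infty$, which is precisely why a nonzero chemical potential is needed for the bosonic entropy to define a bona fide spectral action.
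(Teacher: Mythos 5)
Your proof is correct and is essentially the paper's own argument: the paper derives the proposition ``immediately'' from the preceding lemma $k(x)=\int_0^\infty e^{-tx^2}\Tilde{f}(t)\,dt$ by the same substitution $y=\sqrt{x^2+\mu^2}\geq|\mu|>0$ and absorption of $e^{-t\mu^2}$ into the density. Your added convergence check (that $\Tilde{f}(t)\sim -1/(2t)$ as $t\to+\infty$ is regularized by the factor $e^{-t\mu^2}$) is a correct and welcome explicit justification of a point the paper leaves implicit.
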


For the bosonic Fock space, we can get results for the moments of the function $k_\mu$ which are analogous those in Lemma \ref{fermimomentprop} for the fermionic case.
\begin{lemma}\label{bosemomentprop}
When $\nu>-1$, one has the integral formula
\begin{equation*}
\int_0^\infty k_{\mu}(x)x^\nu dx
=
|\mu|^{\frac{\nu}{2}+2} 2^{\frac{\nu}{2}}\frac{1}{\sqrt{\pi}}\Gamma\left(\frac{\nu+1}{2}\right)\sum_{n=1}^{\infty} n^{-\frac{\nu}{2}}K_{\frac{\nu}{2}+2}\left(n|\mu|\right).
\end{equation*}
\end{lemma}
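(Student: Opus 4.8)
The plan is to mirror the computation of the fermionic moments in Lemma \ref{fermimomentprop}, exploiting the fact that the bosonic occupation factors differ from the fermionic ones only by the removal of the alternating sign. First I would split $k_\mu$ into its two summands, writing $k_\mu(x) = v_\mu(x) + p_\mu(x)$ with
\[
v_\mu(x) = \frac{\sqrt{x^2+\mu^2}}{e^{\sqrt{x^2+\mu^2}}-1}, \qquad p_\mu(x) = -\log\!\left(1-e^{-\sqrt{x^2+\mu^2}}\right),
\]
so that $\int_0^\infty k_\mu(x)x^\nu dx$ is the sum of the two corresponding integrals, exactly as $h_\mu$ was split into its first part and the logarithmic term in Lemma \ref{1intprop}.

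Next I would expand each factor in a geometric-type series. Because the chemical potential is strictly negative, $\sqrt{x^2+\mu^2} \geq |\mu| > 0$ for all $x$, so both expansions
\[
\frac{y}{e^y-1} = \sum_{n=1}^\infty y\,e^{-ny}, \qquad -\log\!\left(1-e^{-y}\right) = \sum_{n=1}^\infty \frac{1}{n}e^{-ny}
\]
converge for $y = \sqrt{x^2+\mu^2}$, and there is no singularity at $x=0$ (this is precisely the role of $\mu$ flagged in the introduction). These are the bosonic analogues of the fermionic expansions, with each alternating sign $(-1)^{n+1}$ replaced by $+1$. Since every term is now nonnegative, Tonelli's theorem immediately justifies interchanging the summation with the integration $\int_0^\infty(\cdots)x^\nu dx$ — cleaner than in the fermionic case — while the hypothesis $\nu>-1$ guarantees integrability at $x=0$.

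After the interchange, each summand is handled by the substitution $z=\sqrt{x^2/\mu^2+1}$ used in the proof of Lemma \ref{1intprop}, which turns $\int_0^\infty e^{-n\sqrt{x^2+\mu^2}}x^\nu dx$ and $\int_0^\infty \sqrt{x^2+\mu^2}\,e^{-n\sqrt{x^2+\mu^2}}x^\nu dx$ into $\int_1^\infty e^{-n|\mu|z}(z^2-1)^{\frac{\nu-1}{2}}z\,dz$ and $\int_1^\infty e^{-n|\mu|z}(z^2-1)^{\frac{\nu-1}{2}}z^2\,dz$, both evaluated by Lemma \ref{int1}. With $m:=\nu/2$ this gives
\[
\int_0^\infty p_\mu(x)x^\nu dx = \frac{2^{m}}{\sqrt\pi}\Gamma\!\left(\frac{\nu+1}{2}\right)|\mu|^{m+1}\sum_{n=1}^\infty n^{-m-1}K_{m+1}(n|\mu|),
\]
together with the analogous, all-positive, version of formula \eqref{2int} for $\int_0^\infty v_\mu(x)x^\nu dx$, whose two terms carry $K_m(n|\mu|)$ and $K_{m+1}(n|\mu|)$.

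Finally I would add the two results and collapse them using the recurrence for the modified Bessel functions,
\[
K_{m+2}(z) = K_{m}(z) + \frac{2(m+1)}{z}K_{m+1}(z), \qquad z = n|\mu|,
\]
exactly as in the passage from \eqref{1int} and \eqref{2int} to Lemma \ref{fermimomentprop}: the terms proportional to $K_m$ and $K_{m+1}$ recombine into a single $K_{m+2}(n|\mu|)$ with prefactor $|\mu|^{m+2}n^{-m}$, which is the claimed formula. The only genuine point needing care is the convergence of the resulting series $\sum_n n^{-\nu/2}K_{\nu/2+2}(n|\mu|)$; this follows from the large-argument asymptotics $K_\lambda(z)\sim\sqrt{\pi/2z}\,e^{-z}$ (Lemma \ref{lemma_bessel_asymp}), which forces exponential decay in $n$, just as in the proof of Proposition \ref{thm_fermi_coeff}. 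I do not expect any serious obstacle here: the replacement of $(-1)^{n+1}$ by $+1$ is essentially the entire content, and it only simplifies the interchange of sum and integral.
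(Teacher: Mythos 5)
Your proof is correct and is essentially the paper's own argument: Lemma \ref{bosemomentprop} is stated there without proof as the ``analogous'' bosonic version of Lemma \ref{fermimomentprop}, obtained exactly as you do by dropping the alternating signs in the two series expansions, applying the substitution $z=\sqrt{x^2/\mu^2+1}$ together with Lemma \ref{int1}, and recombining via the recurrence \eqref{bess1} in the form $K_{m+2}(z)=K_m(z)+\frac{2(m+1)}{z}K_{m+1}(z)$. One incidental point in your favour: for the recombination with prefactor $|\mu|^{m+2}n^{-m}$ to work, your all-positive analogue of \eqref{2int} must carry the powers $|\mu|^{m+2}$ on $K_m$ and $|\mu|^{m+1}$ on $K_{m+1}$, which is what the direct computation gives --- the printed formula \eqref{2int} is in fact missing one factor of $|\mu|$, a typo that your route silently corrects and that is confirmed by consistency with Lemma \ref{fermimomentprop} and with \eqref{fermi_coef_fun}.
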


Let us denote by $\chi_\mu(a)$  the $a$'th order spectral action coefficient of $k_{\mu}(\sqrt{x})$,
that is,
\begin{equation*}
\chi_\mu(a)
=
\int_0^\infty t^a\Tilde{f}_\mu(t)dt.
\end{equation*}
Analogous to Proposition \ref{thm_fermi_coeff} and \ref{fermi_spectral_action_coeff_2nd_prop}, we have the following lemma:
\begin{lemma}\label{sa_eqs_bose_entropy}
For a fixed chemical potential $\mu<0$, we can express the $a-$th order spectral action coefficient of $k_{\mu}(\sqrt{x})$ as:

\begin{equation}\label{eq_bose_coeff_entropy}
\chi_\mu(a)
=
\frac{1}{\sqrt{\pi}}2^{-a+\frac{1}{2}}|\mu|^{-a+\frac{3}{2}}\sum_{n=1}^\infty n^{a+\frac{1}{2}}K_{-a+\frac{3}{2}}\left(n|\mu|\right),
\end{equation}
and
\begin{equation}\label{2nd_eq_bose_coeff_entropy}
\chi_{\mu}(a)
=
-\frac{\Gamma(a)}{2}\sum_{n=-\infty}^\infty\frac{(2a-1)(2n)^2\pi^2-\mu^2}{((2n)^2\pi^2+\mu^2)^{a+1}}.
\end{equation}
Moreover,  the expressions \eqref{eq_bose_coeff_entropy} and \eqref{2nd_eq_bose_coeff_entropy} both are entire functions with respect to  $a\in \mathbb{C}$.
\end{lemma}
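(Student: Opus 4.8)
The plan is to follow, nearly verbatim, the two-step scheme of the fermionic Propositions \ref{thm_fermi_coeff} and \ref{fermi_spectral_action_coeff_2nd_prop}, with $k_\mu$ playing the role of $h_\mu$ and $\tilde f_\mu$ that of $\tilde g_\mu$. The single structural change is the disappearance of the alternating sign $(-1)^{n+1}$: for bosons every summand enters with a $+$ sign, and accordingly the even multiples $2n$ replace the odd multiples $(2n+1)$ appearing in \eqref{2nd_eq__fermi_entropy}. Note also that $\chi_\mu(a)=\int_0^\infty t^a\tilde f_\mu(t)\,dt$ is entire in $a$ from the outset, since $\tilde f_\mu$ is rapidly decreasing at $0$ and exponentially small at $\infty$; the content of the lemma is therefore that the two closed forms, each a priori defined only on a half-plane, represent this one entire function.

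For \eqref{eq_bose_coeff_entropy} I would first take $a<0$. Using the Laplace representation $k_\mu(\sqrt x)=\int_0^\infty e^{-tx}\tilde f_\mu(t)\,dt$ and interchanging integrals rewrites the coefficient as a moment of $k_\mu$,
\begin{equation*}
\chi_\mu(a)=\frac{1}{\Gamma(-a)}\int_0^\infty k_\mu(\sqrt x)\,x^{-a-1}\,dx=\frac{2}{\Gamma(-a)}\int_0^\infty k_\mu(x)\,x^{-2a-1}\,dx.
\end{equation*}
Substituting $\nu=-2a-1$ in Lemma \ref{bosemomentprop}, simplifying the index $\tfrac{\nu}{2}+2=-a+\tfrac32$ and the powers of $2$ and $|\mu|$, and cancelling $\Gamma(-a)=\Gamma(\tfrac{\nu+1}{2})$, gives \eqref{eq_bose_coeff_entropy} directly. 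The extension to all $a$ then copies Proposition \ref{thm_fermi_coeff}: the representation $K_\nu(z)=\int_0^\infty e^{-z\cosh t}\cosh(\nu t)\,dt$ shows $K_{-a+3/2}(n|\mu|)$ is entire in $a$, while the bound $|K_\nu(z)|\le K_R(z)$ for $|\nu|\le R$ together with the decay $K_R(z)\sim\sqrt{\pi/2z}\,e^{-z}$ (Lemma \ref{lemma_bessel_asymp}) makes the series locally uniformly convergent; dropping $(-1)^{n+1}$ only enlarges the majorant harmlessly.

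For \eqref{2nd_eq_bose_coeff_entropy} the quickest route is a direct Mellin computation on the explicit kernel. Writing $\chi_\mu(a)=\tfrac12\int_0^\infty t^{a-1}e^{-\mu^2 t}f(t)\,dt$ with $f(t)=\sum_{n\in\mathbb Z}(2(2\pi n)^2t-1)e^{-(2\pi n)^2t}$ and evaluating each term via $\int_0^\infty t^{a-1}(2At-1)e^{-(A+\mu^2)t}\,dt=\Gamma(a)((2a-1)A-\mu^2)(A+\mu^2)^{-a-1}$ with $A=(2\pi n)^2$ (valid for $\mathrm{Re}(a)>0$) produces the sum over $\mathbb Z$ with even multiples $(2n)^2\pi^2$; fixing the overall sign against the paper's normalisation of the Laplace identity yields \eqref{2nd_eq_bose_coeff_entropy}. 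Equivalently one may imitate Proposition \ref{fermi_spectral_action_coeff_2nd_prop} and apply Poisson summation to \eqref{eq_bose_coeff_entropy} through the non-alternating analogue of \eqref{fermi_entropy_fourier_transf}, where the missing sign collapses the odd lattice $(2n+1)$ to the full lattice $2n$. Entirety of \eqref{2nd_eq_bose_coeff_entropy} is then inferred, exactly as in the fermionic case: the series converges for $\mathrm{Re}(a)>\tfrac12$ and there equals the entire function $\chi_\mu(a)$ of \eqref{eq_bose_coeff_entropy}, hence continues to all of $\mathbb C$.

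The delicate point is precisely this last continuation. The series in \eqref{2nd_eq_bose_coeff_entropy} is of $(2a-1)\zeta(2a)$-type and so appears to inherit a pole at $a=\tfrac12$ from $\zeta$; that the vanishing factor $2a-1$, together with the $\mu$-dependent terms, cancels it is visible only through the equality with the manifestly entire expression \eqref{eq_bose_coeff_entropy}, so I would not try to prove analyticity of the zeta-series by hand. A bosonic feature with no fermionic counterpart, which also must be handled with care, is the $n=0$ term: since $2n$ vanishes there, it contributes an isolated piece proportional to $\Gamma(a)|\mu|^{-2a}$, absent in \eqref{2nd_eq__fermi_entropy} but itself entire in $a$ for $\mu\neq 0$.
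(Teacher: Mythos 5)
Your proposal is correct, and in outline it is exactly the argument the paper intends: the paper prints no proof of this lemma, saying only that it is ``analogous to Proposition \ref{thm_fermi_coeff} and \ref{fermi_spectral_action_coeff_2nd_prop}'', and your two steps --- the moment formula of Lemma \ref{bosemomentprop} with $\nu=-2a-1$ for \eqref{eq_bose_coeff_entropy}, the $|K_\nu(z)|\leq K_R(z)$ majorant plus the decay of Lemma \ref{lemma_bessel_asymp} for entirety, and Poisson summation through the non-alternating analogue of \eqref{fermi_entropy_fourier_transf} for \eqref{2nd_eq_bose_coeff_entropy} --- reproduce that template faithfully, including the correct observation that dropping $e^{i\pi x}$ in Lemma \ref{lemma_2_order_FT} collapses the odd lattice $(2n+1)\pi$ to the even one. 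Where you genuinely depart is the direct term-by-term Mellin evaluation of $\int_0^\infty t^{a-1}e^{-\mu^2 t}f(t)\,dt$; this is more elementary than Poisson summation, and it has the merit of exposing the sign subtlety you flag. Computed literally with the paper's $\Tilde{f}=f/(2t)$ it gives $+\frac{\Gamma(a)}{2}\sum_n\bigl((2a-1)(2\pi n)^2-\mu^2\bigr)\bigl((2\pi n)^2+\mu^2\bigr)^{-a-1}$, the negative of \eqref{2nd_eq_bose_coeff_entropy}. Your instinct to ``fix the overall sign against the paper's normalisation of the Laplace identity'' is precisely right: the paper's own proof of $k(x)=\int_0^\infty e^{-tx^2}\Tilde{f}(t)dt$ derives $\partial_x\Tilde{k}(x)=+x/(4\sinh^2(x/2))=-\partial_x k(x)$ and then concludes $\Tilde{k}=k$, a non sequitur --- since both vanish at infinity one gets $\Tilde{k}=-k$, so the correct Laplace density is $-\Tilde{f}$, and with it your Mellin computation lands exactly on the stated minus sign. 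Consistently, Poisson summation applied to the manifestly positive series \eqref{eq_bose_coeff_entropy} (take $\nu=a-\tfrac32$ in Lemma \ref{lemma_2_order_FT}) also returns $-\frac{\Gamma(a)}{2}$ times the lattice sum, because here, unlike the fermionic case, no conversion $(-1)^{n+1}=-(-1)^n$ flips the sign carried by $\widehat{\psi^{\prime\prime}}_{\nu,a}$; so the two closed forms you produce are mutually consistent and the lemma as stated is the internally coherent version.

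One side remark of yours is wrong, though harmlessly so: the $n=0$ contribution to \eqref{2nd_eq_bose_coeff_entropy}, namely $\frac{\Gamma(a)}{2}|\mu|^{-2a}$, is \emph{not} entire in $a$ for $\mu\neq 0$ --- $\Gamma(a)$ has poles at $a\in\{0,-1,-2,\dots\}$, so this isolated piece is only meromorphic, and its poles must cancel against the analytic continuation of the remaining Epstein-type sum. This does not damage your argument, since, as you correctly arrange, entirety of \eqref{2nd_eq_bose_coeff_entropy} is not proved termwise but inherited from its identity with the entire function \eqref{eq_bose_coeff_entropy} on the half-plane of convergence and then continued, exactly as in Proposition \ref{fermi_spectral_action_coeff_2nd_prop}.
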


We can also express the function $\chi_\mu(a)$ in terms of the Riemann $\xi$ function when $|\mu|$ is small enough.
\begin{lemma}
For a fixed $\mu\in(-2\pi,0)$,
we have the formula
\begin{equation*}
\chi_\mu(a)=\frac{\Gamma(a)}{2}|\mu|^{-2a}+\sum_{n=0}^\infty(-1)^{n+1}\frac{\chi(n+a)}{n!}\mu^{2n},
\end{equation*}
where 
$\chi(a)=\frac{\xi(2a)}{(4\pi)^aa}.$
\end{lemma}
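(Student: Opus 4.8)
The plan is to follow the strategy of Lemma~\ref{prop_gamma_mu_xi}, starting from the Poisson-summation expression \eqref{2nd_eq_bose_coeff_entropy} for $\chi_\mu(a)$ and expanding each summand as a power series in $\mu^2$. The crucial new feature compared with the fermionic case is the vanishing frequency $n=0$: since the denominators involve $(2n)^2\pi^2$ rather than $(2n+1)^2\pi^2$, this term must be isolated before expanding. First I would split it off. Its numerator is $(2a-1)\cdot 0-\mu^2=-\mu^2$ and its denominator is $(\mu^2)^{a+1}$, so together with the overall prefactor $-\tfrac{\Gamma(a)}{2}$ it contributes exactly $\tfrac{\Gamma(a)}{2}|\mu|^{-2a}$, which is the first term in the claimed formula.

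For the remaining sum over $n\neq 0$ I would write, with $w_n:=(2n)^2\pi^2=4n^2\pi^2$,
\begin{equation*}
\frac{1}{(w_n+\mu^2)^{a+1}}
=
\frac{1}{w_n^{a+1}}\sum_{k=0}^\infty \frac{(-1)^k\,\Gamma(a+1+k)}{k!\,\Gamma(a+1)}\left(\frac{\mu^2}{w_n}\right)^k,
\end{equation*}
which converges because $\mu\in(-2\pi,0)$ forces $\mu^2<4\pi^2\le w_n$ for every $n\neq 0$; this is precisely why the admissible range is $(-2\pi,0)$ rather than the fermionic $(-\pi,0)$. Substituting this into the two pieces of the numerator $(2a-1)w_n-\mu^2$, the $-\mu^2$ part shifts the summation index by one. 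Collecting the coefficient of $\mu^{2k}$ and using $\Gamma(a+1+k)=(a+k)\Gamma(a+k)$ together with the identity $(2a-1)(a+k)+k=a(2a+2k-1)$, I expect the two contributions to combine into the single factor $\tfrac{a(2a+2k-1)\Gamma(a+k)}{k!\,\Gamma(a+1)}$. After interchanging the $n$- and $k$-summations, justified by absolute convergence, the inner sum becomes $\sum_{n\neq 0}w_n^{-(a+k)}=2\zeta(2(a+k))/(4\pi^2)^{a+k}$.

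Combining these, and using $a\Gamma(a)/\Gamma(a+1)=1$ to cancel the gamma factors against the prefactor $-\tfrac{\Gamma(a)}{2}$, the $n\neq 0$ contribution should collapse to
\begin{equation*}
\sum_{k=0}^\infty (-1)^{k+1}\,\frac{(2a+2k-1)\,\Gamma(a+k)\,\zeta(2(a+k))}{k!\,(4\pi^2)^{a+k}}\,\mu^{2k}.
\end{equation*}
It then remains to recognize the coefficient as $\chi(a+k)$: inserting $\xi(s)=\tfrac{s(s-1)}{2}\pi^{-s/2}\Gamma(s/2)\zeta(s)$ with $s=2(a+k)$ into $\chi(a+k)=\xi(2(a+k))/\big((4\pi)^{a+k}(a+k)\big)$, the factor $(a+k)$ cancels and $(4\pi)^{a+k}\pi^{a+k}=(4\pi^2)^{a+k}$, reproducing exactly the displayed coefficient. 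The main obstacle is the bookkeeping in the second step, where the two numerator pieces live at shifted indices; the computation only closes thanks to the identity $(2a-1)(a+k)+k=a(2a+2k-1)$, and I would check the $k=0$ term separately—where the $-\mu^2$ piece is absent—to confirm consistency. As in Lemma~\ref{sa_eqs_bose_entropy}, one may finally note that both sides are analytic in $a$, so the identity propagates from the region where the manipulations are manifestly valid to all $a$.
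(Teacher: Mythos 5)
Your proposal is correct and is exactly the argument the paper intends: it leaves this lemma unproved, deferring to the ``same trick'' as in Lemma~\ref{prop_gamma_mu_xi}, namely binomially expanding the Poisson-summed expression \eqref{2nd_eq_bose_coeff_entropy} in powers of $\mu^2$. You correctly supply the one genuinely new detail of the bosonic case --- isolating the $n=0$ mode, whose contribution $\frac{\Gamma(a)}{2}|\mu|^{-2a}$ produces the extra term and whose absence from the denominators explains the enlarged range $\mu\in(-2\pi,0)$ --- and your index-shift bookkeeping via $(2a-1)(a+k)+k=a(2a+2k-1)$ checks out.
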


Hence with the assumption of the heat trace expansion \eqref{eq_ht},
using a similar argument as in the proof of Proposition \ref{prop_ave_eng_D_1_mu},
we finally obtain the asymptotic expansion of $\tr(k_{\beta\mu}(\beta D))$:

\begin{prop}
With the heat trace  expansion \eqref{eq_ht},
let 
\begin{equation*}
\psi_l(\beta, \mu):=\int_0^\infty \rho_{l}(t\beta^2)\Tilde{f}_{\beta\mu}(t)dt
=
\sum_{z\in X_l}a_z\beta^{-2z}\chi_{\beta\mu}(-z),
\end{equation*}
we obtain the following asymptotic expansion for the von Neumann entropy $\mathcal{S}(\rho_b)$:
\begin{equation*}
\mathcal{S}(\rho_b)
=
\tr(k_{\beta\mu}(\beta D))
\sim
\sum_{l}\psi_l(\beta, \mu),\quad |\beta\mu|\to 0,\,\, |\mu|\to \infty.
\end{equation*}
More precisely,
\begin{equation*}
\tr(k_{\beta\mu}(\beta D))
-
\sum_{l=0}^L\psi_l(\beta, \mu)
=
o_0(\beta^{2r_L}\chi_{\beta\mu}(r_L)),
\end{equation*}
and each term $\psi_l(\beta, \mu)=o_0(\beta^{2r_l}\chi_{\beta\mu}(r_l))$.
\end{prop}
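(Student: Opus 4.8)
The plan is to follow the scheme of the proof of Proposition \ref{prop_ave_eng_D_1_mu}, with the Laplace kernel $\Tilde{f}_{\beta\mu}$ now playing the role that $r_{\beta\mu}$ played there, and with the absence of the prefactor $1/\beta$ making the bookkeeping of powers slightly simpler. First I would insert the Laplace representation $k_{\beta\mu}(\beta D)=\int_0^\infty e^{-t\beta^2 D^2}\Tilde{f}_{\beta\mu}(t)\,dt$ established above and take the trace; since $\Tilde{f}_{\beta\mu}$ is rapidly decreasing as $t\to 0^+$ (Lemma \ref{lma}) the trace may be pulled inside the integral, giving
\begin{equation*}
\tr(k_{\beta\mu}(\beta D))=\int_0^\infty \Tilde{f}_{\beta\mu}(t)\,\tr(e^{-t\beta^2 D^2})\,dt.
\end{equation*}

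Next I would substitute the heat trace expansion in the form $\tr(e^{-t\beta^2 D^2})=\sum_{l=0}^L\rho_l(t\beta^2)+R_L(t\beta^2)$, where $R_L(t\beta^2)=o_0((t\beta^2)^{r_L})$ and $R_L(t\beta^2)=o_\infty((t\beta^2)^{r_{L+1}})$, and split the integral accordingly. For each $l$, writing $\rho_l(t\beta^2)=\sum_{z\in X_l}a_z(t\beta^2)^{-z}$ and interchanging the finite sum with the integral, the definition $\chi_\mu(a)=\int_0^\infty t^a\Tilde{f}_\mu(t)\,dt$ reproduces exactly the claimed main terms $\int_0^\infty\Tilde{f}_{\beta\mu}(t)\rho_l(t\beta^2)\,dt=\sum_{z\in X_l}a_z\beta^{-2z}\chi_{\beta\mu}(-z)=\psi_l(\beta,\mu)$. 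The entire content of the statement then reduces to controlling the remainder integral $\int_0^\infty\Tilde{f}_{\beta\mu}(t)R_L(t\beta^2)\,dt$. For this I would invoke the argument producing estimate \eqref{eq_remainder_omega} in Proposition \ref{prop_ave_eng_D_1_mu} (itself modelled on \cite[Theorem 3.20]{Eckstein:2019dcb}): using the two-sided control on $R_L$ together with the moment representation of $\chi_{\beta\mu}$, for every $\epsilon>0$ there is $M>0$ such that
\begin{equation*}
\Big|\int_0^\infty \Tilde{f}_{\beta\mu}(t)R_L(t\beta^2)\,dt\Big|
\leq
\epsilon\,\beta^{2r_L}\chi_{\beta\mu}(r_L)+M\,\beta^{2r_{L+1}}\chi_{\beta\mu}(r_{L+1}).
\end{equation*}

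The main obstacle, exactly as in the average-energy case, is then to show that the second term is $o_0(\beta^{2r_L}\chi_{\beta\mu}(r_L))$, i.e. that $\beta^{2(r_{L+1}-r_L)}\chi_{\beta\mu}(r_{L+1})/\chi_{\beta\mu}(r_L)\to 0$ in the regime $|\beta\mu|\to 0$, $|\mu|\to\infty$. Here the bosonic case turns out to be cleaner than the fermionic one: the $\xi$-function expansion of $\chi_\mu$ shows that as $\beta\mu\to 0$ the singular term dominates, $\chi_{\beta\mu}(a)\sim\tfrac{\Gamma(a)}{2}|\beta\mu|^{-2a}$ for $a>0$, so the ratio behaves like $\tfrac{\Gamma(r_{L+1})}{\Gamma(r_L)}|\beta\mu|^{-2(r_{L+1}-r_L)}$ and the powers of $\beta$ cancel:
\begin{equation*}
\beta^{2(r_{L+1}-r_L)}\frac{\chi_{\beta\mu}(r_{L+1})}{\chi_{\beta\mu}(r_L)}
\sim
\frac{\Gamma(r_{L+1})}{\Gamma(r_L)}\,|\mu|^{-2(r_{L+1}-r_L)}\to 0,
\end{equation*}
since $r_{L+1}>r_L$ and $|\mu|\to\infty$. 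This is precisely why the double limit $|\beta\mu|\to 0$, $|\mu|\to\infty$ appears in the statement, and it explains why no separate case analysis on the signs of the exponents (as was needed for the average energy) is required.

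Finally, I would run the identical estimate with $\rho_l(t\beta^2)=o_0((t\beta^2)^{r_l})$ and $\rho_l(t\beta^2)=o_\infty((t\beta^2)^{r_{l+1}})$ in place of the remainder, which yields the termwise bound $\psi_l(\beta,\mu)=o_0(\beta^{2r_l}\chi_{\beta\mu}(r_l))$ and completes the proof. The only genuinely new ingredient relative to Proposition \ref{prop_ave_eng_D_1_mu} is the leading-order behaviour $\chi_{\beta\mu}(a)\sim\tfrac{\Gamma(a)}{2}|\beta\mu|^{-2a}$, so I expect the ratio limit above to be the sole nontrivial step; the sign-indefiniteness of $\Tilde{f}_{\beta\mu}$ is absorbed into the remainder estimate imported from \cite{Eckstein:2019dcb} and requires no extra work.
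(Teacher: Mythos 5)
Your overall scheme coincides with the paper's: the paper proves this proposition by literally rerunning the argument of Proposition \ref{prop_ave_eng_D_1_mu} with $r_{\beta\mu}$ and $\omega_{\beta\mu}$ replaced by $\Tilde{f}_{\beta\mu}$ and $\chi_{\beta\mu}$, which is what you do up to the remainder estimate. The genuine gap sits precisely in what you yourself identify as the sole nontrivial step. The asymptotics $\chi_{\beta\mu}(a)\sim\frac{\Gamma(a)}{2}|\beta\mu|^{-2a}$ as $|\beta\mu|\to 0$ is valid only for $a>0$: in the expansion
$$
\chi_\mu(a)=\frac{\Gamma(a)}{2}|\mu|^{-2a}-\chi(a)+\sum_{n\geq 1}(-1)^{n+1}\frac{\chi(a+n)}{n!}\mu^{2n},
$$
the first term tends to $0$ (not to $\infty$) when $a<0$, so that the $\mu$-independent term $-\chi(a)\neq 0$ dominates instead; at $a=0$ the poles of $\Gamma(a)$ and of $\chi(a)$ cancel and one gets a logarithmic growth $\chi_{\beta\mu}(0)\sim -\log|\beta\mu|$. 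Since the scale in \eqref{eq_ht} only satisfies $r_0<r_1<\cdots\nearrow\infty$, the lower exponents are typically negative --- for a $d$-dimensional geometry the leading heat coefficient $t^{-d/2}$ forces $r_0<-d/2<0$ --- so in the most relevant instance $L=0$ your claimed behaviour of $\chi_{\beta\mu}(r_L)$ is simply false, and your assertion that the bosonic case needs ``no separate case analysis on the signs of the exponents'' is incorrect: it is structurally identical to the fermionic case, with the threshold at $a=0$ playing the role that $a=\tfrac12$ played for $\omega_\mu$ in \eqref{eq_espresses_omega_mu}.

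The conclusion survives, and the repair is exactly the case distinction you dismissed, run on the signs of $r_L$ and $r_{L+1}$. If $0<r_L<r_{L+1}$, your computation applies and the ratio behaves like $\frac{\Gamma(r_{L+1})}{\Gamma(r_L)}|\mu|^{-2(r_{L+1}-r_L)}\to 0$. If $r_L<r_{L+1}<0$, both $\chi_{\beta\mu}(r_L)$ and $\chi_{\beta\mu}(r_{L+1})$ tend to the nonzero constants $-\chi(r_L)$, $-\chi(r_{L+1})$, and the ratio is $O(\beta^{2(r_{L+1}-r_L)})\to 0$, using that $\beta=|\beta\mu|/|\mu|\to 0$ in the stated double limit. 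If $r_L<0<r_{L+1}$, then
$$
\beta^{2(r_{L+1}-r_L)}\frac{\chi_{\beta\mu}(r_{L+1})}{\chi_{\beta\mu}(r_L)}
\sim
C\,\beta^{2(r_{L+1}-r_L)}|\beta\mu|^{-2r_{L+1}}
=
C\,\beta^{-2r_L}|\mu|^{-2r_{L+1}}\to 0,
$$
since $-2r_L>0$ and $-2r_{L+1}<0$; the boundary cases $r_L=0$ or $r_{L+1}=0$ are handled the same way with the logarithmic asymptotics. With this case analysis substituted for your single asymptotic (and the same modification made in the termwise bound for $\psi_l$), your proof is complete and agrees with the paper's.
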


\subsection{The average energy of $K_{b,\mu}$}
In the bosonic case, the average energy with respect to the operator $K_{b,\mu} = d\Gamma H_{b,\mu}$ is equal to 
\begin{equation*}
\langle d\Gamma H_{b,\mu} \rangle_\beta
=
-\frac{\partial}{\partial \beta}\left(\log Z_{b,\beta,\mu}\right)
=
-\tr\left(\frac{H_{b,\mu}}{1-e^{\beta H_{b,\mu}}}\right).
\end{equation*}
If we write $p_{\mu}(x):=-\frac{\sqrt{x^2+\mu^2}}{1-e^{\sqrt{x^2+\mu^2}}}$, then the average energy is equal to
\begin{equation*}
\langle d\Gamma H_{b,\mu} \rangle_\beta
=
\frac{1}{\beta}\tr(p_{\beta\mu}(\beta D)).
\end{equation*}
As with the discussion in section \ref{section_fermi_av_eng},
with  the chemical potential $\mu<0$,
the function $p_{\mu}(x)$ is given by the following Laplace transform:
\begin{equation}\label{integral}
p_{\mu}(\sqrt x)
=
\int_0^\infty s_\mu(t)e^{-tx}dt, \qquad \mu<0, \quad  x\geq 0,
\end{equation}
where
\begin{equation*}
s_\mu(t)
=
\frac{1}{\sqrt{\pi}} t^{-5/2}\sum_{n=1}^\infty \left(\frac{n^2}{4}-\frac{t}{2}\right)e^{-\frac{n^2}{4t}}e^{-\mu^2 t}.
\end{equation*}
In contrast to the case of fermionic second quantization, here we cannot take $\mu=0$, 
as then the integral on the right-hand side of the formula \eqref{integral} does not converge.
This is consistent with the fact that $p_{0}(x)$ is singular at $x=0$.

\begin{figure}[h]

\begin{subfigure}{0.49\textwidth}
\includegraphics[width=0.9\linewidth, height=5cm]{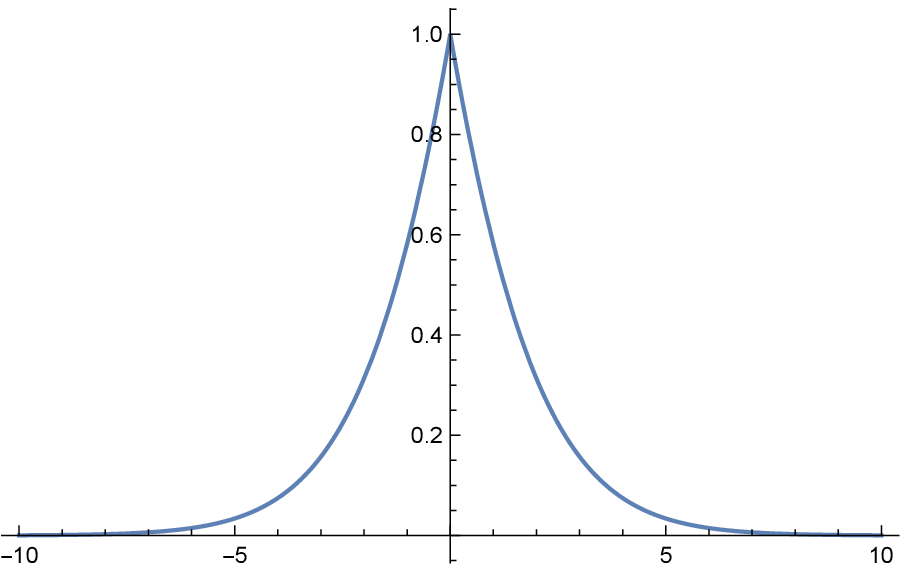}
\caption{The image of $p_{0}(x)$}
\end{subfigure}
\begin{subfigure}{0.49\textwidth}
\includegraphics[width=0.9\linewidth, height=5cm]{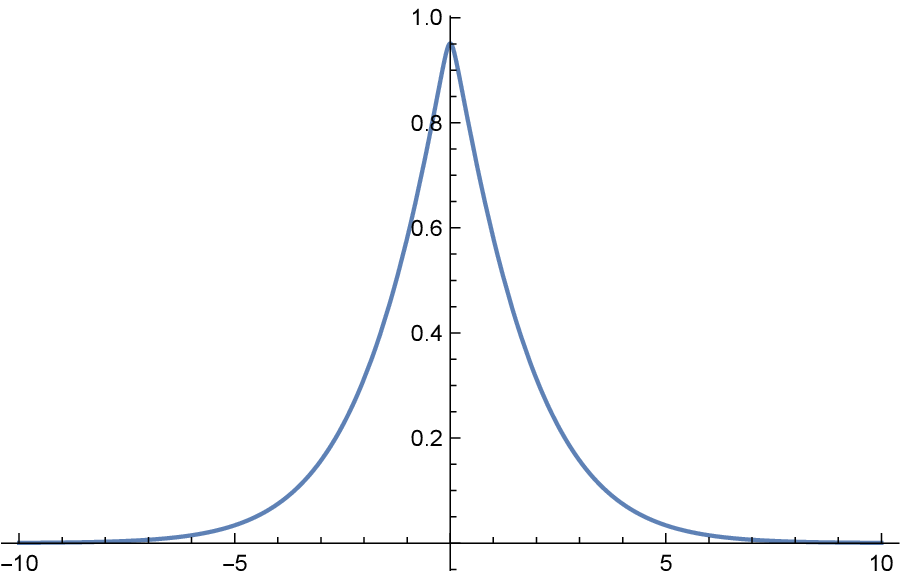}
\caption{The image of $p_{-0.1}(x)$}
\end{subfigure}

\label{fig:image2__}
\end{figure}

When the order $a<0$, we denote by $\kappa_\mu(a)$ the spectral action coefficient of the spectral action function $p_{\mu}(\sqrt{x})$,
namely,
\begin{equation*}
\kappa_{\mu}(a)
=
\int_0^\infty s_\mu(t)t^adt
=
\frac{1}{\Gamma(-a)}\int_0^\infty p_{\mu}(\sqrt{x})x^{-a-1}dx
=
\frac{2}{\Gamma(-a)}\int_0^\infty p_{\mu}(x)x^{-2a-1}dx.
\end{equation*}
Using the same argument as in section \ref{section_fermi_av_eng},
we have 
\begin{lemma}
For any fixed chemical potential $\mu<0$, 
\begin{equation}\label{formula_1st_bose_average_energy}
\kappa_\mu(a)
=
\frac{2^{-a+\frac{1}{2}}}{\sqrt{\pi}}|\mu|^{-a+\frac{1}{2}}\sum_{n=1}^\infty \left(n^{a+\frac{1}{2}}|\mu|\,  K_{-a-\frac{1}{2}}\left(n|\mu|\right)-2\,a\,n^{a-\frac{1}{2}}K_{-a+\frac{1}{2}}\left(n|\mu|\right)\right),
\end{equation}
and it can be extended to a holomorphic function on $ \mathbb{C}$.
Thus  this formula gives the spectral action coefficients of all orders.
Moreover,
\begin{equation}\label{formula_2st_bose_average_energy}
\kappa_\mu(a)
=
-\Gamma(a+1)\sum_{n=-\infty}^{\infty}\frac{(2n)^2\pi^2}{((2 n)^2\pi^2+\mu^2)^{a+1}}
+
\frac{|\mu|^{-2a+1}}{4\sqrt{\pi}}\Gamma\left(a-\frac{1}{2}\right), 
\end{equation}
which can also be extended to an entire function for any fixed chemical potential $\mu<0$.
\end{lemma}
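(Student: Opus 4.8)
The plan is to mirror, step for step, the computation of the fermionic average energy in Lemmas \ref{lemma_fermi_average_energy_1nd_eq} and \ref{lemma_fermi_average_energy_2nd_eq}, the only structural change being that the alternating geometric series is replaced by the ordinary one. Indeed, for $x>0$,
$$
p_\mu(x)=-\frac{\sqrt{x^2+\mu^2}}{1-e^{\sqrt{x^2+\mu^2}}}=\sum_{n=1}^\infty\sqrt{x^2+\mu^2}\,e^{-n\sqrt{x^2+\mu^2}},
$$
which is exactly the series defining $u_\mu$ in Section \ref{section_fermi_av_eng}, but with every factor $(-1)^{n+1}$ deleted. Substituting this expansion into
$$
\kappa_\mu(a)=\frac{2}{\Gamma(-a)}\int_0^\infty p_\mu(x)\,x^{-2a-1}\,dx
$$
and interchanging sum and integral (justified by absolute convergence for $a<0$) reduces the problem to the two Bessel integrals of Lemma \ref{int1}, now summed with all-positive coefficients. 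Collecting the resulting $K_{-a-\frac12}$ and $K_{-a+\frac12}$ contributions yields \eqref{formula_1st_bose_average_energy}.

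For the claim that \eqref{formula_1st_bose_average_energy} extends to an entire function of $a$, I would repeat verbatim the argument of Proposition \ref{thm_fermi_coeff}. Using $K_{-a\mp\frac12}=K_{a\pm\frac12}$, the two Bessel terms read $\sum_{n\ge1}n^\nu K_\nu(n|\mu|)$ with $\nu=a+\tfrac12$ and $\nu=a-\tfrac12$; the integral representation $K_\nu(z)=\int_0^\infty e^{-z\cosh t}\cosh(\nu t)\,dt$ shows that $\nu\mapsto K_\nu(z)$ is entire and supplies the bound $|K_\nu(z)|\le K_R(z)$ for $|\nu|\le R$, while the large-argument asymptotic $K_\nu(z)\sim\sqrt{\pi/(2z)}\,e^{-z}$ forces $\sum_n n^{R+2}K_R(n|\mu|)$ to converge. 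Hence the series is locally uniformly convergent and therefore entire; the absence of the sign $(-1)^n$ only makes the majorant more favourable, since the fermionic argument already passed to absolute values.

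For the second formula I would first record the non-alternating analogue of the Poisson summation identity used in Lemma \ref{lemma_fermi_average_energy_2nd_eq}, namely, for $\nu>0$ and $z>0$,
$$
\sum_{n=1}^\infty n^\nu K_\nu(zn)=\frac{\sqrt\pi}{2}\Gamma\!\left(\nu+\tfrac12\right)(2z)^\nu\sum_{n=-\infty}^\infty\frac{1}{\big((2\pi n)^2+z^2\big)^{\nu+\frac12}}-\frac{\Gamma(\nu)}{4}\left(\frac{2}{z}\right)^\nu.
$$
Deleting the factor $(-1)^n$ shifts the spectral lattice from the half-integers $(2n+1)\pi$ to the full integers $2\pi n$, which is precisely the passage from anti-periodic (fermionic) to periodic (bosonic) modes. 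Applying this with $\nu=a+\tfrac12$ to the first Bessel term of \eqref{formula_1st_bose_average_energy} and with $\nu=a-\tfrac12$ to the second, the two lattice sums merge — using $|\mu|^2=\mu^2$ in the numerator — into $-\Gamma(a+1)\sum_n\frac{(2n)^2\pi^2}{((2n)^2\pi^2+\mu^2)^{a+1}}$, while the two $\Gamma(\nu)$-corrections combine, via $\Gamma(a+\tfrac12)=(a-\tfrac12)\Gamma(a-\tfrac12)$, into $\frac{|\mu|^{-2a+1}}{4\sqrt\pi}\Gamma(a-\tfrac12)$. This gives \eqref{formula_2st_bose_average_energy} in the range where the manipulations converge (namely $a>\tfrac12$, so that both $\nu=a\pm\tfrac12>0$). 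Since \eqref{formula_1st_bose_average_energy} already exhibits $\kappa_\mu(a)$ as entire, the identity propagates to all $a\in\mathbb{C}$ by analytic continuation.

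The main obstacle is the bookkeeping in combining the two $\nu=a\pm\tfrac12$ series. Each carries a lattice sum and a correction term, and one must check both that the lattice sums assemble into a single sum whose numerator $(2n)^2\pi^2$ makes the would-be singular $n=0$ term vanish, and that the two half-integer $\Gamma$-corrections collapse, through the reflection $\Gamma(a+\tfrac12)=(a-\tfrac12)\Gamma(a-\tfrac12)$, into the single term $\tfrac14\pi^{-1/2}\Gamma(a-\tfrac12)|\mu|^{-2a+1}$. This residual term, which diverges as $\mu\to0$, is exactly the analytic trace of the phenomenon noted after \eqref{integral}: the function $p_0$ is singular at the origin and the bosonic chemical potential cannot be switched off.
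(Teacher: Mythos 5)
Your proposal is correct and is essentially the paper's own proof: the paper disposes of this lemma with the single line ``Using the same argument as in section \ref{section_fermi_av_eng}'', i.e.\ precisely your plan of repeating the fermionic computation with the alternating series $\sum_n(-1)^{n+1}(\cdot)$ replaced by $\sum_n(\cdot)$, the Bessel integrals of Lemma \ref{int1} with $\nu'=-a-\tfrac12$, the entirety argument of Proposition \ref{thm_fermi_coeff}, and Poisson summation on the lattice $2\pi n$ via \eqref{0_order_Fourier_Transform} instead of \eqref{Alt_0_order_Fourier_transform}, followed by analytic continuation from $a>\tfrac12$. Your bookkeeping checks out in detail: the merged lattice sums give $-\Gamma(a+1)\sum_n (2n)^2\pi^2\big((2n)^2\pi^2+\mu^2\big)^{-a-1}$ with the $n=0$ term vanishing, and the two corrections $-\tfrac{1}{2\sqrt{\pi}}\Gamma(a+\tfrac12)|\mu|^{1-2a}$ and $+\tfrac{a}{2\sqrt{\pi}}\Gamma(a-\tfrac12)|\mu|^{1-2a}$ indeed collapse via $\Gamma(a+\tfrac12)=(a-\tfrac12)\Gamma(a-\tfrac12)$ to $\tfrac{1}{4\sqrt{\pi}}\Gamma(a-\tfrac12)|\mu|^{1-2a}$, reproducing \eqref{formula_2st_bose_average_energy}.
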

Using the same trick as in the previous section,
we can express $\kappa_\mu(a)$ in terms of Riemann $\xi$ function as follows:
\begin{lemma}
Let 
\begin{equation*}
\kappa(a)
=
\frac{2\xi(2a)}{(4\pi)^a(2a-1)},
\end{equation*}
when $\mu\in(-2\pi,0)$,
we can express the function $\alpha_\mu(a)$ as
\begin{equation}\label{eq_kappa_mu}
\kappa_\mu(a)
=
\sum_{n=0}^\infty(-1)^{n+1}\frac{\kappa(a+n)}{n!}\mu^{2n}+\frac{|\mu|^{1-2a}}{4\sqrt{\pi}}\Gamma\left(a-\frac{1}{2}\right).
\end{equation}
\end{lemma}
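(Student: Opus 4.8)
The plan is to start from the Poisson-summation expression \eqref{formula_2st_bose_average_energy} for $\kappa_\mu(a)$ and, exactly as in the proof of Lemma \ref{prop_gamma_mu_xi} for the fermionic entropy, to expand each lattice summand binomially in powers of $\mu^2$. Note first that in the sum $\sum_{n=-\infty}^\infty \frac{(2n)^2\pi^2}{((2n)^2\pi^2+\mu^2)^{a+1}}$ the $n=0$ term vanishes, so it equals $2\sum_{n=1}^\infty \frac{(2n)^2\pi^2}{((2n)^2\pi^2+\mu^2)^{a+1}}$. For $\mu\in(-2\pi,0)$ one has $\mu^2/((2n)^2\pi^2)<1$ for every $n\geq 1$, which is precisely what licenses the expansion
$$
\frac{1}{((2n)^2\pi^2+\mu^2)^{a+1}}
=
\frac{1}{((2n)^2\pi^2)^{a+1}}\sum_{k=0}^\infty\frac{(-1)^k\Gamma(a+1+k)}{\Gamma(a+1)\,k!}\left(\frac{\mu^2}{(2n)^2\pi^2}\right)^k .
$$

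Next I would justify interchanging the sums over $n$ and $k$ and perform the $n$-summation, which produces a Riemann zeta value:
$$
\sum_{n=1}^\infty \frac{1}{((2n)^2\pi^2)^{a+k}}=\frac{\zeta(2(a+k))}{(4\pi^2)^{a+k}} .
$$
After multiplying by $-\Gamma(a+1)$ and using $\Gamma(a+1+k)=(a+k)\Gamma(a+k)$, the term indexed by $k$ becomes a constant times $\zeta(2(a+k))\,\mu^{2k}$.

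Then I would convert each $\zeta(2(a+k))$ into the $\xi$ function through the completion formula $\xi(s)=\tfrac{s(s-1)}{2\pi^{s/2}}\Gamma(s/2)\zeta(s)$ with $s=2(a+k)$. Writing $s(s-1)/2=(a+k)(2(a+k)-1)$, the factor $(a+k)\Gamma(a+k)$ produced above cancels the $(a+k)\Gamma(a+k)$ coming from $\xi$, leaving only the denominator $(2(a+k)-1)$; together with $\pi^{a+k}/(4\pi^2)^{a+k}=(4\pi)^{-(a+k)}$ this yields exactly $(-1)^{k+1}\kappa(a+k)/k!$ for $\kappa(a)=\frac{2\xi(2a)}{(4\pi)^a(2a-1)}$. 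Adding back the untouched boundary term $\frac{|\mu|^{1-2a}}{4\sqrt\pi}\Gamma(a-\tfrac12)$ from \eqref{formula_2st_bose_average_energy} then produces \eqref{eq_kappa_mu}.

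The only genuinely delicate point is the interchange of summations together with the analytic-continuation bookkeeping: one must check that the double series converges absolutely for $\mu\in(-2\pi,0)$, so Fubini applies and the $\zeta$-values make sense as written, and that the identity, literally valid for $\Re(a)$ large, persists for all $a$ by analyticity of both sides, exactly as argued earlier for $\gamma_\mu$ and $\omega_\mu$. Everything else is the same Gamma-function algebra already carried out in Lemma \ref{prop_gamma_mu_xi}.
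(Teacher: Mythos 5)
Your proposal is correct and is precisely the argument the paper intends: it states the lemma with ``using the same trick as in the previous section,'' i.e.\ the proof of Lemma \ref{prop_gamma_mu_xi} applied to the Poisson-summation formula \eqref{formula_2st_bose_average_energy}, with the binomial expansion in $\mu^2/((2n)^2\pi^2)$ (valid for $\mu\in(-2\pi,0)$ after dropping the vanishing $n=0$ term), the $n$-sum giving $\zeta(2(a+k))(4\pi^2)^{-(a+k)}$, and the completion formula converting to $\xi$, while the $\Gamma(a-\tfrac12)$ boundary term is carried along untouched. Your Gamma-function bookkeeping and the absolute-convergence/analytic-continuation remarks check out, so nothing is missing.
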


\begin{prop}
In terms of the heat trace expansion \eqref{eq_ht} let 
\begin{equation*}
\psi_l(\beta, \mu)
:=
\frac{1}{\beta}\int_0^\infty \rho_{l}(t\beta^2)s_{\beta\mu}(t)dt
=
\frac{1}{\beta}\sum_{z\in X_l}a_z\beta^{-2z}\kappa_{\beta\mu}(-z).
\end{equation*}
We then obtain the asymptotic expansion for the average energy:
\begin{equation*}
\langle K_{b,\mu}\rangle_\beta
=
\frac{1}{\beta}\tr(p_{\beta\mu}(\beta D))
\sim
\sum_l\psi_l(\beta, \mu),\quad |\beta\mu|\to 0,\,\,\, |\mu|\to \infty,
\end{equation*}
in the sense that 
\begin{equation*}
\frac{1}{\beta}\tr(p_{\beta\mu}(\beta D))
-
\sum_{l=0}^L\psi_l(\beta, \mu)
=
o_0(\beta^{2r_L-1}\kappa_{\beta\mu}(r_L)),\end{equation*}
and each term $\psi_l(\beta, \mu)=o_0(\beta^{2r_l-1}\kappa_{\beta\mu}(r_l))$.
\end{prop}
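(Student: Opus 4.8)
The plan is to follow verbatim the strategy of the proof of Proposition \ref{prop_ave_eng_D_1_mu}, now with the triple $(p_\mu, s_\mu, \kappa_\mu)$ playing the role of $(u_\mu, r_\mu, \omega_\mu)$ there. First I would start from the Laplace representation \eqref{integral} to write
$$
\langle K_{b,\mu}\rangle_\beta
=
\frac{1}{\beta}\tr(p_{\beta\mu}(\beta D))
=
\frac{1}{\beta}\int_0^\infty s_{\beta\mu}(t)\tr\left(e^{-t\beta^2 D^2}\right)dt,
$$
and then insert the heat trace expansion \eqref{eq_ht} in the form $\tr(e^{-t\beta^2 D^2}) = \sum_{l=0}^L\rho_l(t\beta^2) + R_L(t\beta^2)$, where $R_L(t\beta^2) = o_0((t\beta^2)^{r_L})$ and $R_L(t\beta^2) = O_\infty((t\beta^2)^{r_{L+1}})$. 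Integrating the finite sum term by term and using that $\rho_l(t\beta^2) = \sum_{z\in X_l}a_z(t\beta^2)^{-z}$ together with the defining integral $\kappa_{\beta\mu}(a) = \int_0^\infty s_{\beta\mu}(t)\,t^a\,dt$, each piece collapses to $\frac{1}{\beta}\int_0^\infty s_{\beta\mu}(t)\rho_l(t\beta^2)\,dt = \frac{1}{\beta}\sum_{z\in X_l}a_z\beta^{-2z}\kappa_{\beta\mu}(-z) = \psi_l(\beta,\mu)$, which reproduces the claimed form of $\psi_l$.

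The remaining work is to bound the remainder integral $\frac{1}{\beta}\int_0^\infty s_{\beta\mu}(t)R_L(t\beta^2)\,dt$. As in Proposition \ref{prop_ave_eng_D_1_mu}, an argument parallel to \cite[Theorem 3.20]{Eckstein:2019dcb}, exploiting $R_L(t\beta^2) = o_0((t\beta^2)^{r_L})$ and $R_L(t\beta^2) = O_\infty((t\beta^2)^{r_{L+1}})$, yields for every $\epsilon>0$ a constant $M>0$ with
$$
\left|\frac{1}{\beta}\int_0^\infty s_{\beta\mu}(t)R_L(t\beta^2)\,dt\right|
\leq
\epsilon\,\beta^{2r_L-1}\kappa_{\beta\mu}(r_L)
+
M\,\beta^{2r_{L+1}-1}\kappa_{\beta\mu}(r_{L+1}),
$$
so that the remainder is $o_0(\beta^{2r_L-1}\kappa_{\beta\mu}(r_L))$ as soon as
$$
\beta^{2(r_{L+1}-r_L)}\,\frac{\kappa_{\beta\mu}(r_{L+1})}{\kappa_{\beta\mu}(r_L)}\longrightarrow 0,\qquad |\beta\mu|\to 0,\ |\mu|\to\infty.
$$

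The main obstacle, precisely as in the fermionic average-energy case, is this ratio estimate, which I would settle using the explicit expansion \eqref{eq_kappa_mu}. As $|\beta\mu|\to 0$ the coefficient $\kappa_{\beta\mu}(a)$ behaves like $-\kappa(a) + \frac{|\beta\mu|^{1-2a}}{4\sqrt{\pi}}\Gamma\left(a-\tfrac{1}{2}\right)$, so the dominant contribution is governed by the sign of $1-2a$: for $a>\tfrac12$ the singular power of $|\beta\mu|$ wins, while for $a<\tfrac12$ the regular part $-\kappa(a)$ wins. I would then split into the same four cases as in Proposition \ref{prop_ave_eng_D_1_mu} according to the signs of $1-2r_L$ and $1-2r_{L+1}$. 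When both are negative the singular parts dominate and the ratio is asymptotic to $\frac{\Gamma(r_{L+1}-1/2)}{\Gamma(r_L-1/2)}|\mu|^{2(r_L-r_{L+1})}\to 0$; when both are positive the regular parts dominate and the ratio is asymptotic to $\frac{\kappa(r_{L+1})}{\kappa(r_L)}\beta^{2(r_{L+1}-r_L)}\to 0$; and the two mixed cases combine these behaviours, each again tending to $0$ under the prescribed double limit $|\beta\mu|\to 0,\ |\mu|\to\infty$. Finally, applying the identical estimate to each individual term, using $\rho_l(t\beta^2) = o_0((t\beta^2)^{r_l})$ and $\rho_l(t\beta^2) = o_\infty((t\beta^2)^{r_{l+1}})$, gives $\psi_l(\beta,\mu) = o_0(\beta^{2r_l-1}\kappa_{\beta\mu}(r_l))$, which completes the argument.
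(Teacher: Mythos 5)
Your proposal is correct and takes essentially the same route as the paper, which proves the fermionic analogue (Proposition \ref{prop_ave_eng_D_1_mu}) in detail and then invokes the identical argument for the bosonic case with $(p_\mu, s_\mu, \kappa_\mu)$ in place of $(u_\mu, r_\mu, \omega_\mu)$; in particular your ratio estimate via \eqref{eq_kappa_mu}, with the singular term $\frac{|\beta\mu|^{1-2a}}{4\sqrt{\pi}}\Gamma\left(a-\frac{1}{2}\right)$ dominating for $a>\frac{1}{2}$ and the regular part $-\kappa(a)$ for $a<\frac{1}{2}$, is exactly the needed adaptation. One cosmetic point: since $r_L<r_{L+1}$ forces $1-2r_{L+1}<1-2r_L$, only three sign configurations actually occur rather than four, the mixed case $1-2r_L<0<1-2r_{L+1}$ being vacuous.
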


\subsection{The entropy of $\rho_b'$}
Let $k_\mu'(x)=k(x-\mu)$, 
where the function $k(x)$ is given in \eqref{eq_k}.
Then the entropy of the density operator $\rho_{b}'$ is given by
\begin{equation*}
\mathcal{S}(\rho_b')
=
-\tr(\rho_b'\log\rho_b')
=
\tr(k_{\beta\mu}'(\beta D^2)).
\end{equation*}

Now we shall consider the spectral action
\begin{equation*}
\tr(k_{\beta\mu}(\beta D^2))
=
\int_0^\infty \Tilde{{f}}(t)\tr(e^{-t\beta^2(D^2-\mu)^2})dt.
\end{equation*}
\begin{prop}
With the asymptotic expansion \eqref{eq_ht_D4_mu},
let 
\begin{equation*}
\psi_{l,k}(\beta, \mu)=\int_0^\infty \Tilde{f}_{\beta\mu}(t)(t\beta^2)^k\Tilde{\rho}_{l,k}(t\beta^2)dt,
\end{equation*}
and, more explicitly,
\begin{equation*}
\begin{aligned}
\psi_{l,k}(\beta, \mu)
=
&\frac{1}{2}\sum_{\substack{z\in X_l\\(z+k)/2\notin \mathbb{Z}^-}}\Gamma\left(\frac{k+z}{2}\right)\Res(\zeta_{D^2}(s), z)\chi_{\beta\mu}\left(\frac{-z+k}{2}\right)\beta^{-z+k}\\
&+
\sum_{\substack{z\in X_l\\(z+k)/2\in \mathbb{Z}^-}}\frac{(-1)^{\frac{k+z}{2}}}{\left(-\frac{k+z}{2}\right)!}\zeta_{D^2}(z)\chi_{\beta\mu}\left(\frac{-z+k}{2}\right)\beta^{-z+k}.
\end{aligned}
\end{equation*}

Then there is the following asymptotic expansion:
\begin{equation*}
\mathcal{S}(\rho_b')
=
\tr(k_{\beta\mu}(\beta D^2))
\sim
\sum_{l,k}\frac{(2\mu)^k}{k!}\psi_{l,k}(\beta, \mu),\quad |\beta\mu|\to 0,\,\, |\mu|\to \infty,
\end{equation*}
More precisely,
\begin{equation*}
\tr(k_{\beta\mu}(\beta D^2))
-
\sum_{\substack{0\leq k\leq K\\r_{l+1} \leq r_0+K-k}}\frac{(2\mu)^k}{k!}\psi_{l,k}(\beta, \mu)
=
o_0\left(\beta^{r_0+K}\chi_{\beta\mu}\left(\frac{r_0+K}{2}\right)|\mu|^K\right),
\end{equation*}
and each term
\begin{equation*}
\frac{(2\mu)^k}{k!}\psi_{l,k}(\beta, \mu)
=
o_0\left(\beta^{r_{l}+k}\chi_{\beta\mu}\left(\frac{r_{l}+k}{2}\right)|\mu|^{k}\right).
\end{equation*}
\end{prop}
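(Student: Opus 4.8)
The plan is to mirror the derivations of $\mathcal{S}(\rho_f')$ and $\langle K_{f,\mu}'\rangle_\beta$, now with the bosonic entropy kernel $\Tilde{f}$ and its spectral coefficients $\chi_\mu$ playing the roles of $\Tilde{g}$ (resp. $r_\mu$) and $\gamma_\mu$ (resp. $\omega_\mu$). First I would invoke the Laplace-transform representation $k(y)=\int_0^\infty e^{-ty^2}\Tilde{f}(t)\,dt$, valid for $y>0$ as established above, with $y=\beta(D^2-\mu)$, which is positive since $\mu<0$. Writing $(D^2-\mu)^2=D^4-2\mu D^2+\mu^2$ and absorbing $e^{-t\beta^2\mu^2}$ into $\Tilde{f}_{\beta\mu}(t)=e^{-t\beta^2\mu^2}\Tilde{f}(t)$ gives
\[
\tr(k_{\beta\mu}'(\beta D^2))=\int_0^\infty \Tilde{f}_{\beta\mu}(t)\,\tr\!\left(e^{-t\beta^2(D^4-2\mu D^2)}\right)dt,
\]
which is the exact bosonic counterpart of the integral representation already used for $\langle K_{f,\mu}'\rangle_\beta$.

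The second step is to expand the inner trace in powers of $\mu$. From $e^{-t\beta^2(D^4-2\mu D^2)}=\sum_{k\ge 0}\frac{(2\mu)^k(t\beta^2)^k}{k!}\,D^{2k}e^{-t\beta^2 D^4}$ and linearity of the trace one obtains $\tr(e^{-t\beta^2(D^4-2\mu D^2)})=\sum_k\frac{(2\mu)^k(t\beta^2)^k}{k!}\tr(D^{2k}e^{-t\beta^2 D^4})$, and I would feed in the heat-trace expansion \eqref{eq_ht_D4_mu} together with the explicit coefficients \eqref{eq_rho_l_k}, writing $\tr(D^{2k}e^{-t\beta^2 D^4})\sim\sum_l\Tilde{\rho}_{l,k}(t\beta^2)$, whose pole structure is governed by $\zeta_{D^2}$ because $D^4=(D^2)^2$. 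Integrating term by term against $\Tilde{f}_{\beta\mu}$ and using $\chi_\mu(a)=\int_0^\infty t^a\Tilde{f}_\mu(t)\,dt$, every monomial $(t\beta^2)^{(-z+k)/2}$ arising from $(t\beta^2)^k\Tilde{\rho}_{l,k}(t\beta^2)$ produces a factor $\chi_{\beta\mu}(\tfrac{-z+k}{2})\beta^{-z+k}$. Separating the dimension-spectrum poles $(z+k)/2\notin\mathbb{Z}^-$, which contribute the $\tfrac12\Gamma(\tfrac{k+z}{2})\Res(\zeta_{D^2},z)$ terms, from the negative-integer locations $(z+k)/2\in\mathbb{Z}^-$, where the poles of $\Gamma$ instead yield the $\zeta_{D^2}(z)$ terms with factor $\frac{(-1)^{(k+z)/2}}{(-(k+z)/2)!}$, then reproduces the stated closed form of $\psi_{l,k}(\beta,\mu)$.

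The main work is the remainder estimate in the joint limit $|\beta\mu|\to 0$, $|\mu|\to\infty$. Exactly as in the proof of the proposition for $\langle K_{f,\mu}'\rangle_\beta$, I would set $R_K(t)=\tr(e^{-t\beta^2(D^4-2\mu D^2)})-e^{-t\beta^2\mu^2}\sum\frac{(2t\beta^2\mu)^k}{k!}\Tilde{\rho}_{l,k}(t\beta^2)$ over the truncation range $r_{l+1}\le r_0+K-k$, read off its small-$t$ and large-$t$ orders from \eqref{eq_ht_D4_mu}, and bound $\int_0^\infty\Tilde{f}_{\beta\mu}(t)R_K(t)\,dt$ by the method of \cite[Theorem 3.20]{Eckstein:2019dcb}, using that $\Tilde{f}_{\beta\mu}$ is rapidly decreasing as $t\to 0^+$ and exponentially small as $t\to\infty$. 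The one genuinely new ingredient is the behaviour of the ratios $\chi_{\beta\mu}(a')/\chi_{\beta\mu}(a)$: from the $\xi$-function expansion of $\chi_\mu$ established above one has $\chi_{\beta\mu}(a)\sim\tfrac{\Gamma(a)}{2}|\beta\mu|^{-2a}$ as $|\beta\mu|\to 0$, and I expect the hardest point to be the case analysis, parallel to that in the proof of Proposition \ref{prop_ave_eng_D_1_mu}, checking that the accompanying powers of $\beta$ and $|\mu|$ multiplying these ratios all tend to $0$ under the two simultaneous constraints $|\beta\mu|\to 0$ and $|\mu|\to\infty$. Once this comparison is in hand, both the individual orders $\frac{(2\mu)^k}{k!}\psi_{l,k}(\beta,\mu)=o_0(\beta^{r_l+k}\chi_{\beta\mu}(\tfrac{r_l+k}{2})|\mu|^k)$ and the aggregate remainder follow as in the fermionic cases.
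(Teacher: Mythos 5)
Your proposal is correct and takes essentially the same approach as the paper, which states this proposition without a separate proof precisely because it is the argument you outline: the Laplace representation of $k$ applied to $\beta(D^2-\mu)>0$, termwise integration of the expansion \eqref{eq_ht_D4_mu} against $\Tilde{f}_{\beta\mu}$ to produce the factors $\chi_{\beta\mu}\left(\frac{-z+k}{2}\right)\beta^{-z+k}$, and the remainder and ratio estimates (using $\chi_{\beta\mu}(a)\sim\frac{\Gamma(a)}{2}|\beta\mu|^{-2a}$) transplanted from the proofs of Proposition \ref{prop_ave_eng_D_1_mu} and of the proposition for $\langle K_{f,\mu}'\rangle_\beta$. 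The only slip is cosmetic: the explicit coefficients you need are the $\Tilde{\rho}_{l,k}$ displayed after \eqref{eq_ht_D4} (with the factor $\frac{1}{2}\Gamma\left(\frac{z+k}{2}\right)$ coming from $\zeta_{D^2}(2s)$), not \eqref{eq_rho_l_k}, which is their $D^2$ analogue.
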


\subsection{The average energy of $K_{b,\mu}'$}
Let $p_{\mu}(x)=-\frac{x-\mu}{1-e^{x-\mu}}$ and recall that $H_{b,\mu}'=D^2-\mu\id$.
By definition,
the average energy of $K_{b,\mu} '= d\Gamma H_{b,\mu}'$
is given by
\begin{equation*}
\langle K_{b,\mu}'\rangle_\beta
=
-\frac{\partial}{\partial \beta}(\log Z_{b,\beta,\mu})
=
\frac{1}{\beta}\tr(p_{\beta\mu}(\beta D^2))
=
\frac{1}{\beta}\int_0^\infty s(t)\tr\left(e^{-t\beta^2(D^2-\mu)^2}\right)dt.
\end{equation*}
Using similar argument as before,
we obtain the following proposition:

\begin{prop}
With the asymptotic expansion \eqref{eq_ht_D4_mu},
let 
\begin{equation*}
\psi_{l,k}(\beta, \mu)
=
\frac{1}{\beta}\int_0^\infty s_{\beta\mu}(t)(t\beta^2)^k\Tilde{\rho}_{l,k}(t\beta^2)dt,
\end{equation*}
and, more explicitly,
\begin{equation*}
\begin{aligned}
\psi_{l,k}(\beta, \mu)
=
&\frac{1}{2}\sum_{\substack{z\in X_l\\(z+k)/2\notin \mathbb{Z}^-}}\Gamma\left(\frac{k+z}{2}\right)\Res(\zeta_{D^2}(s), z)\kappa_{\beta\mu}\left(\frac{-z+k-1}{2}\right)\beta^{-z+k-1}\\
&+
\sum_{\substack{z\in X_l\\(z+k)/2\in \mathbb{Z}^-}}\frac{(-1)^{\frac{k+z}{2}}}{\left(-\frac{k+z}{2}\right)!}\zeta_{D^2}(z)\kappa_{\beta\mu}\left(\frac{-z+k}{2}\right)\beta^{-z+k-1}.
\end{aligned}
\end{equation*}
We then obtain the asymptotic expansion:
\begin{equation*}
\langle K_{b,\mu}'\rangle_\beta
=
\frac{1}{\beta}\int_0^\infty s(t)\tr\left(e^{-t\beta^2(D^2-\mu)^2}\right)dt
\sim
\sum_{l,k}\frac{(2\mu)^k}{k!}\psi_{l,k}(\beta, \mu),\quad |\beta\mu|\to 0,\,\, |\mu|\to \infty.
\end{equation*}
More precisely, 
\begin{equation*}
\langle K_{b,\mu}'\rangle_\beta
-
\sum_{\substack{0\leq k\leq K\\r_{l+1} \leq r_0+K-k}}\frac{(2\mu)^k}{k!}\psi_{l,k}(\beta, \mu)
=
o_0\left(\beta^{r_0+K}\kappa_{\beta\mu}\left(\frac{r_0+K+1}{2}\right)|\mu|^K\right),
\end{equation*}
and each term
\begin{equation*}
\frac{(2\mu)^k}{k!}\psi_{l,k}(\beta, \mu)
=
o_0\left(\beta^{r_{l}+k-1}\kappa_{\beta\mu}\left(\frac{r_{l}+k}{2}\right)|\mu|^{k}\right).
\end{equation*}
\end{prop}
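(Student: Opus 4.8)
The plan is to follow the pattern already used for $\langle K_{f,\mu}'\rangle_\beta$ and for the entropy $\mathcal{S}(\rho_b')$, namely to insert the relevant heat trace expansion into the Laplace representation and then integrate term by term. The starting point is the integral representation established immediately before the statement, $\langle K_{b,\mu}'\rangle_\beta=\frac{1}{\beta}\int_0^\infty s(t)\tr(e^{-t\beta^2(D^2-\mu)^2})dt$. Since $(D^2-\mu)^2=D^4-2\mu D^2+\mu^2$ and $D^2$ commutes with $D^4$, I would expand
\[
e^{-t\beta^2(D^2-\mu)^2}=e^{-t\beta^2\mu^2}\sum_{k\geq 0}\frac{(2t\beta^2\mu)^k}{k!}D^{2k}e^{-t\beta^2 D^4},
\]
take the trace, and feed in the heat expansion \eqref{eq_ht_D4_mu} of $\tr(D^{2k}e^{-sD^4})$ evaluated at $s=t\beta^2$, whose $l$-th scale term is $\tilde\rho_{l,k}(t\beta^2)$. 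Recognizing $e^{-t\beta^2\mu^2}s(t)=s_{\beta\mu}(t)$ then turns the integrand into $\sum_{k,l}\frac{(2\mu)^k}{k!}s_{\beta\mu}(t)(t\beta^2)^k\tilde\rho_{l,k}(t\beta^2)$, which is precisely $\sum_{k,l}\frac{(2\mu)^k}{k!}$ times the integrand defining $\psi_{l,k}(\beta,\mu)$.

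Next I would produce the explicit closed form for $\psi_{l,k}$. Each $\tilde\rho_{l,k}(t\beta^2)$ is a finite sum of monomials $c_z(t\beta^2)^{-(z+k)/2}$ with $z\in X_l$, where $c_z=\frac12\Gamma(\frac{k+z}{2})\Res(\zeta_{D^2},z)$ in the generic case and $c_z$ is the $\zeta_{D^2}(z)$ expression when $(z+k)/2\in\mathbb{Z}^-$, exactly as in the computation of $\mathcal{S}(\rho_b')$. Multiplying by $(t\beta^2)^k$ and integrating against $s_{\beta\mu}$ via the defining identity $\kappa_{\beta\mu}(a)=\int_0^\infty s_{\beta\mu}(t)t^a\,dt$ collapses each monomial to $c_z\beta^{k-z-1}\kappa_{\beta\mu}(\tfrac{k-z}{2})$, and these assemble into the stated formula for $\psi_{l,k}(\beta,\mu)$; since the expansion of $\tilde\rho_{l,k}$ is a finite sum, the term-by-term integration is justified.

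The substance of the proof is the remainder estimate, for which I would mimic the argument of Proposition \ref{prop_ave_eng_D_1_mu}. Truncating \eqref{eq_ht_D4_mu} at order $K$ in $k$ and at a suitable scale in $l$, the remainder $R_K(t)$ inherits small-$t$ control of type $o_0$ and large-$t$ control of type $o_\infty$ in the variable $t\beta^2$; integrating $\frac{1}{\beta}s_{\beta\mu}(t)R_K(t)$ and splitting the range at a cutoff reduces the estimate, for each $\varepsilon>0$, to showing that ratios of the form $\beta^{(\text{scale gap})}\kappa_{\beta\mu}(r_{L+1})/\kappa_{\beta\mu}(r_L)$ vanish in the double limit $|\beta\mu|\to 0$, $|\mu|\to\infty$. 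This is exactly where the explicit expansion \eqref{eq_kappa_mu} of $\kappa_\mu$ is needed: the $\xi$-function part and the singular part $\frac{|\mu|^{1-2a}}{4\sqrt\pi}\Gamma(a-\tfrac12)$ compete, and one must carry out the same case distinction on the signs of $1-2r_{L+1}$ and $1-2r_L$ as in Proposition \ref{prop_ave_eng_D_1_mu}. I expect this case analysis — tracking which of the two terms in \eqref{eq_kappa_mu} dominates $\kappa_{\beta\mu}$ as the parameters run to their limits — to be the only genuinely delicate point; everything else is bookkeeping of powers of $\beta$ and $\mu$.
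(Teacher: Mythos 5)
Your proposal is correct and takes essentially the same route as the paper, which for this proposition only says ``using similar argument as before'': expand $e^{-t\beta^2(D^2-\mu)^2}$ via \eqref{eq_ht_D4_mu} inside the Laplace representation, integrate the finite sums $\tilde\rho_{l,k}$ term by term against $s_{\beta\mu}$ using $\kappa_{\beta\mu}(a)=\int_0^\infty s_{\beta\mu}(t)t^a\,dt$, and control the remainder by the same case analysis on the two competing terms of \eqref{eq_kappa_mu} as in the proof of Proposition \ref{prop_ave_eng_D_1_mu}. As a minor remark, your term-by-term integration correctly yields $\kappa_{\beta\mu}\left(\frac{-z+k}{2}\right)\beta^{-z+k-1}$ in \emph{both} sums, which indicates that the argument $\frac{-z+k-1}{2}$ appearing in the first sum of the stated formula for $\psi_{l,k}$ is a typo in the paper.
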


\appendix

\section{Modified Bessel functions of the second kind}
\label{sect_bessel}
The modified Bessel functions $\{I_\nu(z), K_\nu(z)\}$ are the solutions of the modified Bessel differential  equation
\begin{equation*}
z^2y^{\prime\prime} + zy^{\prime} - (z^2+\nu^2)y=0,
\end{equation*}
where 
\begin{equation*}
I_\nu(z)=\left(\frac{1}{2}z\right)^{\nu}\sum_{n=0}^\infty \frac{\left(\frac{1}{2}z\right)^{2n}}{\Gamma(n+\nu+1)n!},
\end{equation*}
and
\begin{equation}\label{K_fun}
K_\nu(z)=\frac{\pi}{2}\frac{I_{-\nu}(z)-I_\nu(z)}{\sin \nu\pi}, \quad -\pi<\textrm{arg}z<\pi.
\end{equation}
The right-hand side of \eqref{K_fun} should be determined by taking the limit when $\nu$ is an integer.
The function $I_\nu(z)$ is called the modified Bessel function of the first kind, 
and $K_\nu(z)$ the modified Bessel function of the second kind.

We shall introduce some basic properties of the modified Bessel function of the second kind referring to \cite{temme1996special, bateman1954tables, gradshteyn2007} for more details.

\begin{lemma}
When $\alpha\in \mathbb{R}$, one has the formula
$$K_\alpha(z)=K_{-\alpha}(z).$$
\end{lemma}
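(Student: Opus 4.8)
The plan is to read the identity directly off the defining formula \eqref{K_fun}, and then dispose of the integer values of $\alpha$ by a separate limiting argument. Since $K_\alpha$ is defined in two regimes — genuinely by \eqref{K_fun} when $\sin\alpha\pi\neq 0$, and by a limit when $\alpha\in\mathbb{Z}$ — the proof naturally splits into these two cases.

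First I would treat the generic case $\alpha\notin\mathbb{Z}$, where $\sin\alpha\pi\neq 0$ and \eqref{K_fun} defines $K_\alpha(z)$ outright. Substituting $-\alpha$ for $\nu$ in \eqref{K_fun} gives
$$K_{-\alpha}(z)=\frac{\pi}{2}\frac{I_{\alpha}(z)-I_{-\alpha}(z)}{\sin(-\alpha\pi)}.$$
Because sine is odd, $\sin(-\alpha\pi)=-\sin(\alpha\pi)$, and the resulting minus sign is absorbed by interchanging the two terms in the numerator, which reproduces exactly the right-hand side of \eqref{K_fun} with $\nu=\alpha$. Hence $K_{-\alpha}(z)=K_\alpha(z)$ for every non-integer real $\alpha$.

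The hard part, such as it is, will be the integer case $\alpha=m\in\mathbb{Z}$, where the quotient in \eqref{K_fun} is an indeterminate $0/0$ and $K_m(z)$ is to be read as $\lim_{\nu\to m}K_\nu(z)$. Here I would argue by continuity: for fixed $z$ the function $I_\nu(z)$ is entire in the order $\nu$, so $\nu\mapsto K_\nu(z)$ extends to an analytic, in particular continuous, function of $\nu$ on all of $\mathbb{C}$. The relation $K_{-\nu}(z)=K_\nu(z)$ just established holds on the dense set $\mathbb{C}\setminus\mathbb{Z}$, and therefore persists at $\nu=m$ by continuity, yielding $K_{-m}(z)=K_m(z)$.

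Alternatively — and more transparently — I could invoke the integral representation already recorded in the proof of Proposition \ref{thm_fermi_coeff}, namely $K_\nu(z)=\int_0^\infty e^{-z\cosh t}\cosh(\nu t)\,dt$ for $\operatorname{Re}(z)>0$. Since $\cosh$ is even, $\cosh(\nu t)=\cosh(-\nu t)$, so the integrand, and hence the integral, is manifestly invariant under $\nu\mapsto-\nu$; this settles all real $\alpha$ at once (integers included) for $\operatorname{Re}(z)>0$, and the full range $-\pi<\arg z<\pi$ then follows by analytic continuation in $z$. Either route works; the only genuine subtlety is ensuring that the integer case is handled as a limit rather than by blindly evaluating \eqref{K_fun}, which the continuity argument or the integral representation resolves cleanly.
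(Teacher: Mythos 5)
Your argument is correct, but be aware that the paper itself gives no proof of this lemma at all: it is listed in Appendix \ref{sect_bessel} among the standard properties of $K_\nu$, with a blanket pointer to \cite{temme1996special, bateman1954tables, gradshteyn2007}, and is then simply used as a known identity (for instance as $K_{-a+\frac{3}{2}}(n|\mu|)=K_{a-\frac{3}{2}}(n|\mu|)$ in the proof of Proposition \ref{fermi_spectral_action_coeff_2nd_prop}). So your write-up supplies a proof where the paper defers to the literature, and both of your routes are sound. The generic case $\alpha\notin\mathbb{Z}$ is exactly the one-line computation from \eqref{K_fun} that you give. For the integer case, your continuity argument tacitly uses that the singularities of the quotient in \eqref{K_fun} at integer $\nu$ are removable, i.e.\ that $\lim_{\nu\to m}K_\nu(z)$ exists and $\nu\mapsto K_\nu(z)$ is analytic; this is standard but deserves a sentence — the numerator $I_{-\nu}(z)-I_\nu(z)$ is entire in $\nu$ and vanishes at each integer $m$ (since $I_{-m}=I_m$, the terms with $1/\Gamma(n-m+1)$, $n<m$, dropping out of the series), so it is divisible by the simple zeros of $\sin\nu\pi$ and the quotient extends analytically. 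Your second route is cleaner and is in fact the one implicitly available in the paper: the representation $K_\nu(z)=\int_0^\infty e^{-z\cosh t}\cosh(\nu t)\,dt$ is already invoked in the proof of Proposition \ref{thm_fermi_coeff}, and the evenness of $\cosh$ makes the symmetry $\nu\mapsto-\nu$ manifest for all orders at once on $|\arg z|<\frac{\pi}{2}$, with the full range $-\pi<\arg z<\pi$ recovered by analytic continuation in $z$, both sides being analytic on the cut plane and agreeing on an open set. Either version is a complete and correct proof of the lemma.
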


\begin{lemma}\label{0_1_bessel_lemma}
We have the formula
\begin{equation*}
\frac{d}{dz}K_0(z)=-K_1(z).
\end{equation*}
\end{lemma}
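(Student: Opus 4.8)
The plan is to read the identity directly off the integral representation already quoted in the proof of Proposition \ref{thm_fermi_coeff}, namely
$$
K_\nu(z)=\int_0^\infty e^{-z\cosh t}\cosh(\nu t)\,dt,\qquad z>0.
$$
Specializing to $\nu=0$ gives $K_0(z)=\int_0^\infty e^{-z\cosh t}\,dt$, while specializing to $\nu=1$ gives $K_1(z)=\int_0^\infty e^{-z\cosh t}\cosh t\,dt$, since $\cosh(1\cdot t)=\cosh t$. The identity will then follow by differentiating the formula for $K_0$ under the integral sign with respect to $z$.

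First I would justify the interchange of $\frac{d}{dz}$ with $\int_0^\infty$. Fix a compact subinterval $[z_1,z_2]\subset(0,\infty)$. The integrand $e^{-z\cosh t}$ is smooth in $z$, and its $z$-derivative $-\cosh t\,e^{-z\cosh t}$ is dominated, uniformly for $z\in[z_1,z_2]$, by $\cosh t\,e^{-z_1\cosh t}$, which is integrable on $(0,\infty)$ because the double-exponential decay of $e^{-z_1\cosh t}$ as $t\to\infty$ overwhelms the growth of the factor $\cosh t$. Hence differentiation under the integral sign is legitimate on $(0,\infty)$, and
$$
\frac{d}{dz}K_0(z)=\int_0^\infty \frac{\partial}{\partial z}e^{-z\cosh t}\,dt=-\int_0^\infty \cosh t\,e^{-z\cosh t}\,dt=-K_1(z),
$$
which is exactly the claimed formula. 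The only point genuinely requiring care is this dominated-convergence bound licensing the differentiation inside the integral; everything else is a direct substitution.

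As an alternative route that avoids the integral representation, one can argue purely from the series definitions. Term-by-term differentiation of the power series for $I_\nu$ yields $\frac{d}{dz}\big(z^\nu I_\nu(z)\big)=z^\nu I_{\nu-1}(z)$ and, applying this with $-\nu$ in place of $\nu$, also $\frac{d}{dz}\big(z^\nu I_{-\nu}(z)\big)=z^\nu I_{1-\nu}(z)$. Feeding these into $K_\nu=\frac{\pi}{2}\frac{I_{-\nu}-I_\nu}{\sin\nu\pi}$ and using $\sin((\nu-1)\pi)=-\sin(\nu\pi)$ gives, for non-integer $\nu$, the relation $\frac{d}{dz}\big(z^\nu K_\nu(z)\big)=-z^\nu K_{\nu-1}(z)$, that is $K_\nu'(z)+\frac{\nu}{z}K_\nu(z)=-K_{\nu-1}(z)$. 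Letting $\nu\to 0$, so that the term $\frac{\nu}{z}K_\nu(z)$ drops out, and invoking $K_{-1}=K_1$ from the preceding lemma recovers $K_0'(z)=-K_1(z)$. In this approach the delicate step is justifying the passage to the limit at the integer value $\nu=0$, where $K_\nu$ is itself defined by a limiting process; this is handled by the analyticity of $K_\nu(z)$ in $\nu$. I would present the integral-representation argument as the main proof, since it is the shortest and sidesteps this limiting issue entirely.
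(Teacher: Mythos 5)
Your main argument is correct, but note that the paper does not actually prove this lemma: it is stated in Appendix \ref{sect_bessel} as a standard fact, with a blanket reference to \cite{temme1996special, bateman1954tables, gradshteyn2007}, so there is no internal proof to compare against. Your integral-representation proof is a sound way to fill that gap, and you correctly identify the only nontrivial point, namely the domination $\left|\partial_z e^{-z\cosh t}\right|\leq \cosh t\, e^{-z_1\cosh t}$ uniformly on $[z_1,z_2]$, which licenses differentiation under the integral sign. Two remarks. First, within the paper's own toolkit there is an even shorter route you did not mention: the recurrence \eqref{bess2} with $\nu=0$ gives $K_{-1}(z)+K_{1}(z)=-2K_0'(z)$, and combining with $K_{-1}=K_1$ from the symmetry lemma yields $K_0'(z)=-K_1(z)$ in one line (of course \eqref{bess2} itself is quoted from \cite{gradshteyn2007}, so this merely relocates the citation). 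Second, in your alternative series argument there is a small mis-attribution: substituting $-\nu$ for $\nu$ in $\frac{d}{dz}\bigl(z^\nu I_\nu(z)\bigr)=z^\nu I_{\nu-1}(z)$ produces $\frac{d}{dz}\bigl(z^{-\nu} I_{-\nu}(z)\bigr)=z^{-\nu} I_{-\nu-1}(z)$, not the identity $\frac{d}{dz}\bigl(z^\nu I_{-\nu}(z)\bigr)=z^\nu I_{1-\nu}(z)$ that you use; the latter follows instead from the companion formula $\frac{d}{dz}\bigl(z^{-\nu} I_\nu(z)\bigr)=z^{-\nu} I_{\nu+1}(z)$ with $\nu\mapsto -\nu$, or by a direct term-by-term computation on the series. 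The identity you invoke is nevertheless true, so the alternative route also goes through once this step is rephrased, and your closing observation that the limit $\nu\to 0$ is controlled by analyticity of $K_\nu(z)$ in $\nu$ is the right justification there.
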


\begin{lemma}\label{lemma_bessel_asymp}
For $\alpha>0$, when $z\to 0^+,$
one has the asymptotics
\begin{equation*}
K_\alpha(z)\sim
\left\{
\begin{array}{ll}
-\log(\frac{z}{2})-\gamma \quad \alpha=0,\\
\frac{\Gamma(\alpha)}{2}\left(\frac{2}{z}\right)^\alpha \qquad \alpha > 0,
\end{array}
\right.
\end{equation*}
where $\gamma$ is Euler's constant.
When $z\nearrow \infty$,
one has
\begin{equation*}
K_\alpha(z)
\sim
\sqrt{\frac{\pi}{2z}}e^{-z}.
\end{equation*}
\end{lemma}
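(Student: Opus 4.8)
The plan is to extract both limits directly from the two descriptions of $K_\nu$ available in the excerpt: the defining series for $I_\nu$ together with the formula \eqref{K_fun}, and the integral representation $K_\nu(z)=\int_0^\infty e^{-z\cosh t}\cosh(\nu t)\,dt$ already invoked in the proof of Proposition \ref{thm_fermi_coeff}. The small-$z$ behaviour comes from the series, the large-$z$ behaviour from Laplace's method on the integral.

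First I would treat $z\to 0^+$ for non-integer $\alpha>0$. Starting from
$$
I_\nu(z)=\Big(\tfrac{z}{2}\Big)^\nu\sum_{n=0}^\infty\frac{(z/2)^{2n}}{\Gamma(n+\nu+1)\,n!}=\frac{(z/2)^\nu}{\Gamma(\nu+1)}\big(1+O(z^2)\big),
$$
one sees that $I_{-\alpha}(z)\sim(z/2)^{-\alpha}/\Gamma(1-\alpha)$ diverges while $I_\alpha(z)\sim(z/2)^{\alpha}/\Gamma(1+\alpha)\to 0$. Substituting into $K_\alpha=\frac{\pi}{2\sin(\alpha\pi)}\big(I_{-\alpha}-I_\alpha\big)$ and using the reflection formula $\pi/\sin(\alpha\pi)=\Gamma(\alpha)\Gamma(1-\alpha)$ collapses the prefactor to $\Gamma(\alpha)/2$, giving exactly $K_\alpha(z)\sim\frac{\Gamma(\alpha)}{2}(2/z)^\alpha$. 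For integer $\alpha$ the quotient defining $K_\alpha$ is a removable $0/0$; the same leading term persists, which I would justify either by continuity of $\alpha\mapsto\frac{\Gamma(\alpha)}{2}(2/z)^\alpha$ or, more carefully, from the classical Frobenius series for $K_n$, whose dominant contribution is $\tfrac12(n-1)!\,(2/z)^n$.

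The case $\alpha=0$ is the one I expect to be the main obstacle, since the power law degenerates into a logarithm. Here I would apply L'H\^opital in the order parameter $\nu$ to the indeterminate quotient, which yields $K_0(z)=-\partial_\nu I_\nu(z)\big|_{\nu=0}$. Differentiating the series termwise — the factor $(z/2)^\nu$ produces $\log(z/2)$ and the factor $1/\Gamma(n+\nu+1)$ produces the digamma function $\psi$ — and keeping only the $n=0$ term as $z\to0^+$ gives $\partial_\nu I_\nu\big|_{\nu=0}\sim\log(z/2)-\psi(1)=\log(z/2)+\gamma$, whence $K_0(z)\sim-\log(z/2)-\gamma$, using $\psi(1)=-\gamma$. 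The delicate point is justifying the termwise differentiation in $\nu$ and controlling the uniformity of the error as $z\to0^+$.

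Finally, for $z\to\infty$ I would apply Laplace's method to the integral representation. The phase $\cosh t$ attains its minimum value $1$ at the endpoint $t=0$, where $\cosh t=1+\tfrac12 t^2+O(t^4)$ and $\cosh(\nu t)=1+O(t^2)$, so the leading contribution is
$$
K_\nu(z)\sim e^{-z}\int_0^\infty e^{-z t^2/2}\,dt=e^{-z}\sqrt{\frac{\pi}{2z}},
$$
which is precisely the asserted asymptotic. The only remaining work is the routine estimate that the contribution of $t$ bounded away from $0$ is exponentially smaller, so that the Gaussian approximation captures the leading order.
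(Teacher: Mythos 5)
Your proposal is correct, but it differs from the paper in that the paper offers no proof at all: Lemma \ref{lemma_bessel_asymp} is stated as a standard fact, with the reader referred to \cite{temme1996special, bateman1954tables, gradshteyn2007}. What you do instead is assemble a self-contained derivation from the two descriptions of $K_\nu$ that already appear in the paper, namely the series definition via \eqref{K_fun} and the integral representation $K_\nu(z)=\int_0^\infty e^{-z\cosh t}\cosh(\nu t)\,dt$ invoked in the proof of Proposition \ref{thm_fermi_coeff}. All three pieces check out: the reflection formula $\Gamma(\alpha)\Gamma(1-\alpha)=\pi/\sin(\alpha\pi)$ collapses the prefactor exactly as you say, giving $\frac{\Gamma(\alpha)}{2}(2/z)^\alpha$ for non-integer $\alpha>0$; the L'H\^opital computation $K_0(z)=-\partial_\nu I_\nu(z)\big|_{\nu=0}$ combined with $\psi(1)=-\gamma$ yields $-\log(z/2)-\gamma$; and Laplace's method at the endpoint minimum $t=0$ gives $e^{-z}\int_0^\infty e^{-zt^2/2}\,dt=e^{-z}\sqrt{\pi/(2z)}$, uniformly for fixed $\nu$ since the tail is $O(e^{-(1+\delta)z})$. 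The two points you flag as delicate are genuinely the right ones, and both close without difficulty: for integer $\alpha=n$ mere pointwise continuity in $\nu$ does not transfer an asymptotic relation as $z\to0^+$, so your fallback to the Frobenius expansion of $K_n$ with leading term $\frac12(n-1)!\,(2/z)^n$ is the correct justification; and the termwise $\partial_\nu$ differentiation is legitimate because the series for $I_\nu(z)$ converges locally uniformly in $(\nu,z)$ and each term is entire in $\nu$, so Weierstrass's theorem applies, with the $z\to0^+$ limit controlled by the fact that the $n\geq1$ terms carry an explicit factor $(z/2)^{2n}$. The trade-off is the usual one: the paper's citation is shorter and these asymptotics are classical, while your argument makes the appendix self-contained at the cost of a page of standard special-function analysis.
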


\begin{lemma}\label{intrep}
One has the integral representation formula of the function $K_\nu(z)$:
\begin{equation*}
K_\nu(z)=\frac{\sqrt{\pi}}{\Gamma\left(\nu+\frac{1}{2}\right)}\left(\frac{z}{2}\right)^\nu\int_1^\infty e^{-zx}(x^2-1)^{\nu-1/2}dx \qquad \textrm{for}\quad  \nu>-\frac{1}{2}.
\end{equation*}
\end{lemma}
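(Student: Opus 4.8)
The plan is to identify the right-hand side as a solution of the modified Bessel differential equation that is recessive at infinity, and then to fix the overall constant by matching leading large-$z$ asymptotics against Lemma~\ref{lemma_bessel_asymp}. Throughout I would abbreviate $g(z) = \int_1^\infty e^{-zx}(x^2-1)^{\nu-1/2}\,dx$, which converges for $z>0$ and $\nu>-\tfrac12$ and whose integrand decays exponentially in $x$; this legitimizes differentiation under the integral sign, giving $g'(z) = -\int_1^\infty x\,e^{-zx}(x^2-1)^{\nu-1/2}\,dx$ and $g''(z) = \int_1^\infty x^2 e^{-zx}(x^2-1)^{\nu-1/2}\,dx$.

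The first key step is to produce an ODE for $g$. Writing $g''(z)-g(z) = \int_1^\infty e^{-zx}(x^2-1)^{\nu+1/2}\,dx$ and integrating by parts using $\frac{d}{dx}(x^2-1)^{\nu+1/2} = (2\nu+1)\,x\,(x^2-1)^{\nu-1/2}$, the boundary terms vanish (at $x=1$ because $\nu+\tfrac12>0$, at $x=\infty$ because of $e^{-zx}$), leaving $g''(z)-g(z) = -\tfrac{2\nu+1}{z}\,g'(z)$, i.e.\ $z^2 g'' + (2\nu+1)z\,g' - z^2 g = 0$. A direct substitution then shows that $f(z):=z^\nu g(z)$ satisfies the modified Bessel equation $z^2 f'' + z f' - (z^2+\nu^2)f = 0$; hence so does the full right-hand side $\tfrac{\sqrt\pi}{\Gamma(\nu+1/2)}(z/2)^\nu g(z)$, being a constant multiple of $f$.

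The second key step is to decide which solution this is. Since the integrand of $g$ concentrates near $x=1$ for large $z$, the function $f(z)=z^\nu g(z)$ decays exponentially as $z\to\infty$, so it is recessive at the irregular singular point $z=\infty$. As $\{I_\nu,K_\nu\}$ is a fundamental system with $I_\nu$ growing like $e^z$, the recessive solutions form a one-dimensional space spanned by $K_\nu$, so $\tfrac{\sqrt\pi}{\Gamma(\nu+1/2)}(z/2)^\nu g(z) = c\,K_\nu(z)$ for a constant $c$. To evaluate $c$ I would apply Laplace's method: substituting $x=1+u$ and using $(2u+u^2)^{\nu-1/2}\sim(2u)^{\nu-1/2}$ near $u=0$ gives $g(z)\sim 2^{\nu-1/2}\Gamma(\nu+\tfrac12)\,z^{-\nu-1/2}e^{-z}$, whence $\tfrac{\sqrt\pi}{\Gamma(\nu+1/2)}(z/2)^\nu g(z)\sim \sqrt{\tfrac{\pi}{2z}}\,e^{-z}$. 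Comparing with $K_\nu(z)\sim\sqrt{\tfrac{\pi}{2z}}\,e^{-z}$ from Lemma~\ref{lemma_bessel_asymp} forces $c=1$, which finishes the argument.

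The main obstacle is the final normalization: the ODE only determines $f$ up to a scalar, so everything hinges on extracting the precise leading constant from both sides. On the integral side this is a clean Watson-lemma estimate, and on the Bessel side it is exactly the asymptotic recorded in Lemma~\ref{lemma_bessel_asymp}, so the two are built to agree; one need only check that subleading terms do not interfere and that the recessive-solution dichotomy is invoked correctly (it is valid for every $\nu$, integer or not, since $I_\nu$ is always exponentially dominant). As a consistency check one may also note the substitution $x=\cosh t$ turns the claim into the equivalent representation $\tfrac{\sqrt\pi}{\Gamma(\nu+1/2)}(z/2)^\nu\int_0^\infty e^{-z\cosh t}(\sinh t)^{2\nu}\,dt=K_\nu(z)$, but establishing that identity directly would require the same normalization bookkeeping.
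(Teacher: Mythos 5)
Your proof is correct, but note that the paper itself does not prove Lemma \ref{intrep} at all: it is quoted as a classical formula, with the surrounding text referring to the standard references (it is, e.g., formula 8.432.3 of Gradshteyn--Ryzhik), so the comparison here is with the classical literature rather than with an in-paper argument. Your ODE-based derivation is one of the two standard textbook routes, and every step checks out: the integration by parts via $\frac{d}{dx}(x^2-1)^{\nu+1/2}=(2\nu+1)x(x^2-1)^{\nu-1/2}$ does give $z^2g''+(2\nu+1)zg'-z^2g=0$, with the boundary term at $x=1$ vanishing precisely because $\nu+\tfrac12>0$ (the same condition that makes $g$ converge at $x=1$); the substitution $f=z^\nu g$ does produce the modified Bessel equation; the pair $\{I_\nu,K_\nu\}$ is a fundamental system for every $\nu$, integers included, since $W\{K_\nu,I_\nu\}=1/z$, so the recessive-solution dichotomy is invoked correctly; and the Watson-lemma constant $g(z)\sim 2^{\nu-1/2}\,\Gamma\bigl(\nu+\tfrac12\bigr)\,z^{-\nu-1/2}e^{-z}$ combines with the prefactor to give exactly $\sqrt{\pi/(2z)}\,e^{-z}$, forcing $c=1$. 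What your approach buys is a self-contained verification of a fact the paper delegates to the literature, which is fitting given how heavily the lemma is used (Lemmas \ref{int1} and \ref{1intprop} and everything downstream). The one dependency worth flagging is that your normalization inherits the status of the large-$z$ asymptotic of $K_\nu$, which in this paper is itself only quoted (Lemma \ref{lemma_bessel_asymp}); the natural alternative of matching the small-$z$ singularity $\tfrac{\Gamma(\nu)}{2}(2/z)^\nu$ (for $\nu>0$, via $g(z)\sim\Gamma(2\nu)z^{-2\nu}$ and the Legendre duplication formula) relies on the same quoted lemma, so your proof is self-contained exactly modulo that input. Your closing remark is also accurate: the substitution $x=\cosh t$ turns the claim into the representation with $(\sinh t)^{2\nu}$, which is a different integrand from the $\cosh(\nu t)$ representation used in the proof of Proposition \ref{thm_fermi_coeff}, and identifying the two directly would indeed require the same normalization work you carried out.
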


\begin{lemma}{\cite[8.486]{gradshteyn2007}}
Let $K_\nu(z)$ be the modified Bessel function of the second kind.
Then one has 
\begin{equation}\label{bess1}
zK_{\nu-1}(z)-zK_{\nu+1}(z)=-2\nu K_{\nu}(z),
\end{equation}
\begin{equation}\label{bess2}
K_{\nu-1}(z)+K_{\nu+1}(z)=-2\frac{\partial}{\partial z}K_{\nu}(z),
\end{equation}

\begin{equation}\label{bess3}
z\frac{\partial}{\partial z}K_{\nu}(z)+\nu K_{\nu}(z)=-zK_{\nu-1}(z),
\end{equation}

\begin{equation}\label{bess4}
z\frac{\partial}{\partial z}K_{\nu}(z)-\nu K_\nu(z)=-zK_{\nu+1}(z).
\end{equation}
\end{lemma}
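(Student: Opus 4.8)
The plan is to reduce all four identities to two elementary ladder relations for the modified Bessel function of the first kind $I_\nu$ and then transport them to $K_\nu$ through the representation \eqref{K_fun}. Concretely, I would first establish
\[
\frac{d}{dz}\bigl(z^{\nu}I_\nu(z)\bigr)=z^{\nu}I_{\nu-1}(z),\qquad
\frac{d}{dz}\bigl(z^{-\nu}I_\nu(z)\bigr)=z^{-\nu}I_{\nu+1}(z),
\]
then rewrite $K_\nu$ in terms of $I_{\pm\nu}$, insert these two identities, and finally take suitable sums and differences to recover \eqref{bess1}--\eqref{bess4} all at once.

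For the first step I would differentiate the defining power series of $I_\nu$ term by term, which is legitimate since for fixed $\nu$ it is an entire function of $z$ with locally uniformly convergent derivative. Writing $u=z/2$, the prefactor $z^{\pm\nu}$ cancels the half-integer shift in the exponent, and after reindexing $n\mapsto n\mp 1$ one recognises the series of $I_{\nu\mp 1}$ on the right. Expanding the product rule on the left-hand sides then yields the equivalent algebraic forms
\[
z\,\frac{\partial}{\partial z}I_\nu(z)+\nu I_\nu(z)=z\,I_{\nu-1}(z),\qquad
z\,\frac{\partial}{\partial z}I_\nu(z)-\nu I_\nu(z)=z\,I_{\nu+1}(z),
\]
and substituting $-\nu$ for $\nu$ gives the corresponding pair for $I_{-\nu}$.

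To transfer these to $K_\nu$ I would use \eqref{K_fun} together with the reflection $\sin\bigl((\nu\pm 1)\pi\bigr)=-\sin(\nu\pi)$, so that
\[
K_{\nu-1}(z)=\frac{\pi}{2\sin\nu\pi}\bigl(I_{\nu-1}(z)-I_{1-\nu}(z)\bigr),\qquad
K_{\nu+1}(z)=\frac{\pi}{2\sin\nu\pi}\bigl(I_{\nu+1}(z)-I_{-\nu-1}(z)\bigr).
\]
Applying the operator $z\tfrac{\partial}{\partial z}+\nu$ to $K_\nu=\tfrac{\pi}{2\sin\nu\pi}(I_{-\nu}-I_\nu)$ and feeding in the four $I$-ladder identities collapses the right-hand side exactly to $-zK_{\nu-1}(z)$, which is \eqref{bess3}; applying $z\tfrac{\partial}{\partial z}-\nu$ gives \eqref{bess4} by the same computation. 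Adding \eqref{bess3} and \eqref{bess4} produces \eqref{bess2}, and subtracting them produces \eqref{bess1}, so all four relations follow together.

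The only delicate point is that \eqref{K_fun} is an indeterminate $0/0$ when $\nu\in\mathbb{Z}$, where $K_\nu$ is defined by the limit as the order tends to an integer. Since $K_\nu(z)$ depends analytically on $\nu$ and each of \eqref{bess1}--\eqref{bess4} is a relation between functions continuous in $\nu$, the integer case follows immediately by continuity from the generic non-integer case. I expect this limiting argument to be the main thing to state carefully; the remaining work is the routine term-by-term series and product-rule manipulation described above.
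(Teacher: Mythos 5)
Your proof is correct, but note that the paper itself offers no argument for this lemma: it is imported wholesale from \cite[8.486]{gradshteyn2007} as a standard table entry, so your derivation is a self-contained addition rather than a variant of anything in the text. What you do --- term-by-term differentiation of the series for $I_\nu$ to obtain $\frac{d}{dz}\bigl(z^{\pm\nu}I_\nu(z)\bigr)=z^{\pm\nu}I_{\nu\mp1}(z)$, transport through \eqref{K_fun} using $\sin\bigl((\nu\pm 1)\pi\bigr)=-\sin(\nu\pi)$, then sums and differences --- is the classical derivation, and each step checks out: for instance $\bigl(z\partial_z+\nu\bigr)K_\nu=\frac{\pi}{2\sin\nu\pi}\bigl(zI_{1-\nu}-zI_{\nu-1}\bigr)=-zK_{\nu-1}$, which is \eqref{bess3}, and likewise $\bigl(z\partial_z-\nu\bigr)K_\nu=-zK_{\nu+1}$ gives \eqref{bess4}; adding and subtracting these yields \eqref{bess2} and \eqref{bess1}. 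Your continuity argument at integer $\nu$ is also sound, and if you want to make it airtight within the paper's own toolkit, the representation $K_\nu(z)=\int_0^\infty e^{-z\cosh t}\cosh(\nu t)\,dt$ already invoked in the proof of Proposition \ref{thm_fermi_coeff} shows that $K_\nu(z)$ is entire in $\nu$ locally uniformly in $z$, which justifies passing the four identities (including the $z$-derivative) to the limit. An alternative route internal to the paper would be to differentiate the integral representation of Lemma \ref{intrep} under the integral sign, but that representation is restricted to $\nu>-\frac{1}{2}$ and one would then need $K_\nu=K_{-\nu}$ and \eqref{bess1} to reach all orders; your series argument buys validity for all $\nu$ at once, at the modest cost of the limiting step you correctly flagged as the only delicate point.
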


\begin{lemma}{\cite[8.432]{gradshteyn2007}}\label{8_432_gradshteyn2007}
When $\nu> -\frac{1}{2}$, $a>0$, and $x>0$, we have the integral formula 
\begin{equation*}
x^{\nu}K_{\nu}(ax)
=
\frac{\Gamma\left(\nu+\frac{1}{2}\right)(2a)^\nu}{\Gamma\left(\frac{1}{2}\right)}\int_0^\infty \frac{\cos xt}{(t^2+a^2)^{\nu+\frac{1}{2}}}dt.
\end{equation*}
\end{lemma}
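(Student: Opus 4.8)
The plan is to realize the right-hand side as a Fourier cosine transform and to collapse it onto a classical integral representation of $K_\nu$ via the standard subordination (Gamma-function) trick. First I would write, for $\nu>-\tfrac12$,
\[
\frac{1}{(t^2+a^2)^{\nu+\frac12}}=\frac{1}{\Gamma(\nu+\frac12)}\int_0^\infty s^{\nu-\frac12}e^{-s(t^2+a^2)}\,ds,
\]
insert it into $\int_0^\infty (t^2+a^2)^{-\nu-\frac12}\cos(xt)\,dt$, and interchange the $s$- and $t$-integrations. The inner integral is then the elementary Gaussian cosine transform $\int_0^\infty e^{-st^2}\cos(xt)\,dt=\tfrac12\sqrt{\pi/s}\,e^{-x^2/(4s)}$, which reduces the whole expression to
\[
\frac{\sqrt\pi}{2\,\Gamma(\nu+\frac12)}\int_0^\infty s^{\nu-1}e^{-a^2 s-\frac{x^2}{4s}}\,ds.
\]

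Next I would identify this last integral with $K_\nu(ax)$. Substituting $s=\tfrac{x}{2a}e^{-\tau}$ turns $a^2 s+\tfrac{x^2}{4s}$ into $ax\cosh\tau$ and the measure into $\big(\tfrac{x}{2a}\big)^\nu e^{-\nu\tau}\,d\tau$ over $\tau\in\mathbb{R}$; folding the integral using that $\cosh\tau$ is even, and invoking the representation $K_\nu(z)=\int_0^\infty e^{-z\cosh\tau}\cosh(\nu\tau)\,d\tau$ already recorded in the proof of Proposition \ref{thm_fermi_coeff} (equivalently, Lemma \ref{intrep}), gives
\[
\int_0^\infty s^{\nu-1}e^{-a^2 s-\frac{x^2}{4s}}\,ds=2\Big(\tfrac{x}{2a}\Big)^\nu K_\nu(ax).
\]
Substituting back and clearing the constant $\Gamma(\tfrac12)=\sqrt\pi$ then yields exactly the asserted formula.

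The one genuinely delicate point is the interchange of the two integrations. Since $|\cos xt|\le 1$, the double integral is absolutely convergent only when $2\nu+1>1$, i.e.\ for $\nu>0$; for $-\tfrac12<\nu\le0$ the $t$-integral converges merely conditionally, as the integrand decays only like $t^{-(2\nu+1)}$. I would therefore first establish the identity for $\Re\nu>0$, where Fubini applies directly, and then extend it to the strip $\Re\nu>-\tfrac12$ by analytic continuation in $\nu$: the left-hand side $x^\nu K_\nu(ax)$ is entire in $\nu$, while the right-hand side is holomorphic on $\Re\nu>-\tfrac12$ (holomorphy up to the boundary being seen by integrating by parts once in $t$ to improve the decay, or by inserting an Abel regulator $e^{-\epsilon t}$ and letting $\epsilon\downarrow 0$). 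As the two holomorphic functions of $\nu$ agree on the open region $\Re\nu>0$, they coincide throughout the strip. This continuation step, rather than any of the explicit computations, is where the real work lies.
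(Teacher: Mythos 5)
Your proposal is correct, but note that the paper does not prove this lemma at all: it is quoted verbatim from Gradshteyn--Ryzhik \cite[8.432]{gradshteyn2007} (it is the classical Basset integral for $K_\nu$), so there is no in-paper argument to compare against. What you supply is a genuine, self-contained derivation, and the details check out: the subordination identity $(t^2+a^2)^{-\nu-\frac12}=\Gamma(\nu+\tfrac12)^{-1}\int_0^\infty s^{\nu-\frac12}e^{-s(t^2+a^2)}ds$ needs exactly $\nu>-\tfrac12$, the Gaussian cosine transform is right, and the substitution $s=\tfrac{x}{2a}e^{-\tau}$ correctly produces $\int_{-\infty}^{\infty}e^{-ax\cosh\tau}e^{-\nu\tau}d\tau=2K_\nu(ax)$ upon folding, so the constants assemble to $x^\nu K_\nu(ax)$. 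You also correctly identify the one delicate point: the absolute double integral behaves like $\int_0 s^{\nu-1}e^{-a^2s}ds$ near $s=0$, so Fubini is only directly available for $\Re\nu>0$, and the extension to $-\tfrac12<\Re\nu$ genuinely requires the continuation argument you sketch (one integration by parts in $t$ turns the integrand into $O(t^{-2\nu-2})$, giving local uniform absolute convergence and hence holomorphy in $\nu$ on the strip, after which the identity theorem applies since $x^\nu K_\nu(ax)$ is entire in $\nu$). Two cosmetic remarks: Lemma \ref{intrep} is not literally the $\cosh$-representation you invoke (the substitution $x=\cosh\tau$ there yields $\int_0^\infty e^{-z\cosh\tau}\sinh^{2\nu}\tau\,d\tau$ rather than $\int_0^\infty e^{-z\cosh\tau}\cosh(\nu\tau)\,d\tau$), though the latter does appear, with a citation, in the paper's proof of Proposition \ref{thm_fermi_coeff}; and an alternative way to reach the full strip without continuation is Fourier inversion applied to the known transform pair, but your route is perfectly standard. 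In short: the paper buys the formula from a table, while your argument proves it, and the continuation step you flag is precisely where the honest work sits.
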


Using Lemma \ref{8_432_gradshteyn2007}, 
we obtain the following Lemma:
\begin{lemma}\label{lemma_0_order_FT}
When $\nu> -\frac{1}{2}$, $a>0$, and $x\in \mathbb{R}\backslash \{0\}$, one has
\begin{equation}\label{0_order_Fourier_Transform}
|x|^\nu K_\nu\left(a|x|\right)
=
\frac{\pi\Gamma\left(\nu+\frac{1}{2}\right)(2a)^\nu}{\Gamma\left(\frac{1}{2}\right)}\widehat{\psi}_{\nu,a}(x),
\end{equation}
and
\begin{equation}\label{Alt_0_order_Fourier_transform}
e^{i\pi x}|x|^\nu K_\nu\left(a|x|\right)
=
\frac{\pi\Gamma\left(\nu+\frac{1}{2}\right)(2a)^\nu}{\Gamma\left(\frac{1}{2}\right)}\widehat{\phi}_{\nu,a}(x),
\end{equation}
where 
$$\psi_{\nu,a}(t)
=
\frac{1}{\left((2\pi t)^2+a^2\right)^{\nu+\frac{1}{2}}}, \quad
\phi_{\nu,a}(t)=\psi_{\nu,a}\left(t+\frac{1}{2}\right),$$
and $\widehat{\psi}_{\nu,a}$, $\widehat{\phi}_{\nu,a}$ denote the corresponding Fourier transforms of $\psi_{\nu,a}$ and $\phi_{\nu,a}$.
Namely,
$$\widehat{\psi}_{\nu,a}(x)=\int_{-\infty}^\infty \psi_{\nu,a}(t)e^{-2\pi ixt}dt, \quad \widehat{\phi}_{\nu,a}(x)=\int_{-\infty}^\infty \phi_{\nu,a}(t)e^{-2\pi ixt}dt.$$
\end{lemma}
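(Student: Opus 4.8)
The plan is to reduce both identities to the integral representation of Lemma \ref{8_432_gradshteyn2007} by a single change of variables together with the elementary symmetry and translation properties of the Fourier transform.

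First I would compute $\widehat{\psi}_{\nu,a}$ directly. Since $\psi_{\nu,a}(t)=((2\pi t)^2+a^2)^{-\nu-1/2}$ is real and even in $t$, its Fourier transform is real and even, and the kernel $e^{-2\pi i x t}$ may be replaced by $\cos(2\pi x t)$, giving
$$
\widehat{\psi}_{\nu,a}(x)=2\int_0^\infty \frac{\cos(2\pi x t)}{((2\pi t)^2+a^2)^{\nu+\frac12}}\,dt.
$$
The substitution $u=2\pi t$ turns the right-hand side into $\frac{1}{\pi}\int_0^\infty \frac{\cos(x u)}{(u^2+a^2)^{\nu+\frac12}}\,du$, which is precisely the integral appearing in Lemma \ref{8_432_gradshteyn2007} with $x$ playing the role of the external frequency. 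Hence for $x>0$,
$$
\widehat{\psi}_{\nu,a}(x)=\frac{1}{\pi}\cdot\frac{\Gamma\left(\frac12\right)}{\Gamma\left(\nu+\frac12\right)(2a)^\nu}\,x^\nu K_\nu(ax),
$$
and rearranging gives \eqref{0_order_Fourier_Transform} for positive $x$. Because both sides are even in $x$ — the left by the symmetry just noted and the right because it depends on $x$ only through $|x|$ — the identity extends to all $x\in\mathbb{R}\setminus\{0\}$.

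For the second identity I would invoke the translation property of the Fourier transform. By definition $\phi_{\nu,a}(t)=\psi_{\nu,a}\left(t+\frac12\right)$, so substituting $s=t+\frac12$ in the defining integral yields $\widehat{\phi}_{\nu,a}(x)=e^{i\pi x}\widehat{\psi}_{\nu,a}(x)$. Multiplying \eqref{0_order_Fourier_Transform} through by $e^{i\pi x}$ then produces \eqref{Alt_0_order_Fourier_transform} at once.

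The only point needing genuine care is the convergence of the defining integral for $\widehat{\psi}_{\nu,a}$: the integrand decays only like $|t|^{-(2\nu+1)}$, so for $-\frac12<\nu\le 0$ it is merely conditionally convergent, and the cosine-reduction and substitution above must be read as improper integrals rather than as manipulations of an absolutely convergent integral. This subtlety is, however, already subsumed in the validity of Lemma \ref{8_432_gradshteyn2007} for the full range $\nu>-\frac12$, so no new estimate is required; everything else is a direct change of variables.
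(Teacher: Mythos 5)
Your proposal is correct and takes essentially the same route as the paper's proof: both reduce \eqref{0_order_Fourier_Transform} to Lemma \ref{8_432_gradshteyn2007} by exploiting the evenness of $\psi_{\nu,a}$ (cosine kernel) together with the substitution $t\mapsto 2\pi t$, and both obtain \eqref{Alt_0_order_Fourier_transform} from the translation property $\widehat{\phi}_{\nu,a}(x)=e^{i\pi x}\widehat{\psi}_{\nu,a}(x)$, which the paper leaves implicit. Your closing remark on the merely conditional convergence of the defining integral when $-\frac{1}{2}<\nu\le 0$ is a welcome precision that the paper's one-line proof omits, but it does not change the argument.
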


\begin{proof}
According to Lemma \ref{8_432_gradshteyn2007}, one has 
\begin{equation*}
|x|^\nu K_\nu(a|x|)
=
\frac{\Gamma\left(\nu+\frac{1}{2}\right)(2a)^\nu}{2\Gamma\left(\frac{1}{2}\right)}\int_{-\infty}^\infty \frac{1}{\left(t^2+a^2\right)^{\nu+\frac{1}{2}}}e^{-ixt}dt,
\end{equation*}
and then changing the variable
$t\mapsto 2\pi t$,
we can obtain formulae \eqref{0_order_Fourier_Transform} and \eqref{Alt_0_order_Fourier_transform}.
\end{proof}
From Lemma \ref{lemma_0_order_FT},
one can easily deduce the following lemma:

\begin{lemma}\label{lemma_2_order_FT}
When $\nu\geq -\frac{1}{2}$, $a>0$, and $x\in \mathbb{R}\backslash \{0\}$, one has
\begin{equation}\label{1_order_Fourier_Transform}
|x|^{\nu+2} K_\nu\left(a|x|\right)
=
\frac{\Gamma\left(\nu+\frac{1}{2}\right)(2a)^\nu}{-4\pi\Gamma\left(\frac{1}{2}\right)}\widehat{\psi^{\prime\prime}}_{\nu,a}(x),
\end{equation}
and
\begin{equation}
e^{i\pi x}|x|{^{\nu+2}} K_\nu\left(a|x|\right)
=
\frac{\Gamma\left(\nu+\frac{1}{2}\right)(2a)^\nu}{-4\pi \Gamma\left(\frac{1}{2}\right)}\widehat{\phi^{\prime\prime}}_{\nu,a}(x).
\end{equation}
\end{lemma}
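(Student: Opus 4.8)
The plan is to read both claimed identities as the formulas \eqref{0_order_Fourier_Transform} and \eqref{Alt_0_order_Fourier_transform} of Lemma \ref{lemma_0_order_FT} with an extra factor of $x^2$, and then to trade this factor for a second derivative under the Fourier transform. The engine is the classical differentiation rule $\widehat{\psi''}(x)=(2\pi i x)^2\widehat{\psi}(x)=-4\pi^2 x^2\,\widehat{\psi}(x)$, or equivalently $x^2\widehat{\psi}(x)=-\tfrac{1}{4\pi^2}\widehat{\psi''}(x)$. Everything reduces to checking that this rule is applicable to the profile $\psi_{\nu,a}$ and then doing a one-line rearrangement of constants.

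First I would verify the rule for $\psi_{\nu,a}(t)=((2\pi t)^2+a^2)^{-(\nu+1/2)}$ when $\nu>-\tfrac12$. Since $\psi_{\nu,a}(t)=O(|t|^{-(2\nu+1)})$ and $\psi_{\nu,a}'(t)=O(|t|^{-(2\nu+2)})$ both vanish as $|t|\to\infty$, while $\psi_{\nu,a}''$ is integrable, two integrations by parts produce no boundary terms and yield $\widehat{\psi^{\prime\prime}}_{\nu,a}(x)=-4\pi^2 x^2\,\widehat{\psi}_{\nu,a}(x)$. Multiplying \eqref{0_order_Fourier_Transform} by $x^2=|x|^2$ and substituting $x^2\widehat{\psi}_{\nu,a}(x)=-\tfrac{1}{4\pi^2}\widehat{\psi^{\prime\prime}}_{\nu,a}(x)$ converts the prefactor $\tfrac{\pi\Gamma(\nu+\frac12)(2a)^\nu}{\Gamma(\frac12)}$ into $\tfrac{\Gamma(\nu+\frac12)(2a)^\nu}{-4\pi\Gamma(\frac12)}$ (using $\pi/(4\pi^2)=1/(4\pi)$), which is precisely \eqref{1_order_Fourier_Transform}. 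The second identity then follows from shift covariance: since $\phi_{\nu,a}(t)=\psi_{\nu,a}(t+\tfrac12)$ we have $\phi_{\nu,a}''(t)=\psi_{\nu,a}''(t+\tfrac12)$, hence $\widehat{\phi^{\prime\prime}}_{\nu,a}(x)=e^{i\pi x}\,\widehat{\psi^{\prime\prime}}_{\nu,a}(x)$, and multiplying \eqref{1_order_Fourier_Transform} through by $e^{i\pi x}$ gives the claim. This is the same step already used to pass from \eqref{0_order_Fourier_Transform} to \eqref{Alt_0_order_Fourier_transform}.

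The only delicate point — and the sole place where the hypothesis is genuinely weakened from $\nu>-\tfrac12$ to $\nu\geq-\tfrac12$ — is the endpoint $\nu=-\tfrac12$. There Lemma \ref{lemma_0_order_FT} is unavailable, because $\psi_{-1/2,a}\equiv 1$ is not integrable and its Fourier transform is a Dirac mass, so \eqref{0_order_Fourier_Transform} breaks down; moreover the right-hand side of \eqref{1_order_Fourier_Transform} literally becomes $\Gamma(0)\cdot\widehat{\psi^{\prime\prime}}_{-1/2,a}=\infty\cdot 0$. I would handle this by reading the case $\nu=-\tfrac12$ as the limit $\nu\downarrow-\tfrac12$: rewriting \eqref{1_order_Fourier_Transform} in the form $\Gamma(\nu+\tfrac12)\,\widehat{\psi^{\prime\prime}}_{\nu,a}(x)=-4\pi\Gamma(\tfrac12)(2a)^{-\nu}|x|^{\nu+2}K_\nu(a|x|)$, the right-hand side is continuous in $\nu$ down to $-\tfrac12$ for fixed $x\neq 0$ and $a>0$, and this is the sense in which the boundary value holds. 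For all $\nu>-\tfrac12$ the identity is the honest classical statement established above. I expect this limiting interpretation of the endpoint to be the only subtle step; the remainder is the routine differentiation-rule computation.
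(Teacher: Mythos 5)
Your proof is correct and takes essentially the same route as the paper, which states Lemma \ref{lemma_2_order_FT} as an immediate consequence of Lemma \ref{lemma_0_order_FT} via exactly the mechanism you spell out: the differentiation rule $\widehat{\psi^{\prime\prime}}_{\nu,a}(x)=-4\pi^2x^2\,\widehat{\psi}_{\nu,a}(x)$ (justified by the decay of $\psi_{\nu,a}$ and $\psi_{\nu,a}'$ for $\nu>-\tfrac12$) together with the half-shift $\phi_{\nu,a}(t)=\psi_{\nu,a}(t+\tfrac12)$ producing the factor $e^{i\pi x}$. Your discussion of the endpoint $\nu=-\tfrac12$, where $\psi_{-1/2,a}\equiv 1$ makes the right-hand side literally of the form $\Gamma(0)\cdot 0$ and only a limiting reading makes sense, is a careful addition on a point the paper passes over in silence.
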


\section{Asymptotic expansion}\label{Appendix_asy_exp}
\label{sect_heat-exp}
We deduce, under suitable conditions,  a heat trace expansion for  $\tr(e^{-t(|D|-\mu)^2})$  and for 
$\tr(e^{-t(D^2-\mu)^2})$,  from the heat trace expansion of $\tr(e^{-tD^2})$.

Let $D$ be a self-adjoint unbounded operator  acting on a Hilbert space $\mathscr{H}$ such that $\ker D=0$. We assume that  for some $p>0,$  and all $\varepsilon >0,$
 $$\tr |D|^{-p -\varepsilon}<\infty, \,\, \textrm{but}\,\, \tr |D|^{-p + \varepsilon}=\infty.$$
We obtain the spectral zeta function
$$\zeta_{D^2}(s):=\tr|D|^{-2s}, \quad \re (s) > p/2,$$
and we denote by 
$$\mathcal{Z}_k(s):=\Gamma(s)\zeta_{D^2}(s-k/2),\quad k\in\mathbb{N}.$$


We assume that
\begin{enumerate}
    \item There exists a sequence $r_0<r_1<r_2<\cdots$ 
    strictly increasing to $+\infty$ and a discrete set $X\subset \mathbb{C}$ such that for each vertical strip $U_{l}:=\{z\in\mathbb{C}\Big | -r_{l+1}<\re(z)<-r_l\}$ the intersection $X_l:=U_{l}\cap X$ is a finite set and 
    \begin{equation}\label{eq_ht}
        \tr e^{-tD^2}\isEquivTo{t\downarrow 0}\sum_{l=0}^\infty \rho_l(t),\quad \textrm{with}\quad \rho_l(t)=\sum_{z\in X_l}a_{z}t^{-z}.
    \end{equation}
    \label{assp_1}
    
    
    \item $\zeta_{D^2}(s)$ is regular at $s=-n$ for all $n\in\mathbb{N}$.\label{assp_2}
\end{enumerate}

\begin{remark}
Recall from \cite[Theorem 3.2]{Eckstein:2019dcb} that by assumption (\ref{assp_1})
we may conclude that $\zeta_{D^2}(s)$ admits a meromorphic extension to the  whole complex plane with only simple poles in $\Sigma(D)$. Moreover
$$\Res(\mathcal{Z}_0(s), z)=a_{z}.$$
\end{remark}

By assumption \ref{assp_1}
$$\tr e^{-tD^2}
=
\sum_{l=0}^N\rho_l(t)+R_N(t),
$$
where $R_N(t)=O (t^{r_{N+1}})$ as $t\to 0^+$. 

Moreover, 
for each $k\in\mathbb{N}$
$$\tr (|D|^ke^{-tD^2})=\sum_{l=0}^N\rho_{l,k}(t)+R_{N,k}(t).$$
In more detail,
$\rho_{l,k}$ can be expressed as
\begin{eqnarray}\label{eq_rho_l_k}
\begin{split}
\rho_{l,k}(t) &=&\sum_{\substack{z\in X_l\\z+k/2\notin \mathbb{Z}^-}}\Gamma(z+k/2)\Res(\zeta_{D^2}(s), z)t^{-z-k/2} \\
&&+
\sum_{\substack{z\in X_l\\z+k/2\in \mathbb{Z}^-}}\frac{(-1)^{z+k/2}}{(-(z+k/2))!}\zeta_{D^2}(z)t^{-z-k/2}
,
\end{split}
\end{eqnarray}

so that 
$\rho_{l,k}(t)=o (t^{r_{l}-k/2})$,  as $t \to 0^+$, 
and $\rho_{l,k}(t)=o (t^{r_{l+1}-k/2})$, as  $t \to +\infty.$ \\

  Now we apply the functional calculus to show that 
\begin{equation}\label{eq_exp_expand}
\begin{aligned}
e^{-t(|D|-\mu)^2}
-
e^{-t\mu^2}\sum_{k=0}^K\frac{(2t\mu)^k}{k!}e^{-tD^2}|D|^k
\leq 
e^{-t\mu^2}\frac{|2t\mu|^{K+1}}{(K+1)!}e^{-tD^2}|D|^{K+1}
.
\end{aligned}
\end{equation}
 Indeed, we can consider the  Taylor expansion of $e^{-t(|\lambda|-\mu)^2}$ on   an eigenvalue $\lambda$ of $D$:
$$e^{-t(|\lambda|-\mu)^2} = e^{-t\mu^2}e^{-t\lambda^2}\left(\sum_{k=0}^K\frac{(2t\mu)^k}{k!}|\lambda|^k + \mathcal{R}_K(t\mu)\right),$$
where the remainder satisfies 
\begin{align*}
\mathcal{R}_K(t\mu)
&=
\frac{(-2|\lambda|)^{K+1}}{K!}\int_{t\mu}^{0}e^{2s|\lambda|}(s-t\mu)^Kds\\
&\leq %
\frac{(2|\lambda|)^{K+1}}{K!}
\int_{t\mu}^0(s-t\mu)^{K}ds
\\
&=
\frac{|2t\mu\lambda|^{K+1}}{(K+1)!}.\\
\end{align*}
Now the operator inequality   \eqref{eq_exp_expand} follows. Taking the trace on both sides of that inequality yields 
\begin{equation*}
\tr\left(e^{-t(|D|-\mu)^2}\right)
-
e^{-t\mu^2}\sum_{k=0}^K\frac{(2t\mu)^k}{k!}\tr\left(e^{-tD^2}|D|^k\right) 
\leq
e^{-t\mu^2}\frac{|2t\mu|^{K+1}}{(K+1)!}\tr\left(e^{-tD^2}|D|^{K+1}\right).
\end{equation*}
We now realize that for $k\leq K$
$$
\tr\left(e^{-tD^2}|D|^k\right)
=
\sum_{r_{l+1} \leq r_0+\frac{K-k}{2}}\rho_{l,k}(t)+o(t^{r_0+\frac{K}{2}-k}), \quad t\to 0^+,
$$
and 
$$
\tr\left(e^{-tD^2}|D|^{K+1}\right)
=
o (t^{r_0-\frac{K+1}{2}}), \quad t\to 0^+,
$$
hence we obtain the asymptotic expansion
\begin{equation}\label{eq_ht_mu}
\tr(e^{-t(|D|-\mu)^2})
\sim 
e^{-t\mu^2}\sum_{k, l\geq 0}\frac{(2t\mu)^k}{k!}\rho_{l,k}(t),\quad t\to0^+.
\end{equation}
More precisely,
$$
\tr(e^{-t(|D|-\mu)^2})-e^{-t\mu^2}\sum_{\substack{0\leq k\leq K\\r_{l+1} \leq r_0+\frac{K-k}{2}}}\frac{(2t\mu)^k}{k!}\rho_{l,k}(t)
=
o_(t^{r_0+\frac{K}{2}}), \quad t \to 0^+.
$$

\bigskip

We should also compute the asymptotic expansion of $\tr(e^{-t(D^2-\mu)^2})$. We denote $\mathcal{Y}(s):=\Gamma(s)\zeta_{D^2}(2s)$.
According to \cite[Corollary 3.9]{Eckstein:2019dcb},
we obtain the  asymptotic expansion:
\begin{equation}\label{eq_ht_D4}
\tr(e^{-tD^4})
\sim
\sum_l\Tilde{\rho}_l(t),\quad \Tilde{\rho}_l(t)=\sum_{z\in X_l}\Tilde{a}_z t^{-z}, \quad t\to 0^+,
\end{equation}
with $\Tilde{a}_z=\Res(\mathcal{Y}(s), z)$.
Moreover, we have 
\begin{equation*}
 \tr(D^{2k}e^{-tD^4})
\sim
\sum_l\Tilde{\rho}_{l,k}(t),
\end{equation*}
 where 
\begin{align*}
\Tilde{\rho}_{l,k}(t)
=
&\sum_{\substack{z\in X_l\\(z+k)/2\notin \mathbb{Z}^-}}\frac{\Gamma((z+k)/2)}{2}\Res(\zeta_{D^2}(s), z)t^{-\frac{z+k}{2}}\\ 
&+
\sum_{\substack{z\in X_l\\(z+k)/2\in \mathbb{Z}^-}}\frac{(-1)^{(z+k)/2}}{(-(z+k)/2)!}\zeta_{D^2}(z)t^{-\frac{z+k}{2}}.
\end{align*}

Thus we get  the asymptotic expansion
\begin{equation}\label{eq_ht_D4_mu}
\tr(e^{-t(D^2-\mu)^2})
\sim 
e^{-t\mu^2}\sum_{k, l\geq 0}\frac{(2t\mu)^k}{k!}\Tilde{\rho}_{l,k}(t),\quad t\to0^+,
\end{equation}
so that 
$$
\tr(e^{-t(D^2-\mu)^2})-e^{-t\mu^2}\sum_{\substack{0\leq k\leq K\\r_{l+1} \leq r_0+K-k}}\frac{(2t\mu)^k}{k!}\Tilde{\rho}_{l,k}(t)
=
o (t^{\frac{r_0+K}{2}}), \quad t\to 0^+.
$$

\section{Spectral action basics} \label{Appendix_sa}

In this appendix we shall briefly recall  the spectral action principle, originally formulated by Chamseddine and Connes \cite{Chamseddine:1996zu}. Note that we assume a slightly weaker condition on the spectral triple compared to our assumptions in Appendix B in Eq. \eqref{eq_ht}. Also we use a slightly different notation in this Appendix, in conformity with the original formulation in  \cite{Chamseddine:1996zu}.
Assume $(\mathcal{A}, \mathcal{H}, D)$ is a finitely summable regular spectral triple with simple dimension spectrum, 
The spectral action is defined as 
\begin{equation*}
\textrm{Tr}(f(D/\Lambda)),
\end{equation*}
where $f(x)$ is a non-negative even smooth function which is rapidly decreasing at $\pm \infty$,
and $\Lambda$ is a positive number called mass scale, or cutoff.  Note that  $f(D/\Lambda)$  is  a trace-class operator.
We denote by $\chi(x)=f(\sqrt{x})$, and 
assume that $\chi(x)$ is given as a Laplace transform
\begin{equation*}
\chi(x) = \int_0^\infty e^{-sx}g(s)ds,
\end{equation*}
where $g(s)$ is rapidly decreasing near $0$ and $\infty$.
We also assume that there is a heat trace expansion
\begin{equation*}
\textrm{Tr}\left(e^{-tD^2}\right)
\sim
\sum_{\alpha}a_\alpha t^\alpha, \qquad t\to 0^+,
\end{equation*}
It was proved in \cite{Chamseddine:1996zu} that the spectral action has an asymptotic expansion  for $\Lambda\to \infty$, 
namely,
\begin{equation*}
\textrm{Tr}(\chi(D^2/\Lambda))
\sim
\sum a_\alpha \Lambda^{-\alpha} \int_0^\infty s^\alpha g(s)ds.
\end{equation*}
When $\alpha < 0$,
by the Mellin transform,
\begin{equation*}
s^\alpha
=
\frac{1}{\Gamma(-\alpha)}\int_0^\infty e^{-sx}x^{-\alpha-1}dx.
\end{equation*}
Thus the spectral action coefficient is 
\begin{equation*}
\int_0^\infty s^\alpha g(s)ds
=
\frac{1}{\Gamma(-\alpha)}\int_0^\infty \chi(x)x^{-\alpha-1}dx.
\end{equation*}
When $\alpha=0$,
\begin{equation*}
\int_0^\infty g(s)ds=\chi(0).
\end{equation*}
When $\alpha > 0$,
the spectral action coefficient $\int_0^\infty s^\alpha g(s)ds$ is of order $\Lambda^{-1}$.
Thus we get
\begin{equation*}
\textrm{Tr}(\chi(D^2/\Lambda))
\sim
\sum_{\alpha < 0} a_\alpha \Lambda^{-\alpha} \frac{1}{\Gamma(-\alpha)}\int_0^\infty \chi(x)x^{-\alpha-1}dx + a_0 \chi(0) + \mathcal{O}(\Lambda^{-1}), \quad \Lambda\to \infty.
\end{equation*}

And when $\alpha=n$ is a positive integer, 
since $(\partial_x)^{n}(e^{-sx})=(-1)^ns^ne^{-sx}$,
we have that
\begin{equation*}
\int_0^\infty s^n g(s)ds 
= 
(-1)^n\left(\int_0^\infty (\partial_x)^n(e^{-sx})g(s)ds\right)\Bigg |_{x=0}
=
(-1)^n\chi^{(n)}(0).
\end{equation*}

\black

\end{document}